\documentclass[11pt]{article}
\usepackage{amsmath,amssymb}
\usepackage{mathtools}
\usepackage{graphicx}
\usepackage[
  backend=biber,
  style=alphabetic,
  maxalphanames=99,
  backref=true
]{biblatex}
\usepackage{enumitem}

\usepackage{amsthm}
\usepackage{aliascnt}
\usepackage[hypertexnames=false]{hyperref}
\usepackage[nameinlink,noabbrev,capitalise]{cleveref}

\usepackage{braket}
\usepackage[margin=1in]{geometry}
\usepackage{amsmath,amssymb,amsfonts,setspace,url}
\usepackage{bm}
\usepackage{comment}
\usepackage{latexsym}
\usepackage{subfig}
\usepackage{float}
\usepackage{fancybox}
\usepackage{bigstrut,array,multirow}
\usepackage{here}
\usepackage{color}
\usepackage{mathpazo}
\usepackage{xcolor}
\usepackage{tikz}
\usetikzlibrary{arrows.meta}
\usepackage[disable]{todonotes}
\usepackage[T1]{fontenc}
\usepackage[bb=boondox]{mathalfa}

\newtheorem{theorem}{Theorem}[section]

\newaliascnt{corollary}{theorem}

\aliascntresetthe{corollary}

\newaliascnt{proposition}{theorem}

\aliascntresetthe{proposition}

\newaliascnt{definition}{theorem}
\newtheorem{definition}[definition]{Definition}
\aliascntresetthe{definition}

\newaliascnt{construction}{theorem}
\newtheorem{construction}[construction]{Construction}
\aliascntresetthe{construction}
\newaliascnt{assumption}{theorem}

\aliascntresetthe{assumption}

\newaliascnt{fact}{theorem}
\newtheorem{fact}[fact]{Fact}
\aliascntresetthe{fact}

\newaliascnt{conjecture}{theorem}

\aliascntresetthe{conjecture}

\newaliascnt{remark}{theorem}
\newtheorem{remark}[remark]{Remark}
\aliascntresetthe{remark}

\newaliascnt{claim}{theorem}

\aliascntresetthe{claim}

\newaliascnt{observation}{theorem}

\aliascntresetthe{observation}

\newaliascnt{lemma}{theorem}
\newtheorem{lemma}[lemma]{Lemma}
\aliascntresetthe{lemma}

\crefname{theorem}{Theorem}{Theorems}
\crefname{corollary}{Corollary}{Corollaries}
\crefname{proposition}{Proposition}{Propositions}
\crefname{definition}{Definition}{Definitions}
\crefname{property}{Property}{Properties}
\crefname{fact}{Fact}{Facts}
\crefname{conjecture}{Conjecture}{Conjectures}
\crefname{remark}{Remark}{Remarks}
\crefname{problem}{Problem}{Problems}
\crefname{claim}{Claim}{Claims}
\crefname{open}{Open Problem}{Open Problems}
\crefname{observation}{Observation}{Observations}
\crefname{lemma}{Lemma}{Lemmas}

\Crefname{theorem}{Theorem}{Theorems}
\Crefname{corollary}{Corollary}{Corollaries}
\Crefname{proposition}{Proposition}{Propositions}
\Crefname{definition}{Definition}{Definitions}
\Crefname{property}{Property}{Properties}
\Crefname{fact}{Fact}{Facts}
\Crefname{conjecture}{Conjecture}{Conjectures}
\Crefname{remark}{Remark}{Remarks}
\Crefname{problem}{Problem}{Problems}
\Crefname{claim}{Claim}{Claims}
\Crefname{open}{Open Problem}{Open Problems}
\Crefname{observation}{Observation}{Observations}
\Crefname{lemma}{Lemma}{Lemmas}

\crefname{construction}{Construction}{Constructions}
\Crefname{construction}{Construction}{Constructions}
\crefname{assumption}{Assumption}{Assumptions}
\Crefname{assumption}{Assumption}{Assumptions}

\newcommand{\proj}[1]{\ket{#1}\!\bra{#1}}
\newcommand{\norm}[1]{\left\lVert #1\right\rVert}

\newcommand{\E}{\mathbb E}
\newcommand{\negl}{\operatorname{negl}}
\newcommand{\poly}{\operatorname{poly}}
\newcommand{\Tr}{\operatorname{Tr}}

\newcommand{\ind}{\mathbf 1}
\newcommand{\Tpub}{T_{\mathrm{pub}}}

\newcommand{\Dig}{\mathsf{Dig}}
\newcommand{\enc}{\mathsf{enc}}

\newcommand{\Ran}{\operatorname{Ran}}

\newcommand{\QPT}{\mathsf{QPT}}

\newcommand{\TFGenOracles}{\mathsf{TF.GenOracles}}
\newcommand{\TFSetup}{\mathsf{TF.Setup}^{\mathcal{O}}}
\newcommand{\TFSpark}{\mathsf{TF.Spark}^{\mathcal{O}}}
\newcommand{\TFVer}{\mathsf{TF.Ver}^{\mathcal{O}}}
\newcommand{\TFClone}{\mathsf{TF.Clone}^{\mathcal{O}}}

\newcommand{\OSSGenOracles}{\mathsf{OSS.GenOracles}}
\newcommand{\OSSGenQKey}{\mathsf{OSS.GenQKey}}
\newcommand{\OSSVer}{\mathsf{OSS.Ver}}
\newcommand{\OSSSign}{\mathsf{OSS.Sign}}

\newcommand{\FireGenOracles}{\mathsf{Fire.GenOracles}}
\newcommand{\FireSetup}{\mathsf{Fire.Setup}^{\mathcal O}}
\newcommand{\FireSpark}{\mathsf{Fire.Spark}^{\mathcal O}}
\newcommand{\FireVer}{\mathsf{Fire.Ver}^{\mathcal O}}
\newcommand{\FireClone}{\mathsf{Fire.Clone}^{\mathcal O}}

\newcommand{\FireGen}{\mathsf{Fire.Gen}^{\mathcal G}}
\newcommand{\FireHash}{\mathsf{Fire.Hash}^{\mathcal O}}
\newcommand{\FireTag}{\mathsf{Fire.Tag}^{\mathcal O}}

\newcommand{\FireCheck}{\mathsf{Fire.Check}}

\newcommand{\aux}{\mathsf{aux}}

\newcommand{\ananta}[1]{}
\newcommand{\rohit}[1]{}
\newcommand{\supartha}[1]{}
\newcommand{\vir}[1]{}

\hypersetup{pdftitle={Quantum Fire with Delegated Cloning},pdfsubject={Technical manuscript}}
\title{Quantum Fire with Delegated Cloning}
\author{
Rohit Chatterjee \thanks{National University of Singapore.
Email: \texttt{rochat@nus.edu.sg}.}
\qquad
Ananta Mukherjee\thanks{Centre for Quantum Technologies, National University of Singapore.
Email: \texttt{a.mukherjee@u.nus.edu}.}
\qquad
Vir Pathak\thanks{Stony Brook University.
Email: \texttt{vpathak@cs.stonybrook.edu}.}
\qquad
Supartha Podder\thanks{Stony Brook University.
Email: \texttt{supartha@cs.stonybrook.edu}.}
}
\date{}
\begin{document}
\maketitle
\begin{abstract}
 Quantum fire is a recently introduced cryptographic primitive consisting of efficiently preparable quantum states, called \emph{flames}, that admit efficient cloning but resist efficient telegraphing, namely reconstruction via classical communication without preshared entanglement. In all prior constructions of quantum fire, cloning is a public operation that requires no separate key and every holder of a flame state can clone it. For applications to access control, however, an issuer may wish to delegate cloning to designated quantum servers while withholding this capability from other flame holders. To address this, we introduce \emph{delegatable quantum fire}, in which cloning requires a separate key.

We give two constructions in the classical-oracle model. Our first construction uses a classical secret key which enables cloning, and any user with the entire key may clone successfully. Our second construction, which we call \emph{torch-fire}, uses quantum cloning keys, called \emph{torches}, which enable cloning while remaining unaffected in the process but cannot otherwise be split or delegated to enable additional cloning. An efficient adversary given \(m\) torches cannot, except with negligible probability, enable more than \(m\) noncommunicating parties to each clone a fresh, independently issued challenge flame. The adversary may jointly process its resources before separating the parties and distribute arbitrarily entangled registers among them. Both constructions are in the oracle model, relying on public classical oracles that allow queries in quantum superposition.
\end{abstract}
\newpage
\tableofcontents
\newpage

\section{Introduction}\label{sec:intro}
 
The no-cloning principle~\cite{WZ82,Die82} asserts that an unknown quantum state cannot be copied in general. Since the early proposal of quantum money by Wiesner~\cite{Wie83}, this principle has been the basis for a large body of work on \emph{uncloneable cryptography}, which uses the impossibility of copying to obtain guarantees that have no classical analogue. Examples include quantum money and quantum lightning~\cite{AC12,Zha19}, copy-protection of programs~\cite{Aar09,CLLZ21,ALLZZ21}, secure software leasing~\cite{AL21, BJLPS21}, uncloneable encryption~\cite{BL20, AK2021}, quantum witnesses and advice~\cite{BKL23, CKP25, BGPS25} and one-shot signatures~\cite{AGKZ20,SZ25} (see also~\cite{AC23} for a discussion of the different no-cloning paradigms used in such constructions).
 
A closely related principle is that of \emph{no-telegraphing}~\cite{Wer01}: an unknown quantum state cannot be converted into classical information from which it can later be reconstructed. Information-theoretically, the two principles are equivalent. A classical description of a state can be copied freely, and so the ability to telegraph a state implies the ability to clone it. Conversely, given an unbounded number of copies of a state, one can learn a classical description of it via tomography. This second direction, however, is inherently inefficient, and it is natural to ask whether the two principles separate once we restrict ourselves to efficient computation.
 
\paragraph{Quantum fire.} Nehoran and Zhandry~\cite{NZ24} showed that this is indeed the case, relative to a quantum oracle. They introduced the notion of \emph{quantum fire}: a family of efficiently preparable states that can be efficiently cloned by anyone, but which cannot be efficiently telegraphed, namely transmitted using only classical communication (and no pre-shared entanglement). Bostanci, Nehoran, and Zhandry~\cite{BNZ25} subsequently gave a candidate construction based on group actions, and Çakan, Goyal, and Shmueli~\cite{CGS26} gave the first construction of public-key quantum fire relative to classical oracles, with unconditional security in that model. In the same work, they also introduced \emph{quantum key-fire}, where the flame state additionally serves as a functional key (such as a signing or decryption key) that can be cloned freely but cannot be exfiltrated into a classical string. Very recently, Casper, Nehoran, and Sattath~\cite{CNS26} gave the first construction of quantum fire in the standard model, from one-shot signatures together with indistinguishability obfuscation and LWE, and used it to obtain publicly certifiable min-entropy.
 
Suggested applications for quantum fire have included forms of access control that are enabled by quantum computers. Roughly speaking, a flame can be replicated freely across quantum devices, but cannot be leaked or transmitted over classical channels. This, however, provides an extremely total model of control. Every entity that holds a flame is also able to clone it, and the only line being drawn is between quantum and classical communication. In many settings one could reasonably desire a more nuanced form of control, where some designated parties are allowed to replicate flames, while others, even those with quantum computers and with flames in hand, are not. As an informal example, an issuing authority may want a few trusted servers to be able to replicate credentials as demand requires, without every holder of a credential gaining the same ability. This leads to the following question.
 
\begin{quote}
\textbf{Question 1:} Can quantum fire be augmented so that only parties holding an appropriate key can clone flames, while flames remain untelegraphable?
\end{quote}
 
We emphasize that for such a guarantee to be meaningful, uncloneability for parties without the key must be \emph{functional}: such a party should not be able to produce two states that each pass verification, irrespective of whether these states are copies of the original flame. In this sense, flames should behave like uncloneable states to those without the key, and like quantum fire to those with it.
 
\paragraph{The nature of the key.} The answer to Question 1 depends quite crucially on what kind of key is used to delegate cloning. The simplest choice is a classical secret key. However, a classical key can itself be copied and handed out arbitrarily, and so any party that is ever given cloning power can grant it to as many others as it likes. The issuing authority therefore has no control over the number of parties that are eventually able to clone. To retain such control, the key must itself be quantum, and must resist duplication. At the same time, a quantum key that is consumed upon use, as is the case with signing tokens~\cite{BS23} or the signing state of a one-shot signature~\cite{AGKZ20}, would only permit a bounded number of cloning operations. Ideally, we would like a key that can be used to clone repeatedly while remaining intact, but which still cannot be used to grant cloning power at more locations than there are keys. This raises a second question.
 
\begin{quote}
\textbf{Question 2:} Is there a reusable quantum cloning key, such that $m$ such keys cannot be used to enable cloning at more than $m$ separate locations?
\end{quote}
 
We note that formalizing Question 2 already requires some care. An adversary holding $m$ keys need not split them up in the obvious way. It may instead jointly process its keys and flames, and distribute arbitrarily entangled registers among a larger number of parties, in the hope that more than $m$ of them are subsequently able to clone. A meaningful definition must account for such strategies.
 
\subsection{Our Results}\label{sec:our-results}
 
In this work, we study the notion of quantum fire with \emph{delegatable cloning}, which augments quantum fire so that only authorized users holding the right kind of key are able to clone flames, while users without a key remain unable to clone even in the functional sense above. We give two constructions, both in the classical oracle model (i.e., relative to classical oracles that may be queried in superposition), achieving different flavors of this guarantee. Our first construction gives a positive answer to Question 1 using a classical secret key, and our second construction gives a positive answer to Question 2 using reusable quantum keys, which we call \emph{torches}.
 
\subsubsection{Delegation via a Classical Secret Key}\label{sec:intro-classical}
 
Our first construction considers the setting where a classical secret key is required to clone, and any user with the entire key may clone successfully. In this construction, flames are privately verifiable, and the same secret key is used both to verify and to clone flames. Flames can be cloned for a bounded number of generations, and each flame keeps track of how many further clones are possible.
 
\begin{theorem}[Informal]\label{thm:intro-classical}
Relative to classical oracles, there exists a privately verifiable quantum fire scheme with a classical secret cloning key, in which flames can be cloned up to $O(\log \lambda)$ generations by any user holding the key, are uncloneable for any efficient (quantum) adversary that does not hold the key, and are untelegraphable.
\end{theorem}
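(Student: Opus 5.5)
The plan is to build the scheme by layering a secret-key ``unlock'' mechanism on top of an existing classical-oracle quantum fire. For the fire component I will take the public-key quantum fire of \cite{CGS26}, whose flames are subspace-type states and whose cloning and verification are oracle-mediated. I will use it as a black box for untelegraphability. The cloning power of \cite{CGS26} is public, so it must be hidden. To do this, the oracles that a cloner needs will be encrypted (or keyed) under a classical secret key $k$ sampled by $\FireSetup$. A flame of generation budget $g$ is then a tuple of $g$ independent coset states $\ket{A_{s,s'}}$, one per remaining generation. Each coset is tagged by an oracle that is accessible only given $k$. Verification uses $k$ to obtain membership oracles for the primal and dual cosets, and accepts if every component passes.

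Cloning works as follows. Given $k$, the cloner can query a trapdoor oracle that reveals the current generation's coset description. With that description it re-prepares the coset, which is enough to produce two valid copies. The flame records $g$, and each clone has budget $g-1$. Keeping $g=O(\log\lambda)$ ensures that the total tree of clones the honest algorithm must support, together with the associated oracle structure, stays polynomial. The constraint can also come from the verifier's path-dependent checks, whose size grows like $2^g$, and that is the same place where $g=O(\log\lambda)$ is needed. Correctness then follows from the correctness of coset-state preparation and membership testing.

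For uncloneability without the key, I reduce to the monogamy-of-entanglement/direct-product hardness for coset states in the classical-oracle model (Shmueli–Zhandry style). Two states that each pass verification would in particular yield, on a shared coset, both a primal and a dual membership witness. That contradicts the coset-state hardness bound, because without $k$ the adversary's oracles are indistinguishable from ones independent of $s,s'$. I will establish this indistinguishability by a one-way-to-hiding argument over the key-dependent oracle entries.

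Untelegraphability is where the key matters least. A telegrapher should have no key, so I will reduce directly to the untelegraphability of \cite{CGS26} by simulating the extra key-dependent oracles with fresh randomness, using the same hybrid. I expect the main obstacle to be the functional uncloneability reduction. Verification is private, but the adversary never gets verification oracle access. Even so, the generation structure means an adversary could try to trade later-generation components against earlier ones. I plan to handle this with a union bound over the $O(\log\lambda)$ components, extracting the coset witnesses from the first component on which both outputs verify.
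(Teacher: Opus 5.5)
\textbf{There is a genuine gap.} It sits exactly at the property that makes this a \emph{fire} statement. In your scheme, a holder of $k$ clones by querying the trapdoor oracle for the classical description of the coset and re-preparing $\ket{A_{s,s'}}$ from it. That same party can instead send the description, together with the serial, over a classical channel, and the receiver re-prepares a flame that passes verification. So your flames are trivially telegraphable by precisely the parties who can clone them.

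You sidestep this by stipulating that ``a telegrapher should have no key''. Under that reading the theorem is vacuous. A Wiesner-type private scheme whose key is the classical description $(x,\theta)$ already gives key-holder cloning for unboundedly many generations, keyless uncloneability, and keyless untelegraphability, with no oracles at all. The intended requirement (cf.\ Question~1 and the sentence that flames should behave ``like quantum fire to those with'' the key) is that parties who can clone still cannot telegraph. Your construction fails this by design.

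Two further symptoms show the argument is not connected to the statement:
\begin{itemize}
\item The reduction to \cite{CGS26} untelegraphability has nothing to act on. Your flame is a tuple of fixed coset states with a description-revealing trapdoor, not a CGS flame $\ket{K}=2^{-r/2}\sum_y\ket{y}\ket{W_y}$ cloned by the CGS procedure.
\item The $O(\log\lambda)$ budget is neither enforced, since the trapdoor lets a key holder re-prepare every component, nor used anywhere in your proof. Your justification via the size of the honest or verifier tree does not correspond to anything in your scheme.
\end{itemize}

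\textbf{The missing idea} is a cloning mechanism that never hands the cloner a classical description of the flame. The paper withholds only the dual oracle $D$ of \cite{SZ25} and uses the signature-chain flames of \cite{CNS26}. A flame is a current OSS signing state plus a chain of signatures. Cloning uses $D$ to one-shot sign two \emph{freshly generated} keys $pk_0\Vert pk_1$, consuming the current signing state.

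Untelegraphability then follows from the \cite{CNS26} reduction to OSS security. In that reduction $D$ is available anyway, so it goes through even if the telegraphers hold the key. Collecting more than $2^{c\log\lambda}$ telegraphed flames forces, by pigeonhole, either two outputs with the same terminal key or a terminal key that is internal to another chain, and both give forgeries. Keeping this reduction efficient is where the $O(\log\lambda)$ depth bound comes from.

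Keyless uncloneability is also harder than in your setting. The adversary holds $\mathcal P,\mathcal P^{-1}$ and a flame whose history was produced with $D$. The paper handles the critical case, a new transition out of the current key, in four steps: it relaxes verification to a coset test, invokes delayed collapsing \cite{BGMS25} to measure the current signing state, simulates that measured state by a $D$-signature on the uniformly random message $pk_0^*\Vert pk_1^*$, and then shows every outcome yields a one-bit OSS forgery.

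If you want to keep your coset states, they would have to sit beside an unmodified, publicly cloneable CGS flame, as the anchors do in the paper's torch-fire construction. Untelegraphability for key holders is then inherited from $\Pi_a\otimes B\preceq I_A\otimes B$.
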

 
At a high level, our construction builds on one-shot signatures in the classical oracle model of Shmueli and Zhandry~\cite{SZ25}, and on the quantum fire construction of Casper, Nehoran, and Sattath~\cite{CNS26}, where a flame is a one-shot signing state accompanied by a chain of signatures certifying its provenance. Signing in the scheme of~\cite{SZ25} requires access to a classical dual oracle, which is not needed for key generation or verification, and we use this oracle as the secret cloning key. Our analysis relies on the delayed collapsing property of this scheme, recently established by Bartusek et al.~\cite{BGMS25}. We give an overview of the construction and its analysis in Section~\ref{sec:overview-classical}.
 
\subsubsection{Delegation via an Uncloneable Quantum Key: Torch-Fire}\label{sec:intro-torch}
 
As discussed above, a classical cloning key offers no control over how many parties can eventually clone. Our second construction, which we call \emph{Torch-Fire}, addresses this with quantum keys. A torch-fire scheme separates the publicly verifiable quantum states, which we call \emph{flames}, from reusable quantum cloning credentials, which we call \emph{torches}. The issuing authority prepares some number $R$ of torches during setup, each associated with a public identifier, and is the only party that can create fresh flames. Anyone can verify a flame using only public information. A party holding a torch can clone any flame, and receives its torch back after each cloning operation, in a state that allows it to continue cloning. More precisely, our correctness guarantee requires that a single torch can be used to clone a flame polynomially many times in succession, and that all the resulting flames pass verification (with overwhelming probability).
 
We ask for two security properties of a torch-fire scheme. The first is \emph{untelegraphability}, which we require to hold even against parties that hold torches. That is, two parties that do not share any entanglement (but may share classical randomness), each possibly holding its own set of torches, cannot use classical communication to transmit a flame from one to the other. The second property, which is our main focus, is \emph{global non-proliferation}. Here, an adversary receives a set of $m$ torches of its choice, together with polynomially many flames, and may process all of these jointly. It then distributes arbitrarily entangled registers among $N$ parties, after which the parties may no longer communicate. Each party now receives a fresh challenge flame, issued independently, and attempts to output two flames that both verify under the serial number of its challenge. Global non-proliferation requires that, except with negligible probability, at most $m$ of the parties succeed.
 
\begin{theorem}[Informal]\label{thm:intro-torch}
Relative to classical oracles, there exists a torch-fire scheme satisfying correctness, untelegraphability (even against parties holding torches), and global non-proliferation. That is, for any polynomial $N$ and any $m \leq R$, no efficient adversary given $m$ torches can enable more than $m$ non-communicating parties to each clone a fresh challenge flame, except with negligible probability.
\end{theorem}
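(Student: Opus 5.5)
We establish the statement by giving an explicit construction in the classical-oracle model and then proving correctness, untelegraphability and global non-proliferation separately.

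\textbf{Construction.} A flame will be a one-shot-signature signing state in the style of~\cite{SZ25}, carrying a serial number (its public key) together with a certificate chain, in the spirit of~\cite{CNS26}. Signing, and hence cloning, requires a dual oracle. We hide this oracle behind a second layer: each torch is itself an uncloneable quantum state, for instance a subspace or coset state with a public identifier, and a membership oracle releases the dual-oracle answers only on inputs accompanied by a valid torch. Cloning then proceeds in three steps. The torch holder coherently queries the gated oracle, using the torch as a control, to prepare a fresh signing state. It signs the new serial number under the old one to extend the provenance chain. Finally it uncomputes, which returns the torch. Correctness, including polynomially many successive clones, follows from gentle measurement: every torch check succeeds with overwhelming probability, so the torch is disturbed only negligibly, and the accumulated errors sum to a negligible quantity.

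\textbf{Untelegraphability.} We reduce to the classical-key scheme of \cref{thm:intro-classical}, or directly to the~\cite{CNS26} argument via delayed collapsing~\cite{BGMS25}. The key observation is that torches only grant oracle access that the reduction can simulate using the classical dual oracle. A telegraphing pair holding torches therefore yields a telegraphing pair against a scheme with public cloning. The untelegraphability argument of~\cite{CGS26,CNS26} for public-clone fire already handles this case.

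\textbf{Global non-proliferation.} This is the main obstacle. Suppose more than $m$ of the $N$ non-communicating parties succeed in cloning a fresh challenge flame. Each successful party must query the gated oracle with an input that passes a torch check; otherwise it has produced two verifying flames from a flame with no signing capability, which contradicts one-shot security / functional uncloneability. Against $N$ parties these queries happen in parallel on disjoint but entangled registers. We would use a compressed-oracle or measure-and-reprogram extraction at each party to extract a valid torch witness, i.e.\ a subspace vector or dual vector. This produces $m+1$ distinct valid torch witnesses from $m$ torch states, which contradicts a \emph{monogamy} or $m\to m+1$ unclonability bound for coset states relative to membership oracles. Such a bound can be proved through a direct-product/threshold argument on the oracle, as in the $k\to k+1$ security of subspace quantum money.

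The delicate points are the following:
\begin{itemize}
\item \emph{Extraction must not disturb the other parties.} We plan to handle this by extracting at a random query of each party and paying a $\poly$ loss per party, which is acceptable because $N$ is polynomial.
\item \emph{Extracted witnesses must be distinct.} The adversary could reuse one torch for several parties. We intend to rule this out by binding each flame clone to the torch identifier inside the signed chain, together with the fact that the challenge flames are issued independently.
\end{itemize}
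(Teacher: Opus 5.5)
\textbf{The gate does not bind the torch credential to the flame being cloned.} A classical oracle can only check a classical string. So ``inputs accompanied by a valid torch'' means inputs carrying a vector of the torch's (co)set or of its dual, and such a vector says nothing about which flame is being cloned. An adversary with one torch can measure it once, obtain such a vector $v$, and give copies of $v$ to all $N$ parties before the split. Each party then has gated dual-oracle access for its own challenge flame and clones it, so your scheme already fails for $m=1$, $N=2$.

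For the same reason, your $m+1$ extracted witnesses need not be distinct: every location can yield the same $v$. Binding clones to the torch identifier does not help, since all parties may use the same identifier.

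The missing idea is that the gate must demand a torch \emph{one-shot signature on a message depending on the flame being cloned}. In the paper, $\Gamma$ answers an affine query for anchor component $a_c$ only given a valid torch signature on $M_T(s,t)=1\Vert\langle s,t\rangle$, where $\langle s,t\rangle=\ell\Vert a\Vert t$. It also requires an anchor signature and a CGS tag. Here $\ell$ is a fresh uniform label revealed only after the split. Then $m+1$ successful locations force torch signatures on $m+1$ pairwise distinct messages. Either two lie under the same issued key, or one lies under an unissued key whose signing state the reduction kept and can use to sign $0^{1+\lambda+2Lr}$. Both cases contradict \Cref{lem:oss}.

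\textbf{Two further mismatches with the required properties.} First, gating $\mathcal D$ makes verification non-public. The projective test of \Cref{lem:token-test} needs $\mathcal D$, and a mere coset test would accept copyable classical strings. Second, the signature-chain untelegraphability argument you invoke (\cite{CNS26}, \Cref{thm:intro-classical}) only handles $O(\log\lambda)$ generations. Cloning correctness, however, demands any polynomial number $q$ of successive clones, and the chain depth grows linearly in $q$. The paper avoids both problems in three ways: it keeps $\mathcal D$ public, it gates instead the affine maps that prepare $\ket{W_{a_c}}$ for a fixed key, and it inherits untelegraphability from a CGS flame.

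\textbf{The extraction plan also fails.} Measuring a random query at each location gives at best the local bound $A_i^\dagger A_i\preceq 2C_i^\dagger C_i+8q^2E_i$. Turning this into simultaneous extraction at $m+1$ entangled locations compounds the factor $8q^2$ over those locations. Guessing which $m+1$ of the $N$ locations succeed adds a further $\binom{N}{m+1}$ factor. The result is superpolynomial once $m$ grows with $\lambda$, which the theorem allows. The loss is multiplicative over simultaneous successes, so ``$N$ is polynomial'' does not rescue it.

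\Cref{lem:compact-extraction} exists precisely to avoid this. It filters with $L_i=(I-E_i)^\beta$ and amplifies witness finding by Marriott--Watrous alternation, so that $W_i\succeq(I-L_i)^2-\delta I$. This yields a purely additive local comparison, $A_i^\dagger A_i-W_i\preceq\delta(2+\delta)I+(1+\delta^{-1})L_iA_i^\dagger A_iL_i$. That comparison survives entanglement through the identity $F_{i-1}-F_i=(A_i^\dagger A_i-W_i)\otimes G_i$, where $0\preceq G_i\preceq I$ is the effect of exactly $d-1$ successes elsewhere.

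The leftover filtered term is then bounded by an anchor-OSS reduction. That reduction needs per-label sources $\mathcal O_{A,\ell}$, so one label's access can be disabled without affecting other flames. It also needs \Cref{lem:fresh-label}: disabling that access already before the split costs only $2Q_{\rm pre}2^{-\lambda/2}$. Your assertion that a party without gated access ``contradicts one-shot security'' must be proved for this modified, filtered computation on an input prepared with full oracle access, and the proposal does not address this.
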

 
Setting $m = 0$, global non-proliferation in particular implies that flames are uncloneable (in the functional sense) for parties that do not hold any torch. Further, it implies that torches cannot be cloned or split in any functional sense either: with $m$ torches, cloning remains possible from at most $m$ locations, regardless of how the torches are processed or entangled across these locations. We note that this guarantee is quite different from the usual uncloneability of the torch states themselves. Indeed, it is not about whether an adversary can produce copies of a torch, but whether it can distribute the \emph{capability} to clone among more parties than it was given torches.
 
We also remark that Theorem~\ref{thm:intro-torch} holds unconditionally relative to classical oracles. Following~\cite{CGS26}, we work in a model where each instance of the scheme is associated with its own sampled collection of classical oracles. For our case also we can lift it to the standard oracle model where there is one global oracle and the setup procedure samples public key, secret key to specify the corresponding oracles.
 
\paragraph{Our construction.} A torch in our scheme is simply the signing state of a (long-message) one-shot signature scheme~\cite{SZ25}. A flame consists of a flame of the quantum fire scheme of~\cite{CGS26}, together with an independently generated one-shot signing state that we call its \emph{anchor}, which is what makes flames uncloneable to parties without a torch. Cloning the anchor requires access to an oracle that answers only when queried with valid signatures from both the anchor and a torch, all computed coherently and then uncomputed, so that the torch is left essentially undisturbed and can be reused. Untelegraphability is inherited from the scheme of~\cite{CGS26}. We give an overview of the construction in Section~\ref{sec:overview-torch}.
 
\paragraph{Extraction from separated computations.} The main technical challenge lies in proving global non-proliferation. Roughly, a party that clones its challenge flame must have obtained answers from the anchor-cloning oracle for that flame, and so must have produced a valid torch signature on a message involving that flame. Extracting such signatures from $m+1$ successful parties, using only $m$ torches, would then contradict the security of the one-shot signature scheme. The difficulty is that the parties share an arbitrarily entangled state, so the events of successful cloning at different locations, as well as the outcomes of any local extraction procedures, may be arbitrarily correlated. We need that whenever at least $m+1$ parties succeed, local procedures that do not communicate with each other \emph{simultaneously} extract at at least $m+1$ locations.
 
To this end, we prove a general extraction lemma for $N$ separated oracle computations run on an arbitrary (possibly entangled) joint input. Consider modifying the oracle on some set of inputs, which we call witnesses. Our lemma states, roughly, that the probability that at least $d$ of the original computations accept is at most the probability that at least $d$ efficient local extractors each output a witness, plus an error term. This error term is governed by the probability that the \emph{modified} computations accept after their inputs have been passed through a suitable filter, along with quantities that can be made arbitrarily (inverse-polynomially) small through the choice of parameters. We believe that this lemma may be of independent interest, and anticipate that it will find more applications in settings involving several non-communicating parties that share entanglement. We describe the lemma and its application in Section~\ref{sec:overview-nonprolif}.
 
\paragraph{Open questions.} Our work leaves several natural questions open. Both of our constructions are in the classical oracle model, and it would be interesting to instantiate them in the standard model, for instance building on the approach of~\cite{CNS26}. It would also be interesting to achieve a publicly verifiable variant of our first construction, or to support an unbounded number of cloning generations with a classical key. For torch-fire, our untelegraphability guarantee is against parties that do not share entanglement and communicate classically; a stronger notion would allow arbitrary LOCC protocols between the parties. Finally, a very pertinent question is to study whether one can realize quantum {\em key-fire} with delegatable cloning properties. This seems to require nontrivial additions to our approach. 
 
\paragraph{Organization.} We give the overviews behind our technical results in \Cref{sec:overview}. \Cref{sec:prelims} describes several preliminaries of quantum information and computation used in this paper. In \Cref{sec:sources}, we describe the two foundational cryptographic building blocks: One-shot Signature due to \cite{SZ25} and Quantum Fire due to \cite{CGS26} and their underlying machinery used in our approach. In, \Cref{sec:classical-secret-key-Q-Fire}, we give the construction of the delegatable quantum fire in the classical secret key realm. In \Cref{sec:definition}, we formally define the torch-fire primitive ans especially its global non-proliferation property. In \Cref{sec:torch-fire}, we give a construction of the torch-fire object and prove its correctness and untelegraphability properties. In \Cref{sec:extraction}, we develop the extraction lemma and using it prove the global non proliferation property of the torch-fire object.

\subsection{Related Work}\label{sec:related}
 
\paragraph{Quantum fire and no-telegraphing.} The notion of quantum fire was introduced by Nehoran and Zhandry~\cite{NZ24}, who used it to separate no-cloning from no-telegraphing relative to a quantum oracle, and also separated $\mathsf{cloneableQMA}$ from $\mathsf{QCMA}$ in the same setting. Bostanci, Nehoran, and Zhandry~\cite{BNZ25} gave a general duality theorem for representations of groups, with applications to quantum money, lightning and fire, along with a candidate construction of quantum fire from group actions. The work of Çakan, Goyal, and Shmueli~\cite{CGS26} gave the first construction of public-key quantum fire relative to classical oracles, along with quantum key-fire, and our torch-fire construction uses their scheme as a building block. We note that key-fire is quite different from what we consider here: there, the flame itself is a functional key that anyone may clone, whereas in our setting, cloning is the capability being restricted. Casper, Nehoran, and Sattath~\cite{CNS26} gave the first standard-model construction of quantum fire, assuming one-shot signatures along with indistinguishability obfuscation and LWE, and our first construction follows their template. In a related direction, untelegraphable encryption, where an adversary is restricted to holding classical information about a ciphertext, was introduced by Champion, Kitagawa, Nishimaki, and Yamakawa~\cite{CKNY25}, and further studied by Broadbent, Culf, and Rochette~\cite{BCR25}. Security against classical leakage of quantum states has also been studied by Çakan, Goyal, Liu-Zhang, and Ribeiro~\cite{CGLR24}.
 
\paragraph{One-shot signatures.} One-shot signatures were introduced by Amos, Georgiou, Kiayias, and Zhandry~\cite{AGKZ20}. Shmueli and Zhandry~\cite{SZ25} gave a construction relative to classical oracles with unconditional security, which forms the basis of both of our constructions. A simpler and more efficient construction was subsequently given by Huang and Vaikuntanathan~\cite{HV25}, with further improvements by Huang, Shmueli, Vaikuntanathan, and Zhandry~\cite{HSVZ26}. Bartusek, Gupte, Mutreja, and Shmueli~\cite{BGMS25} showed that the scheme of~\cite{SZ25} satisfies a delayed collapsing property, which is central to the analysis of our first construction. In a sense, our torches can be seen as one-shot signing states that are used repeatedly, but only in superposition, and are never measured.
 
\paragraph{Delegating and revoking quantum capabilities.} The idea of granting a limited capability through a quantum state also appears in quantum tokens for digital signatures~\cite{BS23}, where a token permits a single signature, and in secure software leasing~\cite{AL21} and secure key leasing~\cite{KN22,AKNYY23,APV23}, where a program or key can be lent out and later verifiably returned. In these settings, the capability is either consumed upon use or meant to be revoked. In contrast, our torches can be used an unbounded polynomial number of times, and the guarantee we seek is that the capability they confer cannot be spread to more locations than there are torches.

\paragraph{Extraction against entangled parties.} Arguing about several non-communicating parties that share entanglement is a recurring challenge in uncloneable cryptography, for instance in the analysis of monogamy-of-entanglement games~\cite{TFKW13}. In the oracle setting, Broadbent and Lord \cite{BL20} proved a simultaneous version of the one-way-to-hiding lemma~\cite{Unr15,AHU19} for two entangled parties, whose loss was shown by Majenz, Schaffner, and Tahmasbi~\cite{MST21} to be inherent to some extent, and two-party simultaneous extraction from oracle queries was further developed in~\cite{CMP24,AKLLZ22}. Simultaneous reductions for two parties also appear in~\cite{AKY25,CLX26}, and collusion-resistant copy-protection and multi-copy uncloneable primitives~\cite{LLQZ22,CG24} handle $k+1$ entangled parties that must all succeed, using projective implementations~\cite{Zha20}. Simultaneous extraction from several extractors acting on a common state also arises in the context of post-quantum non-malleability~\cite{CLPY25}. Our extraction lemma differs from these in handling threshold events over any polynomial number of separated locations, and in extracting oracle queries using only local, efficient procedures. Its proof uses the alternating projector technique of Marriott and Watrous~\cite{MW05}, which has been used extensively for extraction from quantum adversaries~\cite{Zha20,CMSZ21,LMS22}, together with a hybrid argument along the lines of~\cite{BBBV97}. We give a more detailed comparison in Section~\ref{sec:overview-nonprolif}.

\paragraph{Statements on AI usage:}Generative AI assistants (Claude Fable 5.1 and ChatGPT Astra 6) were used extensively for writing and editorial changes. Additionally, ChatGPT 6 Astra Pro helped suggest the final torch fire construction (based on heuristics the authors already had) as well as an initial proof sketch, which was significantly cleaned up and redesigned. The problem motivation, framing and the broad approach are solely the authors' work. LLMs were also used for literature search. The authors verified the technical and editorials contents generated by AI tools and take full responsibility for them.

\section{Technical Overview}\label{sec:overview}

This section provides an overview of our techniques. We first recall the one-shot signature scheme of~\cite{SZ25}, which underlies both of our constructions, in Section~\ref{sec:overview-oss}. We then discuss our construction with a classical secret key in Section~\ref{sec:overview-classical}. After that, we describe our torch-fire construction in Section~\ref{sec:overview-torch}, and the proof of global non-proliferation, including our extraction lemma, in Section~\ref{sec:overview-nonprolif}.

\subsection{One-Shot Signatures in the Classical Oracle Model}\label{sec:overview-oss}

We briefly recall the one-bit one-shot signature scheme of Shmueli and Zhandry~\cite{SZ25}, ignoring the specific parameters. The scheme is defined relative to a uniformly random permutation $\pi$ on $n$-bit strings, whose output is split as $\pi(x) = (H(x), J(x))$, together with a uniformly random full-rank matrix $A_y$ and a shift $b_y$ for every $y$ in the range of $H$. These define, for every $y$, a coset $W_y = b_y + \mathrm{colspan}(A_y)$. The public oracles are as follows: a forward oracle $P(x) = (H(x), A_{H(x)} J(x) + b_{H(x)})$, its inverse $P^{-1}$, which in particular reveals whether a string $u$ lies in $W_y$, and a dual oracle $D(y, v)$, which tests whether $v$ lies in the dual of the subspace $\mathrm{colspan}(A_y)$. None of the matrices, shifts, or the permutation are disclosed beyond these oracles.

Applying $P$ to a uniform superposition of inputs and uncomputing the input using $P^{-1}$ gives a superposition $\sum_y \ket{y}\ket{W_y}$, where $\ket{W_y}$ is the uniform superposition over $W_y$. Measuring the first register then gives a uniformly verification key $y$, and the signing state $\ket{W_y}$. A signature on a bit $b$ is any element of $W_y$ whose first bit is $b$, which can be checked using $P^{-1}$. To sign, one measures the first qubit of $\ket{W_y}$, and if the `wrong' bit is measured, uses the dual oracle $D$ (in the Hadamard basis) to restore the state to a uniform superposition over $W_y$ before trying again. Multiple bits are signed using parallel instances \footnote{ For efficiency, long messages are first hashed using a random oracle. More efficient multi-bit schemes are known~\cite{HV25,HSVZ26}; we do not rely on them here.} Briefly, security says that no efficient adversary can produce valid signatures on two distinct messages (or, in the strong version, two distinct message-signature pairs) under the same verification key, even with access to all three oracles.

Two features of this scheme are central to our constructions. First, the dual oracle $D$ is needed for signing, but not for generating keys or verifying signatures. Second, the projector onto the signing state $\ket{W_y}$ itself can be measured publicly: one tests membership in $W_y$ using $P^{-1}$, then applies Hadamards and tests membership in the dual subspace using $D$. Two registers that both pass this test can be measured to obtain two independent uniformly random elements of $W_y$, which are distinct with overwhelming probability, and so give a strong one-shot signature forgery. This is the precise sense in which signing states are uncloneable.

\subsection{Delegation via a Classical Secret Key}\label{sec:overview-classical}

\paragraph{Withholding the dual oracle.} Our first construction relies on the first observation above, namely that if the dual oracle $D$ of~\cite{SZ25} is withheld from the public, then a party that holds a signing state $\ket{W_y}$ but only has access to $P$ and $P^{-1}$ can still verify signatures and check coset membership, but cannot sign arbitrary messages. In other words, signing becomes an operation that requires a specific authorization, namely (access to) $D$. We would like to use this property to make cloning a keyed operation for our purposes. However, a signing state itself cannot simply be copied, even with $D$, and so we look towards a notion of flames for which ``cloning'' means producing two new states that each pass verification, rather than two copies of a pristine signing state.

\paragraph{Flames as signature chains.} A quantum fire construction along these lines was recently obtained by Casper, Nehoran, and Sattath~\cite{CNS26}. Following their approach, we let a flame state consist of a current one-shot key pair together with a chain of signatures certifying its provenance. During spark, the authority samples the oracles $(P, P^{-1}, D)$ and a root key pair $(\mathsf{pk}_{\mathrm{root}}, \ket{\mathsf{sk}})$, and sets the serial number to consist of $\mathsf{pk}_{\mathrm{root}}$ (along with the public oracles). The initial flame is $(\emptyset, \mathsf{pk}_{\mathrm{root}}, \ket{\mathsf{sk}})$, and the secret key is $D$. To clone a flame $(\mathsf{hist}, \mathsf{pk}_{\mathrm{cur}}, \ket{\mathsf{sk}_{\mathrm{cur}}})$, one generates two fresh key pairs $(\mathsf{pk}_0, \ket{\mathsf{sk}_0})$ and $(\mathsf{pk}_1, \ket{\mathsf{sk}_1})$ using only the public oracles, and then uses $D$ and the current signing state to sign the message $\mathsf{pk}_0 \| \mathsf{pk}_1$, obtaining a signature $\sigma$. The two children are
\[
\big(\mathsf{hist} \| (\mathsf{pk}_0, \mathsf{pk}_1, \sigma),\ \mathsf{pk}_0,\ \ket{\mathsf{sk}_0}\big) \quad \text{and} \quad \big(\mathsf{hist} \| (\mathsf{pk}_0, \mathsf{pk}_1, \sigma),\ \mathsf{pk}_1,\ \ket{\mathsf{sk}_1}\big).
\]
The original signing state is consumed in this process, which is consistent with the requirement that the parent flame produces exactly two clones - in this case it is replaced by its children. To verify a flame, one checks that every entry in its history is a valid signature on the next pair of keys, under one of the two keys of the previous entry (starting from $\mathsf{pk}_{\mathrm{root}}$), that the current key is one of the two keys in the last entry, and that the current signing state passes the projective test for $\ket{W_{\mathsf{pk}_{\mathrm{cur}}}}$ described above. Since this last test uses $D$, verification is also private. Verification additionally rejects histories longer than a fixed bound $c(\lambda)$, and reports the number of remaining clones $c(\lambda) - T$ for a flame with history of length $T$. Correctness of verification and cloning follows from the correctness of the one-shot signature scheme. As in \cite{CNS26}, we also require the depth of the flame state to be bounded by $c(\lambda) = O(\log \lambda)$ in order to obtain an efficient reduction from OSS security to untelegraphability. The reason for this is because the reduction must compute all possible continuations of the flame state.

\paragraph{Uncloneability without the key.} Consider an adversary that is given a flame (possibly after some honest cloning operations) and access only to $P$ and $P^{-1}$, and outputs two registers that both pass verification. We first observe that it suffices to bound the adversary's success when the projective test on the current signing state is replaced by a weaker \emph{coset test}, which simply measures the register and checks membership in $W_{\mathsf{pk}_{\mathrm{cur}}}$ using $P^{-1}$. Any state that passes the original test passes the coset test with at least the same probability, so this replacement can only help the adversary. The advantage is that the entire remaining experiment, namely the adversary and the modified verification, now only uses $P$ and $P^{-1}$.

This allows us to use the \emph{delayed collapsing} property of the scheme of~\cite{SZ25}, recently established by Bartusek, Gupte, Mutreja, and Shmueli~\cite{BGMS25}. Roughly, this says that a register that (when measured) yields a valid signature cannot be distinguished from its measured version by any efficient procedure that only has access to $P$ and $P^{-1}$, even if the register itself was prepared using all three oracles. This is precisely our setting: the adversary's flame may be the result of honest cloning operations performed using $D$, but everything after that point avoids $D$. We may therefore move, through a sequence of hybrids, to an experiment where each single-bit signing state held by the adversary has been measured in the computational basis. We note that ordinary collapsing would not suffice here, since the honest preparation of the flame requires $D$.

In the fully collapsed experiment, we can reduce to the security of the one-shot signature scheme. The key observation is that a measured signing state, i.e., a uniformly random element of $W_y$, is distributed identically to a signature on a uniformly random bit. Since the concatenation of two freshly generated verification keys is itself uniformly random, for each (measured) signing state given to the adversary, the reduction can sample the next generation of keys $(\mathsf{pk}_0, \mathsf{pk}_1)$ in advance, retaining their signing states, sign $\mathsf{pk}_0 \| \mathsf{pk}_1$ using the signing state (and the dual oracle, which it has in the one-shot signature security game), and give the resulting signature to the adversary in place of the measured signing state. The adversary's view is identical to that in the collapsed experiment. Now suppose the adversary outputs an accepted flame that extends the chain by a new entry $(\widehat{\mathsf{pk}}_0, \widehat{\mathsf{pk}}_1, \hat\sigma)$. If $\widehat{\mathsf{pk}}_0 \| \widehat{\mathsf{pk}}_1 \neq \mathsf{pk}_0 \| \mathsf{pk}_1$, then the two messages differ in some bit, and the corresponding components of $\hat\sigma$ and the reduction's signature give valid signatures on two distinct bits under the same single-bit key. Otherwise, the adversary's flame contains a state that passes the coset test for one of the keys $\mathsf{pk}_a$ that the reduction generated, and whose signing state the reduction still holds. Measuring a component of the adversary's state gives a signature on some bit $b$, and the reduction can use its own signing state to sign $1-b$. In both cases, this violates the security of the one-shot signature scheme.

Note that in the case where the adversary outputs a clone where the new $pk_0 = pk_1 = y$, the reduction is significantly simpler. In this case, the corresponding state in the signing key register of both clones is the OSS signing key corresponding to $y$. Therefore, the adversary can use the two copies of the keys to sign different bits and violate OSS security. 

\paragraph{Untelegraphability.} The argument for untelegraphability follows the reduction to OSS security presented in \cite{CNS26} almost verbatim. We briefly outline it here. Suppose $\QPT$ adversaries $\mathcal{A}A$ and $\mathcal{B}B$ share a classical channel and $A$ is able to telegraph a flame state to $B$. Recall that the depth of the flame state is bounded by $c\log(\lambda)$. Therefore, the reduction makes $\mathcal{A}A$ telegraph the flame state to $\mathcal{B}$ more than $2^{c\log(\lambda)}$ many times. By the pigeonhole principle, we must have that either $2$ of the clones must terminate with the same public key, in which case the signing keys at the end of the chain in both clones can be used to violate OSS security. The other case is when a terminal public key in one clone coincides with a public key in the internal history of the other clone. Then we can again use the terminal signing key in the first clone to obtain an OSS forgery.

\subsection{Torch-Fire}\label{sec:overview-torch}

We now turn to torch-fire. Recall that we would like flames to be publicly verifiable, as well as cloneable by any party holding one of the $R$ torches issued at setup but uncloneable (functionally) for everyone else, and untelegraphable even for parties holding torches. Moreover, we would like global non-proliferation: $m$ torches should enable cloning at no more than $m$ separate locations.

\paragraph{Starting point: CGS fire.} Our starting point is the public-key quantum fire scheme of Çakan, Goyal, and Shmueli~\cite{CGS26}, which we refer to as CGS fire. A CGS flame is the state $\ket{K} = \sum_y \ket{y}\ket{W_y}$, i.e., the superposition over all verification keys and signing states of a one-shot signature instance, before measuring the key. The authority holds private access to a generation oracle $G$ that bundles the forward, inverse and dual oracles of this instance. Public access to $G$ is provided through an oracle that answers a query $z$ only when it is accompanied by a valid \emph{tag}, consisting of a key $y$ and two hash values of $(y, z)$. These hash values are in turn only released in exchange for valid one-shot signatures on $0$ and on $1$ respectively under $y$. A party holding $\ket{K}$ can compute such a tag coherently, controlled on the key register, by signing both bits, querying the hash oracles, and uncomputing the signatures. Since honest signatures are accepted with overwhelming probability, this disturbs $\ket{K}$ only negligibly. Thus a holder of a CGS flame can query $G$, and in particular can run the preparation circuit for $\ket{K}$ on fresh registers, which gives public cloning. \cite{CGS26} also show that the resulting scheme is untelegraphable.

\paragraph{Anchors.} To restrict cloning, one could try to modify CGS fire directly, for instance by requiring additional credentials in its public oracles. We instead leave the CGS fire state intact (so that we can use its untelegraphability as is) and attach to each flame a second component whose uncloneability is easy to reason about. We call this the \emph{anchor}: an independently generated signing state $\ket{A_a} = \bigotimes_{c} \ket{W_{a_c}}$ of an ($L$-bit) one-shot signature scheme, with verification key $a = (a_1, \ldots, a_L)$. The flame state is thus $\ket{A_a} \otimes \ket{K}$, and it is verified by applying the projective test for $\ket{A_a}$ from Section~\ref{sec:overview-oss} to the anchor (componentwise) and running CGS verification on the second part. Since the accepting projector for the anchor is the rank-one projector $\ket{A_a}\!\bra{A_a}$, two registers that both pass anchor verification yield a strong one-shot signature forgery, exactly as described in Section~\ref{sec:overview-oss}.

The anchors of different flames do not all use the same one-shot signature oracles. Instead, each flame has a uniformly random \emph{label} $\ell \in \{0,1\}^\lambda$, and its anchor is generated relative to an independent oracle collection $\mathcal{O}_{A,\ell}$ associated with this label. The serial number of the flame is $s = (\ell, a)$. This choice plays an important role in the proof of global non-proliferation, where we will need to disable access related to a single flame without affecting any other flame.

\paragraph{Torches and cloning anchors.} A torch is also a signing state $\ket{T_t}$ of a multi-bit one-shot signature scheme (relative to a single oracle collection $\mathcal{O}_T$ shared by all torches), with the verification keys $t_1, \ldots, t_R$ of all issued torches published through a public oracle. Since we need torches for specific published keys, the authority receives the matrices and shifts defining the corresponding cosets, which allows it to prepare $\ket{T_{t_j}}$ directly.

To clone a flame with serial $s = (\ell, a)$, one clones its CGS part using the public cloning procedure of~\cite{CGS26}, and must additionally prepare a fresh copy of $\ket{W_{a_c}}$ for every component $c$ of the anchor. Unlike in the key generation procedure of~\cite{SZ25}, the key $a_c$ here is fixed in advance, and so this cannot be done using the public oracles of $\mathcal{O}_{A,\ell}$. Instead, we observe that $\ket{W_y}$ can be prepared from the two affine maps
\[
\mathsf{affine}_{y,0}(w) = A_y w + b_y, \qquad \mathsf{affine}_{y,1}(u) = \begin{cases} w & \text{if } u = A_y w + b_y, \\ \bot & \text{if } u \notin W_y, \end{cases}
\]
by preparing a uniform superposition over $w$, computing $A_y w + b_y$ into a second register, and uncomputing $w$ using the inverse map. These maps are never made available directly. Access to them is provided only through a dedicated public oracle $\Gamma$, which takes as input a query $q = (s, t, c, e, u)$ asking for $\mathsf{affine}_{a_c, e}(u)$ in component $c$ of the anchor source $\mathcal{O}_{A,\ell}$, and answers only if all of the following hold:
\begin{itemize}
\item $t$ is the published key of an issued torch;
\item the query carries a valid anchor signature, under $a$, on the message $0 \| \langle s, t\rangle$;
\item the query carries a valid torch signature, under $t$, on the message $1 \| \langle s, t \rangle$;
\item the query carries a valid CGS tag for (an encoding of) the query $q$ itself.
\end{itemize}
Here $\langle s, t\rangle = \ell \| a \| t$, and the leading bit separates anchor and torch messages. In particular, every answered query is bound to the label $\ell$ of the flame being cloned and to the specific torch being used.
\rohit{It would help to state explicitly where the CGS tag check is used in the analysis. In the draft, the non-proliferation witness predicate ignores it, and untelegraphability simulates it; is it needed only to bind anchor-cloning to possession of a full flame?}
\ananta{yes, that was the thought}

\paragraph{Reusing the torch.} The cloning procedure makes $2L$ queries to $\Gamma$, and each of them uses the same anchor, torch, and CGS flame. For each query, the party coherently computes the request, the CGS tag (using the CGS part of the flame, as in~\cite{CGS26}), and the two signatures (using the coherent signing circuits for the anchor and the torch), queries $\Gamma$, and then uncomputes everything except the answer. No signature, tag, or answer is ever measured. Since honestly computed signatures and tags pass the checks of $\Gamma$ except with negligible probability, and $\Gamma$ agrees with the ideal affine map whenever its checks pass, each such query acts as the ideal affine query on the honest resources, up to an error of $2\sqrt{d}$ in Euclidean norm, where $d$ is the total signing and tagging error. In particular, the anchor, the torch and the CGS flame are returned essentially undisturbed. A hybrid argument over the $2L$ queries shows that a single cloning call produces a state that is negligibly close (in trace distance) to the torch together with two copies of the honest flame, and a further hybrid over polynomially many calls shows that a single torch can be reused to clone polynomially many times. This is what distinguishes torches from ordinary one-shot signing states: a torch is only ever used to sign in superposition, and is never measured.

\paragraph{Untelegraphability.} Untelegraphability reduces to that of CGS fire. Since the accepting effect for a flame is $\ket{A_a}\!\bra{A_a} \otimes B$, where $B$ is the accepting effect of CGS verification, it is dominated by $I \otimes B$, and so it suffices to show that the adversary cannot telegraph the CGS part of a flame. Given a pair of telegraphing adversaries for torch-fire, we construct a pair for CGS fire as follows. The two reductions use shared classical randomness (which is allowed in the definition) to sample all torch and anchor oracles consistently, prepare the torches requested by each party locally, and forward all queries to the CGS oracles to the CGS challenger. Fresh flames are obtained by requesting a CGS flame from the challenger and attaching a freshly generated anchor. The only oracle that combines CGS data with the sampled data is $\Gamma$, through its tag check. Since $\Gamma$ only checks tags on forward queries, for which the generation oracle never returns $\bot$, a tag is valid exactly when the public CGS oracle answers the corresponding query. Thus each query to $\Gamma$ can be simulated exactly using two queries to the public CGS interface, and the receiving reduction simply discards the anchor of its output flame.

\subsection{Global Non-Proliferation}\label{sec:overview-nonprolif}

We now describe the proof of global non-proliferation. Recall that the adversary jointly processes $m$ torches and polynomially many flames, splits into $N$ non-communicating parties sharing an arbitrary entangled state, and each party $i$ receives a fresh challenge flame with serial $s_i = (\ell_i, a_i)$ and attempts to output two flames that both verify under $s_i$. We first move to an experiment where all labels issued during the experiment are distinct, which only changes the success probability negligibly.

\paragraph{The high-level idea.} Fix a location $i$. If party $i$ outputs two accepted flames, it must have produced two registers passing the anchor test for $a_i$. Without any affine access to the anchor source $\mathcal{O}_{A,\ell_i}$, this is impossible, as it would yield a one-shot signature forgery. So party $i$ must have made queries to $\Gamma$ on label $\ell_i$, and any such query that $\Gamma$ answers carries a valid torch signature on a message beginning with $1 \| \ell_i$. If we could extract such a torch signature from each of $m+1$ successful parties, we would obtain $m+1$ valid torch signatures on distinct messages (since the labels are distinct), while only $m$ torches were issued. Then either two of these signatures are under the same issued torch, or one of them is under a torch that was never issued, whose signing state the reduction still holds and can use to sign a different message (say, one beginning with $0$). Either way, this gives a forgery for the long-message one-shot signature scheme underlying the torches.

The difficulty lies in the extraction step. The parties share an arbitrarily entangled state, and so the events of successful cloning at different locations may be arbitrarily correlated, as may the outcomes of any extraction procedures we run locally. It does not suffice to show that each successful party can individually be made to yield a torch signature. We need that whenever at least $m+1$ parties succeed, local extraction procedures, which do not communicate with each other, \emph{simultaneously} succeed at at least $m+1$ locations. Moreover, we need this extraction to be efficient, since it is used inside a reduction to one-shot signature security.

\paragraph{Original and modified computations.} We formalize the above as follows. At location $i$, let $U^1_i$ denote the (purified) computation that runs party $i$ on its input register and then verifies both of its output flames under $s_i$, and let $\Pi_i$ project onto acceptance. We define a modified computation $U^0_i$ that is identical, except that it uses a modified oracle $\Gamma^0_i$, which returns $\bot$ on every query carrying the label $\ell_i$. The two oracles only differ on queries that carry the label $\ell_i$ and are answered by $\Gamma$, and every such query carries a valid torch signature on a message involving $\ell_i$. We therefore let the \emph{witnesses} at location $i$ be queries to $\Gamma$ that carry label $\ell_i$, an issued torch key $t$, and a valid torch signature under $t$. Notably, this predicate only checks the torch signature, and not the anchor signature or the CGS tag, so it can be tested using public oracles alone. The anchor key within the query also need not equal $a_i$: only the label matters.

\paragraph{An extraction lemma for separated computations.} We abstract this situation into a general lemma. Consider $N$ locations, each running a $q$-query oracle computation on its share of a joint (possibly entangled) input state $\rho$, in an original and a modified version that differ only on witness queries. Let $A_i$ and $C_i$ denote the accepting maps of the original and modified computations at location $i$. Let $E_i$ be the effect corresponding to the following \emph{witness sampler}: choose $j \in [q]$ uniformly, run the modified computation up to its $j$-th query, and test whether the query is a witness. A standard hybrid argument in the style of~\cite{BBBV97}, as in the proof of the one-way-to-hiding lemma~\cite{Unr15,AHU19}, gives the operator inequality
\[
(A_i - C_i)^\dagger (A_i - C_i) \preceq 4q^2 E_i,
\]
i.e., the original and modified computations can only differ substantially on components of the input on which the witness sampler succeeds with noticeable probability. We handle the two kinds of components separately.

On components where the witness sampler rarely succeeds, we would like to replace the original computation by the modified one. To this end, we use the \emph{filter} $L_i = (I - E_i)^\beta$, for a parameter $\beta$. This can be implemented efficiently as follows. We prepare a uniform superposition over $j \in [q]$, coherently run the modified computation up to query $j$, perform the witness test and continue only on rejection, uncompute, and continue only if all ancillary registers are returned to zero. A successful run of this cycle applies exactly $I - E_i$ to the input, and $\beta$ successful runs apply $L_i$. On the filtered input, the comparison above improves to
\[
L_i (A_i - C_i)^\dagger (A_i - C_i) L_i \preceq 4 q^2 E_i (I - E_i)^{2\beta} \preceq \frac{4 q^2}{2\beta + 1},
\]
since the function $u(1-u)^{2\beta}$ is at most $1/(2\beta+1)$ on $[0,1]$. Thus, after filtering, original acceptance is bounded by (twice) modified acceptance, up to an error that decreases with $\beta$.

On components where the witness sampler often succeeds, we would like to actually extract a witness. Here we use an alternating projector procedure following Marriott and Watrous~\cite{MW05}. We run the coherent witness sampler once and test for a witness; if this fails, we alternate between measuring the projector onto the image of the (isometric) coherent sampler, and testing for a witness again, for $t = \lceil 2\beta/\delta^2 \rceil$ rounds. On an eigenvector of $E_i$ with eigenvalue $u$, the two projectors act on a two-dimensional invariant subspace, and each additional round succeeds with probability exactly $2u(1-u)$, independently of previous rounds. A short calculation then shows that the success effect $W_i$ of this extractor satisfies
\[
W_i \succeq (I - L_i)^2 - \delta I,
\]
by considering separately eigenvalues $u \geq \delta/(2\beta)$, where the extractor succeeds with probability at least $1 - \delta$, and $u < \delta/(2\beta)$, where $(1 - (1-u)^\beta)^2 \leq \delta$. Combining the two cases using the decomposition $A_i = A_i (I - L_i) + A_i L_i$ and a weighted Cauchy--Schwarz inequality, we obtain a local comparison of the form
\[
A_i^\dagger A_i - W_i \preceq \delta(2+\delta) I + (1 + \delta^{-1})\, L_i A_i^\dagger A_i L_i,
\]
where the last term corresponds to successful filtering followed by original acceptance, and is bounded in terms of modified acceptance as above.

It remains to lift these local comparisons to a comparison between the events ``at least $d$ locations accept'' and ``at least $d$ locations extract a witness''. Each of these events corresponds to a threshold effect, obtained by summing tensor products of local success and failure effects over all outcomes with at least $d$ successes. We replace the acceptance effect by the extraction effect one location at a time. Changing the outcome at location $i$ only affects the threshold event when exactly $d-1$ of the other locations succeed, and so the difference between consecutive hybrids is exactly $(A_i^\dagger A_i - W_i) \otimes G_i$, where $G_i$ is the (positive semidefinite) effect of exactly $d-1$ successes elsewhere. Since tensoring with a positive semidefinite operator preserves operator inequalities, the local comparison applies directly. This is what allows us to handle arbitrary entanglement between the locations without assuming any independence. Altogether, we obtain:

\begin{lemma}[Extraction from separated computations, informal]\label{lem:overview-extraction}
For any $\beta, \delta$ and threshold $d$, there are efficient local extractors such that
\[
\Pr\Big[\textstyle\sum_i X_i \geq d\Big] \leq \Pr\Big[\textstyle\sum_i Y_i \geq d\Big] + N\delta(2 + \delta) + (1 + \delta^{-1})\left(2\varepsilon_\beta + \frac{8Nq^2}{2\beta+1}\right),
\]
where $X_i$ and $Y_i$ indicate acceptance by the original computation and successful witness extraction at location $i$, and $\varepsilon_\beta$ is the total probability, summed over locations, that the filter $L_i$ succeeds and the modified computation then accepts.
\end{lemma}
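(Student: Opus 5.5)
The plan is to assemble the three local operator inequalities sketched above and then lift them to the threshold event by a location-by-location hybrid. I fix purifications so that at each location $i$ the original computation, the modified computation, the filter and the extractor are all operators on the local input register, possibly together with local ancillas initialised to zero. Let $A_i$ and $C_i$ be the (sub-isometric) accepting branches of the original and modified computations. Then $X_i=1$ corresponds to the effect $A_i^\dagger A_i$, and $Y_i=1$ to the success effect $W_i$ of the Marriott--Watrous extractor. Let $E_i$ be the effect of the witness sampler and let $L_i=(I-E_i)^\beta$.

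The first step is the BBBV-style hybrid. The original and modified runs differ only on witness queries. Summing the per-query deviations with the triangle inequality and Cauchy--Schwarz gives $\|(A_i-C_i)\ket\psi\|^2\le q\sum_j 4\|\Pi_{\mathrm{wit}}U_{<j}\ket\psi\|^2 = 4q^2\bra\psi E_i\ket\psi$ for every $\ket\psi$, including states entangled with the other locations, since everything acts as identity there. Conjugating by $L_i$, which commutes with $E_i$, and using $u(1-u)^{2\beta}\le 1/(2\beta+1)$ yields the filtered bound. From $L_iA_i^\dagger A_iL_i\preceq 2L_iC_i^\dagger C_iL_i+2L_i(A_i-C_i)^\dagger(A_i-C_i)L_i$ we then get $L_iA_i^\dagger A_iL_i\preceq 2L_iC_i^\dagger C_iL_i+\frac{8q^2}{2\beta+1}I$.

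The second step is the extractor bound $W_i\succeq (I-L_i)^2-\delta I$. I prove it on each eigenvector of $E_i$ with eigenvalue $u$ using Jordan's lemma, since the image projector of the coherent sampler and the witness projector leave a two-dimensional space invariant. On that space the first attempt succeeds with probability $u$, and each later round succeeds independently with probability $2u(1-u)$. If $u\ge\delta/(2\beta)$, then $t=\lceil 2\beta/\delta^2\rceil$ rounds fail with probability at most $\delta$; the edge case $u$ near $1$ is covered by the first attempt. If $u<\delta/(2\beta)$, then $(1-(1-u)^\beta)^2\le(\beta u)^2\le\delta$, so the right-hand side is nonpositive.

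The third step combines these. I write $A_i=A_i(I-L_i)+A_iL_i$ and use $(X+Y)^\dagger(X+Y)\preceq(1+\delta)X^\dagger X+(1+\delta^{-1})Y^\dagger Y$. Together with $(I-L_i)A_i^\dagger A_i(I-L_i)\preceq(I-L_i)^2\preceq W_i+\delta I$, this gives
\[
A_i^\dagger A_i-W_i\preceq \delta(2+\delta)I+(1+\delta^{-1})L_iA_i^\dagger A_iL_i .
\]
I then substitute the filtered bound from the first step into the last term.

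Finally, the lift. Let $H_k$ be the threshold effect in which locations $1,\dots,k$ use $W$ and the rest use $A^\dagger A$. Then $H_{k-1}-H_k=(A_k^\dagger A_k-W_k)\otimes G_k$, where $G_k\succeq 0$ is the effect of exactly $d-1$ successes among the other locations, each using its hybrid-appropriate effect. Since all these effects lie between $0$ and $I$, we have $G_k\preceq I$. Tensoring the local inequality with $G_k$ and taking the expectation in $\rho$ bounds each consecutive difference. The constant term contributes $\delta(2+\delta)$ per location. The remaining term is $(1+\delta^{-1})\Tr[(L_kA_k^\dagger A_kL_k\otimes G_k)\rho]\le(1+\delta^{-1})\Tr[(L_kA_k^\dagger A_kL_k\otimes I)\rho]$, which I bound by $(1+\delta^{-1})\bigl(2\varepsilon_{\beta,k}+\frac{8q^2}{2\beta+1}\bigr)$. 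Summing over $k$ gives exactly the stated error.

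The main obstacle I expect is bookkeeping rather than any single estimate. The filter $L_i$ is implemented by postselection on ancillas returning to zero, and both $L_i$ and $W_i$ must be genuine local operators on the input alone, including the workspace of the modified computation. Otherwise the tensor-product hybrid and the commutation of $L_i$ with $E_i$ fail. I would handle this by defining $E_i:=V^\dagger(\text{witness test})V$ for the coherent sampler isometry $V$. Then one successful filter cycle is exactly $V^\dagger(I-\Pi_{\mathrm{wit}})V=I-E_i$ on the input, and $\varepsilon_\beta$ is defined as $\sum_i\Tr[(L_iC_i^\dagger C_iL_i\otimes I)\rho]$, matching this implementation.
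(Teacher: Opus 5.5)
Your proposal is correct and follows essentially the same route as the paper's proof: the BBBV-style operator inequality $(A_i-C_i)^\dagger(A_i-C_i)\preceq 4q^2E_i$ and its filtered refinement, then the Marriott--Watrous extractor with $t=\lceil 2\beta/\delta^2\rceil$ rounds analysed on two-dimensional invariant subspaces to obtain $W_i\succeq (I-L_i)^2-\delta I$, then the weighted Cauchy--Schwarz combination, and finally the location-by-location threshold hybrid whose consecutive differences are $(A_i^\dagger A_i-W_i)\otimes G_i$. Your treatment of the filter as postselection on $V^\dagger(I-\Pi_Q)V$, with the ancillas returned to zero, also matches the paper's implementation.
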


\paragraph{Comparison with prior techniques.} The individual ingredients of Lemma~\ref{lem:overview-extraction} are standard. The operator inequality comparing the two computations is the operator form of the hybrid argument underlying the one-way-to-hiding lemma~\cite{BBBV97,Unr15,AHU19}, and our extractor is an instance of the alternating projector technique of Marriott and Watrous~\cite{MW05}, which has been used extensively for extraction and estimation against quantum adversaries~\cite{Zha20,CMSZ21,LMS22}. The filter $L_i$ is a polynomial in the witness-sampling effect $E_i$, implemented through post-selection; we note that this is different from the measure-rewind-measure technique of~\cite{KSSSS20}, which rewinds the final measurement of a distinguisher to extract from it.\rohit{Polynomials in $E_i$ could alternatively be implemented using quantum singular value transformation~\cite{GSLW19}; consider whether this improves parameters.}

What we are not aware of is a statement of this form for many entangled locations. The closest analogue is the \emph{simultaneous} one-way-to-hiding lemma of Broadbent and Lord~\cite{BL20}, which considers two non-communicating parties with shared entanglement that both query a random oracle, and bounds the probability that both of them guess the value of the oracle at a hidden point. The bound is in terms of the probability that measuring a uniformly random query of each party simultaneously reveals the hidden point, and incurs a constant multiplicative loss that Majenz, Schaffner, and Tahmasbi~\cite{MST21} showed cannot be reduced below $9/8$. Coladangelo, Majenz, and Poremba~\cite{CMP24} gave a (lossy) two-party search-to-decision reduction of this kind for copy-protection in the quantum random oracle model, and Ananth, Kaleoglu, Li, Liu, and Zhandry~\cite{AKLLZ22} extracted from two entangled parties by combining Jordan's lemma, a threshold measurement on each party, and measuring a random query. Related simultaneous statements for two parties, which do not involve oracle queries, include the simultaneous quantum Goldreich-Levin reductions of~\cite{AKY25,CLX26}. On the other hand, works on collusion-resistant copy-protection and multi-copy uncloneable cryptography~\cite{LLQZ22,CG24} handle $k+1$ entangled parties, and show that they cannot all succeed when given $k$ copies, by applying projective implementations~\cite{Zha20} at each location and using the pigeonhole principle; these do not involve extracting oracle queries. Our own earlier simultaneous extraction lemma~\cite{CLPY25} considers several extractors acting on the same quantum state, each with a unique output, rather than separated computations.

In comparison, Lemma~\ref{lem:overview-extraction} applies to any polynomial number $N$ of separated locations, and directly compares the threshold events ``at least $d$ of $N$ locations accept'' and ``at least $d$ of $N$ locations extract''. This is important for our application, since the adversary may distribute its resources among $N \gg m+1$ parties, and we do not know in advance which $m+1$ of them succeed. A naive approach that guesses a set of $m+1$ locations and applies a simultaneous extraction argument to it would lose a factor of $\binom{N}{m+1}$, which is superpolynomial in general (for instance, when $N = 2m$ and $m$ grows with $\lambda$). Moreover, the error in our lemma is not bounded in terms of a finding probability; it is instead expressed through the acceptance probability of the modified computations after filtering, which we bound through a separate reduction (in our case, to the security of the anchors).\rohit{To the best of our knowledge there is no existing $N$-party threshold version, but please double check against recent work on collusion-resistant uncloneable primitives (e.g.\ secure key leasing and copy-protection with many copies) before submission.}

\paragraph{Bounding filtered modified acceptance.} To apply Lemma~\ref{lem:overview-extraction}, we first show that $\varepsilon_\beta$ is negligible for any polynomial $\beta$, using the security of the one-shot signature scheme underlying the anchors. The idea is that in the modified computation at location $i$, no query on label $\ell_i$ is ever answered, so the reduction can embed a one-shot signature challenge into the anchor source $\mathcal{O}_{A,\ell_i}$ (specifically, into its first component) without ever needing the corresponding affine maps. The reduction guesses the location $i$ uniformly at random, samples all other oracles itself, and generates the first component of the challenge anchor $a_i$ using the public key generation procedure of the challenge oracles. It then runs the filter followed by the modified computation at location $i$, and on acceptance, measures the first component of both accepted anchors. Since both anchors were projected onto $\ket{A_{a_i}}$, this gives two independent uniform elements of the challenge coset, and hence two distinct signatures under the challenge key, with overwhelming probability. Note that the reduction can implement the filter, since witnesses are testable using public torch verification.

There is one subtlety. The joint input $\rho$ was prepared before the split using the original oracle $\Gamma$, and so may already contain information obtained from affine queries on label $\ell_i$, which the reduction cannot answer. We therefore also disable these queries before the split. This only changes the input state negligibly, since the challenge label $\ell_i$ is uniformly random and unknown to the adversary at that point. Concretely, the total query mass on a uniformly random label is at most $Q_{\mathrm{pre}} 2^{-\lambda}$ in expectation for $Q_{\mathrm{pre}}$ pre-split queries, and the hybrid argument of~\cite{BBBV97} bounds the resulting change by $2Q_{\mathrm{pre}} 2^{-\lambda/2}$.

\paragraph{Bounding extraction.} Next, we show that witnesses for $m+1$ distinct labels can only be obtained with negligible probability, using the security of the long-message one-shot signature scheme underlying the torches. The reduction embeds the challenge oracles as the torch source $\mathcal{O}_T$, and generates all $R$ torches itself using the public key generation procedure. This gives the correct distribution of torches without the authority's private data. It hands the $m$ requested torches to the adversary and retains the rest, and samples the anchor and CGS sources itself, so that it can answer all affine queries and tag checks. The challenge oracles are only needed for torch verification within $\Gamma$. As discussed above, $m+1$ witnesses on distinct labels yield either two signatures on distinct messages under the same issued torch, or a signature under an unissued torch, which the reduction can combine with its own signature (computed using the retained torch) on a message beginning with $0$.

\paragraph{Putting things together.} Combining the two bounds with Lemma~\ref{lem:overview-extraction} at threshold $d = m+1$, we obtain that the probability that at least $m+1$ parties succeed is at most
\[
\mathsf{negl}(\lambda) + N\delta(2+\delta) + (1 + \delta^{-1})\frac{8Nq^2}{2\beta+1}.
\]
For any constant $c$, choosing $\delta = 1/(2N\lambda^c)$ and $\beta = \lceil N^2 q^2 \lambda^{2c}\rceil$ keeps the extractors and the filter efficient, and makes this bound $O(\lambda^{-c})$. Since the success probability does not depend on these parameters, and $c$ is arbitrary, it must be negligible.
\section{Preliminaries}
\label{sec:prelims}

\paragraph{Basic Notations.}
We write 
$\mathbb N_0=\{0,1,2,\ldots\}$.
For $n\in\mathbb N$, let $[n]=\{1,\ldots,n\}$.
Given a finite set $S$, we use $|S|$ to denote its cardinality
and $x\leftarrow S$ to denote uniform sampling from $S$.
For a statement $\mathcal E$, its indicator
$\ind[\mathcal E]$ is $1$ if $\mathcal E$ holds and $0$ otherwise.
We use $\lambda$ for the security parameter.
A function $\epsilon(\lambda)$ is negligible if, for every
polynomial $p(\lambda)$, we have
$\epsilon(\lambda)<1/p(\lambda)$ for all sufficiently large
$\lambda$.
We write $\negl(\lambda)$ for an unspecified negligible function.

\paragraph{Notations for Quantum Computing.}
We use the standard finite-dimensional quantum formalism as in ~\cite{NC00}.
For a vector $\ket{\psi}$ in a Hilbert space $\mathcal H$,
$\bra{\psi}$ denotes its conjugate transpose,
$\braket{\phi|\psi}$ denotes the inner product, and
$\ket{\phi}\bra{\psi}$ denotes the outer product.
The norm $\norm{\ket{\psi}}$ is the standard $\ell_2$ norm.
For a $d$-dimensional Hilbert space, the computational basis is
$\{\ket{0},\ldots,\ket{d-1}\}$.
A quantum register $F$ has an associated Hilbert space
$\mathcal H_F$, and the joint space of registers $F,G$ is
$\mathcal H_F\otimes\mathcal H_G$.
We write $I_F$ for the identity operator on $\mathcal H_F$.
Register subscripts and identity operators on untouched
registers are omitted when clear from context.

\paragraph{Positive Semidefinite Operators and Operator Order.}
For a linear operator $A$, let $A^\dagger$ denote its adjoint.
A Hermitian operator $A$ is positive semidefinite (PSD), written
$A\succeq0$, if
$\bra{v}A\ket{v}\ge0$ for every vector $\ket{v}$.
Equivalently, all eigenvalues of $A$ are nonnegative.
For Hermitian operators $A,B$ on the same space, we write
$A\preceq B$, or $B\succeq A$, if $B-A\succeq0$.
For every compatible linear operator $C$,
\[
 A\preceq B
 \quad\Longrightarrow\quad
 C^\dagger A C\preceq C^\dagger B C.
\]

We will also use that, for $T\succeq0$ and $\rho\succeq0$,
\[
 A\preceq B
 \quad\Longrightarrow\quad
 A\otimes T\preceq B\otimes T,
 \qquad
 \Tr(A\rho)\le\Tr(B\rho).
\]

\paragraph{Pure and Mixed States.}
A pure state is represented by a unit vector $\ket{\psi}$,
or equivalently by the density operator $\proj{\psi}$.
A (possibly) mixed state of a register $F$ is denoted by
$\rho_F$ and is a unit trace PSD operator on $\mathcal H_F$.
We also allow subnormalized states for the sake of analysis:
PSD operators $\rho$ satisfying $\Tr(\rho)\le1$.

For a joint state $\rho_{FG}$, its reduced state on $F$ is
$\rho_F=\Tr_G(\rho_{FG})$, where $\Tr_G$ denotes the partial
trace over $G$.

\paragraph{Projectors and Measurements.}
An orthogonal projector $\Pi$ on $\mathcal H$ satisfies
$\Pi^\dagger=\Pi$ and $\Pi^2=\Pi$.
Its support/range is
\(
 \Ran(\Pi)
 =\{\ket{v}\in\mathcal H:\Pi\ket{v}=\ket{v}\}.
\)
In particular, $0\preceq\Pi\preceq I$.
A projective measurement is a collection of pairwise
orthogonal projectors $\{\Pi_i\}_i$ satisfying
$\sum_i\Pi_i=I$.
On input $\rho$, the probability of outcome $i$ is
\(
 p_i=\Tr(\Pi_i\rho).
\)
For a pure input $\proj{\psi}$, this equals
$\norm{\Pi_i\ket{\psi}}^2$.
The unnormalized state after outcome $i$ is
$\Pi_i\rho\Pi_i$; if $p_i>0$, its normalized state is
$\Pi_i\rho\Pi_i/p_i$.
A computational-basis diagonal projector has the form
\(
 \Pi_S=\sum_{x\in S}\proj{x}
\)
for a subset $S$ of the computational-basis states.
Measuring $\{\Pi_S,I-\Pi_S\}$ tests membership in $S$.

\paragraph{Quantum Effects.}
A quantum effect is an operator $E$ satisfying
$0\preceq E\preceq I$.
It describes the acceptance probability of a quantum test:
on input $\rho$, the test accepts with probability
$\Tr(E\rho)$.
For example, let $V$ be an isometry, so that $V^\dagger V=I$,
and let $\Pi$ be an acceptance projector.
Applying $V$ and then testing $\Pi$ has acceptance effect
\(
 E=V^\dagger\Pi V,
 \:\:
 0\preceq E\preceq V^\dagger V=I,
\)
and hence, by cyclicity of trace,
\(
 \Pr[\mathrm{accept}] =\Tr(E\rho)
 =\Tr(\Pi V\rho V^\dagger)
\)

\paragraph{Operator Norms and Trace Distance}
For a linear operator $A:\mathcal H\to\mathcal K$, its
spectral norm is
\[
 \norm{A}_{\mathrm{op}}
 =\sup_{\ket{v}\in\mathcal H:\,\norm{\ket{v}}=1}
       \norm{A\ket{v}}.
\]
We also write $\norm{A}$ when the operator norm is clear
from context.

An operator $A$ is a contraction if
$\norm{A}_{\mathrm{op}}\le1$, equivalently
$A^\dagger A\preceq I$.
The trace norm is defined as:
\(
 \norm{A}_1=\Tr\!\left(\sqrt{A^\dagger A}\right).
\)
For density operators $\rho,\sigma$ on the same space,
their trace distance is
\(
 D(\rho,\sigma)=\frac12\norm{\rho-\sigma}_1.
\)

\paragraph{Classical Oracles Queried in Superposition.}
For a classical function
$f:\{0,1\}^a\to\{0,1\}^b$, an oracle for that function query-able in superposition means
access to the unitary
\[
 \mathcal O_f\ket{x,z,w}
 =\ket{x,z\oplus f(x),w},
 \qquad
 x\in\{0,1\}^a,\quad z\in\{0,1\}^b,
\]
where $\oplus$ denotes bitwise XOR and $w$ labels the
remaining workspace.
The action on arbitrary superposition of states is determined by linearity.
In particular,
\[
 \mathcal O_f
 \left(\sum_x\alpha_x\ket{x}\ket{0^b}\right)
 =\sum_x\alpha_x\ket{x}\ket{f(x)}.
\]

The notation $\mathcal A^{\mathcal O}$ indicates that
$\mathcal A$ has quantum query access to the specified
collection $\mathcal O$ of classical functions.

\paragraph{Query-Bounded Quantum Algorithms.}
A $T$-query quantum algorithm acts on a query register $Q$,
an answer register $Z$, and a working register $W$.

Its computation can be written as
\[
 U^{\mathcal O}
 =G_T \mathcal O_T G_{T-1}\cdots G_1 \mathcal O_1 G_0,
\]
where $G_0,\ldots,G_T$ are unitaries between queries and
$O_j$ is the $j$-th query to the specified oracle collection.

\paragraph{Quantum Polynomial-Time Algorithms.}
A quantum polynomial-time ($\QPT$) algorithm is described
by a uniformly generated family of quantum circuits whose
size and number of qubits are polynomially bounded in the
input length.
For an oracle-aided algorithm, each oracle query counts as
one gate; in particular, its total number of oracle queries
is polynomially bounded.
The oracle functions themselves need not be efficiently
sampleable or evaluable.

\begin{fact}
\label{fact:coherent-implementation}
\normalfont
Let $\mathcal A^{\mathcal O}$ be an oracle aided $\QPT$ algorithm with classical input $x\in\{0,1\}^n$,
quantum input register $R$, and classical output.
The input and output register sizes are fixed polynomially
bounded functions of $n$. The algorithm can perform any intermediate measurement and use classical randomness. 
Let the oracle $\mathcal O$ provides access to a classical function
$f\colon\{0,1\}^a\to\{0,1\}^b$ through the unitary
\[
 \mathcal{O}_f\ket{q}\ket{z}
 =\ket{q}\ket{z\oplus f(q)},
 \qquad q\in\{0,1\}^a,\quad z\in\{0,1\}^b.
\]

Then, there are polynomial-size registers $E,C$, and oracle aided $\QPT$ implementable unitaries $U_x^{\mathcal O}$ acting on the registers $REC$ satisfying the
following property. The register $C$ holds the output,
and $E$ contains all auxiliary workspaces. 
For every density operator $\rho_R$ and every output $c$,
 \begin{equation}\label{eq:coherent-implementation}
 \Pr\!\left[\mathcal A^{\mathcal O}(x,\rho_R)=c\right]
 =
 \operatorname{Tr}\!\left[
  (I_{RE}\otimes\proj{c}_C)\,
  U_x^{\mathcal O}
  (\rho_R\otimes\proj{0}_{EC})
  (U_x^{\mathcal O})^\dagger
 \right].
 \end{equation}

This follows from reversible computation and the principle of deferred measurement; see \cite{NC00}.
\end{fact}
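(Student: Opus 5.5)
The proof is a standard purification argument. We make all randomness coherent, defer every measurement, and replace every classical oracle call by the XOR query unitary $\mathcal O_f$.

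First, we remove classical randomness. Whenever $\mathcal A$ samples a random bit, we instead prepare $\ket{+}$ (that is, $H\ket{0}$) in a fresh ancilla inside $E$ and use it as a control. Measuring such an ancilla in the computational basis gives exactly the uniform bit, and by deferred measurement it never has to be measured. Second, each intermediate measurement of a register in the computational basis is replaced by a CNOT copy of that register onto a fresh ancilla in $E$, and all later classical control on the outcome becomes quantum control on the ancilla. This is the principle of deferred measurement from \cite{NC00}. It does not change the joint distribution of all outcomes, and in particular it preserves the distribution of the final output. A measurement in a non-computational basis is first rotated into the computational basis by a unitary, which reduces it to the previous case.

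Third, every classical gate, including the oracle-dependent bookkeeping, is made reversible through Bennett's construction using Toffoli gates and garbage ancillas in $E$. Each oracle query is already the unitary $\mathcal O_f$, which is used verbatim on the query and answer registers. The final classical output is copied by CNOT into the register $C$, which starts in $\ket{0}$.

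Call the resulting circuit $U_x^{\mathcal O}$. It is a product of polynomially many gates and oracle calls, each uniformly generated from $x$, so it is $\QPT$-implementable with oracle access. Its gate and qubit overhead is at most a polynomial factor: one ancilla per random bit, one per measurement outcome, and polynomial garbage. Hence $E$ and $C$ have polynomial size.

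To verify \eqref{eq:coherent-implementation}, we argue by induction over the gates. For every $\rho_R$, the joint distribution of all deferred outcomes under measurement of the ancillas equals that of the original run; this follows from linearity and the identity $\sum_b \proj{b}\rho\proj{b} = \Tr_{\text{anc}}[\mathrm{CNOT}(\rho\otimes\proj0)\mathrm{CNOT}^\dagger]$. Projecting $C$ onto $\ket{c}$ and tracing out $R$ and $E$ then yields $\Pr[\mathcal A^{\mathcal O}(x,\rho_R)=c]$.

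The only delicate point is classical control conditioned on measurement outcomes and on oracle answers. It is handled uniformly by making everything controlled unitaries, so there is no real obstacle.
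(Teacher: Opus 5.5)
Your proposal is correct and matches the paper's justification, which simply invokes reversible computation and the principle of deferred measurement from \cite{NC00}. You spell out the same purification in more detail: coherent coins, CNOT-copied measurement outcomes, Bennett-style reversible classical logic, and the XOR oracle used verbatim. One small point you gloss over is that a classically controlled query becomes a controlled-$\mathcal O_f$. You can realize this with two calls to $\mathcal O_f$: compute $f(q)$ into a fresh ancilla, XOR it into the answer register under the control, then uncompute.
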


\begin{lemma}[BBBV hybrid bound {\cite{BBBV}}]
\label{lem:bbbv-hybrid}
\normalfont
Consider two executions \(b\in\{0,1\}\) of a quantum algorithm
with the same initial state and the same unitary operations between
queries. The query input domain is $\mathcal D$. The difference between the two executions is: at query \(j\in[q]\), execution \(b\) applies a
query unitary \(O_j^b\).
For each \(j\), let \(P_j\) be a projector on the query registers
such that
\[
 (O_j^1-O_j^0)(I-P_j)=0.
\]
i.e., for every input $\ket{x}\in \mathcal D$ lying outside the support of the projector $P_j$, both the oracle outputs are identical.
Let \(\rho_j\) be the state of the query input register immediately before
query \(j\) in execution \(1\), and let \(\sigma^b\) be the final
state of execution \(b\). Define
\[
 M:=\sum_{j=1}^{q}\Tr(P_j\rho_j).
\]
Then
\[
 D(\sigma^1,\sigma^0)
 :=\frac12\norm{\sigma^1-\sigma^0}_1
 \le 2\sqrt{qM}.
\]

\end{lemma}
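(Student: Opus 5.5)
The plan is the standard hybrid argument of~\cite{BBBV97}, run on a purified joint state. By \Cref{fact:coherent-implementation} I may assume both executions are unitary apart from the queries. Write the initial pure state as $\ket{\psi_0}$ (purifying if needed), the common inter-query unitaries as $G_0,\dots,G_q$, and let $\ket{\phi^b_j}$ be the state of execution $b$ just before query $j$. The final states are $\ket{\phi^b_{\mathrm{fin}}}=G_q O^b_q\cdots G_1O^b_1G_0\ket{\psi_0}$.

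For each $k\in\{0,\dots,q\}$ I define the hybrid $H_k$ that answers the first $k$ queries with $O^1$ and the rest with $O^0$. Then $H_q$ is execution $1$ and $H_0$ is execution $0$. Consecutive hybrids $H_j$ and $H_{j-1}$ agree up to query $j$, so both are in the state $\ket{\phi^1_j}$ there. They then differ only by $O^1_j$ versus $O^0_j$, after which a common unitary is applied. Hence
\[
\norm{\ket{H_j}-\ket{H_{j-1}}}=\norm{(O^1_j-O^0_j)\ket{\phi^1_j}}=\norm{(O^1_j-O^0_j)P_j\ket{\phi^1_j}}\le 2\norm{P_j\ket{\phi^1_j}},
\]
using $(O^1_j-O^0_j)(I-P_j)=0$ and the fact that each $O^b_j$ is unitary.

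The triangle inequality and Cauchy--Schwarz then give
\[
\norm{\ket{\phi^1_{\mathrm{fin}}}-\ket{\phi^0_{\mathrm{fin}}}}\le 2\sum_j\norm{P_j\ket{\phi^1_j}}\le 2\sqrt{q\sum_j\norm{P_j\ket{\phi^1_j}}^2}=2\sqrt{qM}.
\]
Here $\norm{P_j\ket{\phi^1_j}}^2=\Tr(P_j\rho_j)$, since $P_j$ acts only on the query register.

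To finish, I convert this to trace distance. For pure states $D\le\norm{\ket{a}-\ket{b}}$, which follows from $D=\sqrt{1-|\langle a|b\rangle|^2}$ together with $\norm{a-b}^2=2-2\operatorname{Re}\langle a|b\rangle\ge 1-|\langle a|b\rangle|^2$. The partial trace over the purification and any final measurement are contractive, so the bound transfers to $\sigma^1,\sigma^0$.

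If the initial state is mixed, a purification works directly, so no averaging argument is needed. The only point needing care is the bookkeeping that every hybrid is evaluated on execution $1$'s pre-query state. That is exactly why $M$ is defined using execution $1$.
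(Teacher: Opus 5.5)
Your proof is correct and is exactly the standard BBBV hybrid argument the paper cites. The paper gives no separate proof of this lemma, but the same telescoping-plus-Cauchy--Schwarz computation appears in operator form in Step~1 of the proof of \Cref{lem:compact-extraction}; there the prefix is taken in the modified execution, because that bound is wanted in terms of the modified computation's query mass, whereas your execution-$1$ prefix correctly matches the definition of $M$ here.
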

\section{Cryptographic Building Blocks}\label{sec:sources}

\subsection{One-shot signatures (OSS)}\label{subsec:sz}
We first start with the definition of One-shot Signature in a classical oracle model as introduced in \cite{AGKZ20}.
\begin{definition}
\label{def:oss}
\label{def:oracle-oss}
\normalfont
Let $\{\mathcal M_\lambda\}_{\lambda\in\mathbb N}$ be a family
of classical message spaces of polynomially bounded message length.
A one-shot signature scheme  consists of an oracle-generation procedure and three $\QPT$ algorithms: 
\[
 \mathsf{OSS}
 =(\OSSGenOracles,\OSSGenQKey^{\mathcal O},\OSSSign^{\mathcal O},\OSSVer^{\mathcal O})
\]
\begin{enumerate}[label=\arabic*.,leftmargin=*,itemsep=5pt]
 \item $\mathcal O\leftarrow\OSSGenOracles(1^\lambda)$:
 an oracle-generation procedure that samples a collection
 $\mathcal O$ of classical oracles.

 \item $(y,R)\leftarrow\OSSGenQKey^{\mathcal O}(1^\lambda)$:
 a $\QPT$ algorithm that outputs a
 classical verification key $y$ and a quantum signing
 register $R$.

 \item $\sigma\leftarrow\OSSSign^{\mathcal O}(y,R,m)$:
 a $\QPT$ algorithm that, on input
 a verification key $y$, a signing register $R$, and a message
 $m\in\mathcal M_\lambda$, outputs a classical signature
 $\sigma$.

 \item $b\leftarrow\OSSVer^{\mathcal O}(y,m,\sigma)$:
 a deterministic polynomial-time classical algorithm
 that outputs $b\in\{0,1\}$. 
\end{enumerate}
All the algorithms and adversaries have query access
to the oracle $\mathcal O$ query-able in superposition. The scheme must satisfy the following properties.

\paragraph{Correctness.}
For every
$m\in\mathcal M_\lambda$,
\begin{equation}\label{eq:oss-correctness-definition}
 \Pr\!\left[
 b=1:
 \begin{array}{l}
  \mathcal O\leftarrow\OSSGenOracles(1^\lambda)\\
  (y,R)\leftarrow\OSSGenQKey^{\mathcal O}(1^\lambda)\\
  \sigma\leftarrow\OSSSign^{\mathcal O}(y,R,m)\\
  b\leftarrow\OSSVer^{\mathcal O}(y,m,\sigma)
 \end{array}
 \right]
 \ge1-\negl(\lambda).
\end{equation}

\paragraph{One-shot security.}
For every quantum adversary $\mathcal A$ making polynomially
many oracle queries,
\begin{equation}\label{eq:oss-security-definition}
 \Pr\!\left[
 \begin{array}{c}
  m_0\ne m_1\ \wedge\\
  b_0=b_1=1
 \end{array}
 :
 \begin{array}{l}
  \mathcal O\leftarrow\OSSGenOracles(1^\lambda)\\
  (y,m_0,\sigma_0,m_1,\sigma_1)
       \leftarrow\mathcal A^{\mathcal O}(1^\lambda)\\
  b_i\leftarrow\OSSVer^{\mathcal O}(y,m_i,\sigma_i),
       \quad i\in\{0,1\}
 \end{array}
 \right]
 \le\negl(\lambda).
\end{equation}


\paragraph{Strong one-shot security.}
The scheme satisfies strong one-shot security if
\eqref{eq:oss-security-definition} holds with the condition
$m_0\ne m_1$ replaced by
\(
 (m_0,\sigma_0)\ne(m_1,\sigma_1).
\)
i.e., the underlying message may be the same but the produced signatures need to be distinct.

All probabilities are taken over oracle sampling, the randomness
of the algorithms, and measurement outcomes.
\end{definition}
Now, we shall describe the technical tools needed for the classical oracle OSS construction due to \cite{SZ25} followed by the construction with the modifications done in \cite{CGS26}.

\begin{construction}[Oracle distribution
{\cite[Construction~28]{SZ25}}]\label{con:sz-source}
\normalfont
For $\lambda\ge2$, set
\begin{equation}\label{eq:sz-parameters}
 r=16\lambda(\lambda-1),\qquad n=r+24\lambda,\qquad k=n.
\end{equation}
Sample a uniform permutation
$\pi:\{0,1\}^n\to\{0,1\}^n$ and write
\[
 \pi(x)=(H(x),J(x)),
 \qquad |H(x)|=r,\quad |J(x)|=n-r.
\]
Independently for each $y\in\{0,1\}^r$, sample a uniform
full-rank matrix
$A_y\in\mathbb Z_2^{k\times(n-r)}$ and a uniform vector
$b_y\in\mathbb Z_2^k$. Define the subspace $S_y$ and the coset $W_y$ as follows:
\begin{equation}\label{eq:sz-cosets}
 S_y=\{A_yw:w\in\mathbb Z_2^{n-r}\},
 \qquad W_y=b_y+S_y.
\end{equation}
The public oracle collection is
$\mathcal O=(\mathcal P,\mathcal P^{-1},\mathcal D)$, where
\begin{equation}\label{eq:SZ}
\begin{aligned}
 \mathcal P(x)&=(H(x),A_{H(x)}J(x)+b_{H(x)}),\\
 \mathcal P^{-1}(y,u)&=
 \begin{cases}
  \pi^{-1}(y,w),&u=A_yw+b_y
      \text{ for some }w\in\mathbb Z_2^{n-r},\\
  \perp,&\text{otherwise},
 \end{cases}\\
 \mathcal D(y,v)&=\ind[v^{\mathsf T}A_y=0^{n-r}].
\end{aligned}
\end{equation}
\begin{fact}
\normalfont
    We note that, beyond the access to these oracles, none of the matrices, shifts, and permutation are publicly disclosed.
\end{fact}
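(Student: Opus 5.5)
The claim is essentially definitional: it is about the interface that \Cref{con:sz-source} exposes, not a hardness statement. I would prove it by inspecting what the construction outputs, and by fixing how every later algorithm and adversary is allowed to interact with the sampled objects.

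First, I would separate the sampled randomness from the published interface. The procedure $\OSSGenOracles(1^\lambda)$ internally samples three things:
\begin{itemize}
\item the permutation $\pi$, and hence $H$ and $J$;
\item for every $y\in\{0,1\}^r$, a full-rank matrix $A_y$;
\item for every $y\in\{0,1\}^r$, a shift $b_y$.
\end{itemize}
Its only output is the collection $\mathcal O=(\mathcal P,\mathcal P^{-1},\mathcal D)$ of \eqref{eq:SZ}, made available as the superposition-query unitaries $\mathcal O_f$ from the preliminaries. By \Cref{def:oracle-oss}, the algorithms $\OSSGenQKey^{\mathcal O}$, $\OSSSign^{\mathcal O}$ and $\OSSVer^{\mathcal O}$, and every adversary in the (strong) one-shot security experiment, receive only $1^\lambda$, their own classical or quantum inputs, and query access to $\mathcal O$. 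No step of these experiments hands $\pi$, $\{A_y\}$ or $\{b_y\}$ to any party as a classical string or quantum register. So the joint state of any party is a function of its inputs and its queries alone, and by \Cref{fact:coherent-implementation} its behaviour is captured by a unitary $U^{\mathcal O}$ interleaving oracle calls with oracle-independent gates.

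Second, I would note that "beyond the access to these oracles" is exactly the qualifier needed. Whatever information about $A_y$, $b_y$ or $\pi$ is recoverable, it is recoverable only through queries to $\mathcal P$, $\mathcal P^{-1}$ and $\mathcal D$. For instance:
\begin{itemize}
\item $\mathcal P^{-1}(y,u)$ reveals membership of $u$ in $W_y$;
\item $\mathcal D(y,v)$ reveals membership of $v$ in the dual space $S_y^\perp$.
\end{itemize}
Any such leakage is therefore accounted for by the query-bounded security analysis of \cite{SZ25} and is not a separate disclosure. The only care needed is to check that later constructions respect this. Where a party is explicitly given more than $\mathcal O$, for example the torch authority receiving $(A_{t_j},b_{t_j})$ for the published torch keys, or the classical-key scheme in which $\mathcal D$ is withheld, this must be recorded there as an explicit modification of the interface and not as a consequence of \Cref{con:sz-source}.

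The main and only subtle point is stating the access model precisely enough that "not disclosed" has a meaning: the oracles are classical functions queried in superposition, and the adversary's view is a function of query answers only. Once that is fixed as above, the fact follows immediately from the definition of the experiment, and no computation is required.
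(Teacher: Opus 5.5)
Your proposal is correct and matches the paper's treatment. The paper records the statement without proof as a stipulation of the access model in \Cref{con:sz-source}: it holds by definition, because $\OSSGenOracles$ exposes only $(\mathcal P,\mathcal P^{-1},\mathcal D)$ as superposition oracles. Your caveat is also well placed, since the paper departs from this interface explicitly elsewhere, withholding $\mathcal D$ as $\aux$ in \Cref{con:PVQF-algorithms} and handing $\mathsf{td}_T$ to the torch authority in \Cref{subsec:tf-construction}.
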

\begin{fact}
\normalfont
    Full column rank makes the inverse coordinate $w$ unique.
Consequently,
\[
 \mathcal P^{-1}(\mathcal P(x))=x,\qquad
 \mathcal P^{-1}(y,u)\ne\perp
 \ \Longleftrightarrow\ u\in W_y.
\]
\end{fact}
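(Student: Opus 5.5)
The plan is to unpack the definitions in \eqref{eq:SZ} directly; no earlier result beyond \Cref{con:sz-source} is needed. First I will establish uniqueness of the inverse coordinate. By construction $A_y\in\mathbb Z_2^{k\times(n-r)}$ has full rank, and since $k=n\ge n-r$, this means full column rank. So its kernel is trivial. If $A_yw+b_y=A_yw'+b_y$, then $A_y(w-w')=0$, which forces $w=w'$. Hence the affine map $w\mapsto A_yw+b_y$ is an injection from $\mathbb Z_2^{n-r}$ onto $W_y=b_y+S_y$, and so it is a bijection onto $W_y$. This shows that the first case in the definition of $\mathcal P^{-1}$ is well defined: whenever some $w$ exists, it is unique, and $\pi^{-1}(y,w)$ is an unambiguous output.

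Next I will prove $\mathcal P^{-1}(\mathcal P(x))=x$. Fix $x$, and write $\pi(x)=(y,w)$ with $y=H(x)$ and $w=J(x)$. Then $\mathcal P(x)=(y,u)$ with $u=A_yw+b_y$. So $u$ lies in $W_y$, and by the uniqueness above the $w$ recovered by $\mathcal P^{-1}$ is exactly $J(x)$. Since $\pi$ is a permutation, $\mathcal P^{-1}(y,u)=\pi^{-1}(y,J(x))=\pi^{-1}(\pi(x))=x$.

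Finally, for the equivalence, I will read off the case split in $\mathcal P^{-1}$. The output is not $\perp$ exactly when some $w\in\mathbb Z_2^{n-r}$ satisfies $u=A_yw+b_y$. That condition says $u\in b_y+\mathrm{colspan}(A_y)=b_y+S_y=W_y$ by \eqref{eq:sz-cosets}.

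No step is a real obstacle. The only point needing care is noting that "full rank" means full column rank because $k\ge n-r$; without this, $\mathcal P^{-1}$ would not be well defined.
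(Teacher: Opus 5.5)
Your proposal is correct and follows the same route as the paper. The paper justifies the fact in one line (full column rank makes $w$ unique, and the two consequences then follow from the definitions of $\mathcal P$ and $\mathcal P^{-1}$); your argument spells out exactly that, including the point that $k=n\ge n-r$ is what makes ``full rank'' mean full column rank.
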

\end{construction}
\begin{definition}
\normalfont
    For each $y$, the corresponding coset superposition state is:
\begin{equation}\label{eq:sz-coset-state}
 \ket{W_y}
 =2^{-(n-r)/2}\sum_{w\in\mathbb Z_2^{n-r}}\ket{A_yw+b_y}
 =2^{-(n-r)/2}\sum_{u\in W_y}\ket u.
\end{equation}
\end{definition}

The following procedure describes, how using the oracles $\mathcal P$ and $\mathcal P^{-1}$, one can prepare the coset superposition states. 
\begin{construction}[Coset Superposition State Preparation,
{\cite[Construction~31]{SZ25}}]\label{fact:sz-generation}
\normalfont
For registers $X,Y,U$ of $n,r,k$ qubits, respectively, we apply the following sequence of procedures in the the specified registers,
\begin{align*}
 &\ket{0^n}_X\ket{0^r}_Y\ket{0^k}_U\\
 &\xrightarrow{H_X^{\otimes n}}
 2^{-n/2}\sum_x\ket x_X\ket{0^r}_Y\ket{0^k}_U\\
 &\xrightarrow{\text{Query }{\mathcal P} \text{ from } X \text{ to } YU}
 2^{-n/2}\sum_x\ket x_X\ket{H(x)}_Y
                       \ket{A_{H(x)}J(x)+b_{H(x)}}_U\\
 &\xrightarrow{\text{Query }{\mathcal{P}^{-1}} \text{ from } YU \text{ to } X}
 2^{-n/2}\sum_x\ket{0^n}_X\ket{H(x)}_Y
                       \ket{A_{H(x)}J(x)+b_{H(x)}}_U\\
 &=\ket{0^n}_X\otimes
   2^{-n/2}\sum_{y\in\{0,1\}^r}
                  \sum_{w\in\mathbb Z_2^{n-r}}
                  \ket y_Y\ket{A_yw+b_y}_U\\
 &=\ket{0^n}_X\otimes
   2^{-r/2}\sum_{y\in\{0,1\}^r}\ket y_Y\ket{W_y}_U.
\end{align*}
Measuring $Y$ gives
\begin{equation}\label{eq:sz-generation-output}
 \Pr[Y=y]=2^{-r},
 \qquad \text{ and the state } \rho_{U\mid Y=y}=\proj{W_y}.
\end{equation}
\end{construction}

\begin{construction}[One-bit signatures]\label{con:one-bit-oss}
\normalfont
Let the message space is $\{0,1\}$.
We use the signing algorithm to be the repeated trial algorithm of \cite[Section~6.2]{CGS26}.
\begin{enumerate}[label=(\roman*),leftmargin=*]
 \item $\OSSGenOracles(1^\lambda)$:
 sample and return
 $\mathcal O=(\mathcal P,\mathcal P^{-1},\mathcal D)$
 according to \Cref{con:sz-source}.

 \item $\OSSGenQKey^{\mathcal O}(1^\lambda)$:
 execute \Cref{fact:sz-generation} and measure the register $Y$.
 Output $(y,R)$, where the signing key $R$ is taken to be the register $U$.
 Thus the verification key $y$ is uniform on $\{0,1\}^r$ and,
 corresponding to $y$, the signing key $R$ is $\proj{W_y}$.

 \item $\OSSSign^{\mathcal O}(y,R,b)$:
 for $j=1,\ldots,\lambda$, execute:
 \begin{enumerate}[label=(\alph*),leftmargin=*]
  \item Measure the first qubit of $R$ in the computational
  basis, obtaining $c$. If $c=b$, measure the remaining
  qubits, obtaining $u'\in\{0,1\}^{k-1}$, and return
  $u=b\Vert u'$.

  \item If $c\ne b$, apply $H^{\otimes k}$ to $R$ and
  evaluate $\mathcal D(y,\cdot)$ into a fresh qubit:
  \[
   \ket v_R\ket0_B
   \longmapsto\ket v_R\ket{\mathcal D(y,v)}_B.
  \]
  Measure the register $B$, apply $H^{\otimes k}$ to $R$, and continue to the next iteration.
 \end{enumerate}
 Return $\perp$ if all $\lambda$ attempts fail.

 \item $\OSSVer^{\mathcal O}(y,b,u)$:
 Reject unless
 $(y,b,u)\in\{0,1\}^r\times\{0,1\}\times\{0,1\}^k$.
 Otherwise output
 \begin{equation}\label{eq:one-bit-verification}
  \OSSVer^{\mathcal O}(y,b,u)
  =\ind[\mathcal P^{-1}(y,u)\ne\perp]\wedge\ind[u_1=b].
 \end{equation}
\end{enumerate}
\end{construction}

The following is the affine-coset version of the standard
hidden-subspace verification test of ~\cite{AC12}. For, sake of completeness, We include a short proof
in our notation.
\begin{lemma}\label{lem:token-test}
Fix the oracles in \Cref{con:sz-source} and a verification key $y$. 
Consider the membership testing projectors
\(
 \Theta_y=\sum_{u\in W_y}\proj u,\:\:
 Q_y=\sum_{v\in S_y^\perp}\proj v,
\)
and the procedure
\(Q'_y=H^{\otimes k}Q_yH^{\otimes k}.\)
Then, for every $u\in\mathbb Z_2^k$,
\begin{equation}\label{eq:coset-averaging}
 Q'_y\ket u
 =2^{-(n-r)}\sum_{w\in u+S_y}\ket w.
\end{equation}
Moreover,
\begin{equation}\label{eq:coset-projector}
 \Pi_y:=Q'_y\Theta_y=\proj{W_y}.
\end{equation}

\end{lemma}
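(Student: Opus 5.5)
We prove \eqref{eq:coset-averaging} by a direct Fourier computation, and then obtain \eqref{eq:coset-projector} by applying \eqref{eq:coset-averaging} to basis vectors in $W_y$. Throughout, write $d=\dim S_y=n-r$; this holds because $A_y$ has full column rank. Then $|S_y^\perp|=2^{k-d}$.

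\textbf{Step 1: the averaging identity.} Fix $u\in\mathbb Z_2^k$. Expanding the definitions,
\[
 Q'_y\ket u=H^{\otimes k}Q_yH^{\otimes k}\ket u
 =2^{-k}\sum_{w\in\mathbb Z_2^k}\Big(\sum_{v\in S_y^\perp}(-1)^{v\cdot(u+w)}\Big)\ket w .
\]
The inner character sum over the subgroup $S_y^\perp$ equals $|S_y^\perp|=2^{k-d}$ when $u+w\in(S_y^\perp)^\perp=S_y$, and equals $0$ otherwise. This uses the standard fact $(S^\perp)^\perp=S$ over $\mathbb Z_2^k$ together with orthogonality of characters. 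The coefficient of $\ket w$ is therefore $2^{-k}\cdot 2^{k-d}=2^{-d}$ exactly when $w\in u+S_y$, which gives \eqref{eq:coset-averaging}.

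\textbf{Step 2: the product $Q'_y\Theta_y$.} We evaluate it on a basis vector $\ket u$.

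If $u\notin W_y$, then $\Theta_y\ket u=0$.

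If $u\in W_y$, then $u+S_y=W_y$, so by Step 1
\[
 Q'_y\Theta_y\ket u=2^{-d}\sum_{w\in W_y}\ket w=2^{-d/2}\ket{W_y}.
\]
Since $|W_y|=2^d$, we also have $\ket{W_y}\!\braket{W_y|u}=2^{-d/2}\ket{W_y}\,\ind[u\in W_y]$. The two operators therefore agree on every basis vector, which gives \eqref{eq:coset-projector}.

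A side remark: $\Theta_y$ and $Q'_y$ commute, since $Q'_y$ maps span$\{\ket u:u\in W_y\}$ into itself and also preserves its complement. This is why the product $Q'_y\Theta_y$ is itself a projector. The argument above does not depend on this.

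\textbf{Expected difficulty.} There is no real obstacle here. The only points needing care are the dimension count $\dim S_y=n-r$, which rests on full column rank and fixes the normalization $2^{-(n-r)}$, and the identity $(S_y^\perp)^\perp=S_y$ used in the character sum.
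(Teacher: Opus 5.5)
Your proof is correct and follows the same route as the paper: the same character-sum computation of the amplitude of $\ket w$ in $Q'_y\ket u$, followed by restricting to $u\in W_y$, where $u+S_y=W_y$. You make explicit the facts $|S_y^\perp|=2^{k-(n-r)}$ and $(S_y^\perp)^\perp=S_y$, which the paper leaves implicit, and you check $Q'_y\Theta_y=\proj{W_y}$ on basis vectors instead of writing the operator sum $2^{-(n-r)}\sum_{u,w\in W_y}\ket w\!\bra u$ directly; both differences are cosmetic.
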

\begin{proof}
The amplitude of $\ket w$ in $Q'_y\ket u$ is
\[
 2^{-k}\sum_{v\in S_y^\perp}(-1)^{v^{\mathsf T}(u+w)}
 =
 \begin{cases}
  2^{-(n-r)},&u+w\in S_y,\\
  0,&u+w\notin S_y.
 \end{cases}
\]
This proves \eqref{eq:coset-averaging}. For $u\in W_y$,
$u+S_y=W_y$, and hence
\[
 Q'_y\Theta_y
 =2^{-(n-r)}\sum_{u,w\in W_y}\ket w\!\bra u
 =\proj{W_y}.
\]
\end{proof}
\begin{remark}
\normalfont
    Thus, there is an efficient public test that accepts the coset superposition state $\ket{W_y}$:
measure $\{\Theta_y,I-\Theta_y\}$ using the predicate
$\ind[\mathcal P^{-1}(y,u)\ne\perp]$. On acceptance, apply $H^{\otimes k}$, measure
$\{Q_y,I-Q_y\}$ using $\mathcal D(y,\cdot)$,
Accept
if and only if both outcomes are $1$.
\end{remark}

We now state the correctness and security properties of 
\Cref{con:one-bit-oss}.

\begin{theorem}[\Cref{con:one-bit-oss} Correctness, adapted from {\cite[Theorem~11]{CGS26}}]\label{thm:one-bit-correctness}
\normalfont
For fixed oracles and $y\in\{0,1\}^r$, let
\[
 \epsilon_{y,b}
 =\Pr\!\left[
  \OSSVer^{\mathcal O}(y,b,u)=0:
  u\leftarrow\OSSSign^{\mathcal O}(y,\proj{W_y},b)
 \right].
\]
Let $e_1=(1,0,\ldots,0)\in\mathbb Z_2^k$. Then
\begin{equation}\label{eq:one-bit-pointwise-error}
 \epsilon_{y,b}=
 \begin{cases}
  2^{-\lambda},&e_1^{\mathsf T}A_y\ne0^{n-r},\\
  \ind[(b_y)_1\ne b],&e_1^{\mathsf T}A_y=0^{n-r}.
 \end{cases}
\end{equation}
In particular, \Cref{con:one-bit-oss} satisfies correctness with probability at least $1-e(\lambda)$ where $e(\lambda):=2^{-\lambda}+2^{-(n-r)}$.
\end{theorem}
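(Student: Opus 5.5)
The plan is to analyze $\OSSSign^{\mathcal O}(y,\proj{W_y},b)$ directly. I will show that the register $R$ always stays inside $\mathrm{span}\{\ket u:u\in W_y\}$, which reduces the error to counting failed iterations. Verification in \eqref{eq:one-bit-verification} accepts exactly when $u\in W_y$ and $u_1=b$. Any $u$ returned in step (a) has $u_1=c=b$ by construction. So it suffices to show two things: returned strings lie in $W_y$, and the probability of returning $\perp$ is as in \eqref{eq:one-bit-pointwise-error}.

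\emph{Degenerate case $e_1^{\mathsf T}A_y=0^{n-r}$.} Every $u=A_yw+b_y\in W_y$ has $u_1=(b_y)_1$. Measuring the first qubit of $\ket{W_y}$ therefore yields $(b_y)_1$ deterministically and leaves $\ket{W_y}$ unchanged. Step (b) applies $Q'_y$ or $I-Q'_y$ (from \Cref{lem:token-test}). Since $Q'_y\ket{W_y}=\ket{W_y}$ by \eqref{eq:coset-projector}, the $D$-measurement also returns $1$ deterministically and leaves $\ket{W_y}$ intact. Hence if $(b_y)_1=b$ we return a valid $u\in W_y$ in the first round. Otherwise all $\lambda$ rounds fail. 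This gives $\epsilon_{y,b}=\ind[(b_y)_1\ne b]$.

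\emph{Generic case $e_1^{\mathsf T}A_y\ne0$.} The map $w\mapsto(A_yw+b_y)_1$ is a nonconstant affine functional, so $W_y$ splits into two halves $W_y^0,W_y^1$ of equal size $2^{n-r-1}$. Write $\ket{W_y^c}$ for the uniform superposition over $W_y^c$. Then $\ket{W_y}=(\ket{W_y^0}+\ket{W_y^1})/\sqrt2$, and $\ket{W_y^-}:=(\ket{W_y^0}-\ket{W_y^1})/\sqrt2$ is the orthogonal vector in this two-dimensional space.

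The key computation uses \eqref{eq:coset-averaging}. Since $u+S_y=W_y$ for every $u\in W_y$, we get $Q'_y\ket{W_y^c}=\tfrac1{\sqrt2}\ket{W_y}$. Hence step (b) on the post-measurement state $\ket{W_y^c}$ produces, with probability $1/2$ each, either $\ket{W_y}$ or $\pm\ket{W_y^-}$. In both outcomes, measuring the first qubit in the next round gives $b$ with probability exactly $1/2$, and the post-measurement state is $\ket{W_y^b}$ or $\ket{W_y^{1-b}}$. The state space is therefore invariant: every round starts in $\mathrm{span}\{\ket{W_y^0},\ket{W_y^1}\}$ with equal weight on the two halves. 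Consequently each round succeeds independently with probability $1/2$, and on success the measured $u$ lies in $W_y^b\subseteq W_y$, which is valid. The probability that all $\lambda$ rounds fail is $2^{-\lambda}$. This establishes \eqref{eq:one-bit-pointwise-error}.

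\emph{Averaging over the oracle.} Averaging over $y$ and the oracle, the error is at most $2^{-\lambda}+\Pr[e_1^{\mathsf T}A_y=0]$. For a uniform full-rank $A_y\in\mathbb Z_2^{k\times(n-r)}$, I will count matrices whose first row is zero, i.e.\ full-rank $(k-1)\times(n-r)$ matrices. Column by column, the ratio of counts is
\[\prod_{i=0}^{n-r-1}\frac{2^{k-1}-2^i}{2^k-2^i}\le 2^{-(n-r)},\]
since each factor is at most $1/2$. This yields $e(\lambda)=2^{-\lambda}+2^{-(n-r)}$.

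The only step needing care is the generic-case invariance claim. One must check that the rejected branch $(I-Q'_y)$ of the dual test still has equal weight on both halves of $W_y$. The explicit identity $Q'_y\ket{W_y^c}=\tfrac1{\sqrt2}\ket{W_y}$ is what handles this.
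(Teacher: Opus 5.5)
Your proof is correct. The identity $Q'_y\ket{W_y^c}=\tfrac{1}{\sqrt2}\ket{W_y}$ keeps the register in $\mathrm{span}\{\ket{W_y^0},\ket{W_y^1}\}$ with equal weight on both halves, which gives exactly $2^{-\lambda}$ in the generic case, and the degenerate case is deterministic. The column-by-column count then bounds $\Pr[e_1^{\mathsf T}A_y=0]$ by $2^{-(n-r)}$. This step implicitly uses that the key $y$ from key generation is uniform and independent of the matrices, so $A_y$ is a uniform full-rank matrix.

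The paper gives no proof of its own and simply imports the statement from \cite[Theorem~11]{CGS26}, so your argument supplies a self-contained proof rather than following or departing from one in the paper.
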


\begin{theorem}[\Cref{con:one-bit-oss} security, {\cite[Theorem~30]{SZ25}}]\label{fact:sz}
\normalfont
For every quantum algorithm $\mathcal A$ making at most $q$
queries to the oracles of \Cref{con:sz-source},
\[
 \Pr\!\left[
 \begin{array}{c}
  (b_0,u_0)\ne(b_1,u_1)\ \wedge\\
  \OSSVer^{\mathcal O}(y,b_0,u_0)
  =\OSSVer^{\mathcal O}(y,b_1,u_1)=1
 \end{array}
 :
 \begin{array}{l}
  \mathcal O\leftarrow\OSSGenOracles(1^\lambda)\\
  (y,b_0,u_0,b_1,u_1)\leftarrow
       \mathcal A^{\mathcal O}(1^\lambda)
 \end{array}
 \right]
 \le O\!\left(\frac{\lambda^3k^3(q+2)^3}{2^\lambda}\right).
\]
If $q$ is polynomial in $\lambda$, we get strong OSS security as per \Cref{def:oracle-oss}.
\end{theorem}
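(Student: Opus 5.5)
This statement is imported from \cite[Theorem~30]{SZ25}, so the plan is not to reprove the oracle lower bound from scratch. Instead, I would check that our formalization of \Cref{con:one-bit-oss} matches the object analysed there, and then derive the ``strong OSS security'' consequence. First I would confirm that the oracle distribution of \Cref{con:sz-source} is exactly the SZ distribution: a uniform permutation $\pi$ split as $(H,J)$, independent uniform full-rank $A_y$ and shifts $b_y$, the parameters \eqref{eq:sz-parameters}, and the three public oracles $(\mathcal P,\mathcal P^{-1},\mathcal D)$ of \eqref{eq:SZ}. Second, I would check that our verification predicate \eqref{eq:one-bit-verification} coincides with the SZ acceptance condition. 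By the fact that full column rank makes $w$ unique, $\mathcal P^{-1}(y,u)\ne\perp$ holds iff $u\in W_y$, so a valid signature on $b$ is precisely an element of $W_y$ with first bit $b$. The replacement of the signing algorithm by the repeated-trial signer of \cite{CGS26} is irrelevant here, since security quantifies only over adversaries and the verifier.

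With the translation in place, the adversary's winning event is that it outputs $y$ together with two distinct pairs $(b_0,u_0)\neq(b_1,u_1)$ with $u_0,u_1\in W_y$ and $(u_i)_1=b_i$. Since $b_i$ is determined by $u_i$, this is the same as outputting two distinct elements $u_0\ne u_1$ of a single coset $W_y$. This is exactly the ``coset collision'' event bounded in \cite{SZ25} by $O(\lambda^3k^3(q+2)^3/2^\lambda)$ for $q$-query adversaries. If I wanted to expose the internal structure, I would recall the SZ strategy at a high level:
\begin{enumerate}
\item Replace $\pi$ and the hidden cosets by a simulation in which the adversary's view depends only on the relevant subspaces.
\item Argue via a dual-oracle/subspace-hiding argument that producing two coset elements is equivalent to finding a nonzero vector of $S_y$.
\item Bound the probability of finding such a vector, given only membership access to $S_y^\perp$, by a duality-based induction on the number of queries.
\end{enumerate}
That argument is where all of the difficulty lies, and I would defer to \cite{SZ25} for it rather than rederive it.

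The final step is the asymptotic consequence. For $q=\poly(\lambda)$ we have $k=n=O(\lambda^2)$, so the bound is $\poly(\lambda)/2^\lambda=\negl(\lambda)$. Since the winning condition $(b_0,u_0)\ne(b_1,u_1)$ subsumes $b_0\neq b_1$, this yields both one-shot security and strong one-shot security in the sense of \Cref{def:oracle-oss} for the message space $\{0,1\}$.

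The main obstacle I expect is not this last calculation but the bookkeeping in the first step. One subtlety is making sure the query count convention matches: SZ count queries to all three oracles, and a query to $\mathcal D$ or $\mathcal P^{-1}$ is one query, which is why the bound has the form $(q+2)^3$. The other is ruling out any mismatch from requiring full-rank matrices and the specific choice $k=n$; these are exactly the choices of \cite[Construction~28]{SZ25}.
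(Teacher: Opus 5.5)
Your proposal takes the same approach as the paper: the paper also gives no proof and imports the bound from \cite[Theorem~30]{SZ25}. Your checks are exactly what the ``strong OSS security'' consequence needs: that \Cref{con:sz-source} and \eqref{eq:one-bit-verification} match SZ's setting, that the winning event is just $u_0\ne u_1$ in a common $W_y$, and that $k=n=O(\lambda^2)$ makes the bound negligible. A minor aside that does not affect this: the $+2$ in $(q+2)^3$ is most plausibly due to the two $\mathcal P^{-1}$ verification queries on the output pairs, not to how individual adversary queries are counted.
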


\begin{remark}[Signing Unitary]\label{rem:coherent-signing}
\normalfont
Fix the oracles $\mathcal O$, a verification key $y$, and
a bit $b$. The signing algorithm admits a polynomial-time
unitary oracle implementation $U^{\mathcal O}_{y,b}$ on registers $RE\Sigma$,
where $R$ holds the signing state, $E$ contains all auxiliary
registers and measurement records, and $\Sigma$ holds the
signature. 
The basis state $\ket{\perp}_\Sigma$ denotes failure. 
The circuit $U^{\mathcal O}_{y,b}$ is obtained from
$\OSSSign^{\mathcal O}(y,\cdot,b)$ by \Cref{fact:coherent-implementation}. 
For density operator $\rho_R$ corresponding to the signing key and the candidate signature 
$u\in\{0,1\}^k\cup\{\perp\}$,
\begin{equation}\label{eq:coherent-signing-output}
 \Pr[\OSSSign^{\mathcal O}(y,\rho_R,b)=u]
 =
 \operatorname{Tr}\!\left[
  (I_{RE}\otimes\proj{u}_\Sigma)
  U^{\mathcal O}_{y,b}(\rho_R\otimes\proj{0}_{E\Sigma})
  {U^{\mathcal O}_{y,b}}^{\dagger}
 \right].
\end{equation}
\end{remark}
We now consider the extension of this one bit signing scheme to a long message signing scheme by signing bit by bit. We note, \cite{HV25, HSVZ26} gave more efficient and direct ways to achieve that. But, in this paper, we stick to the bit by bit signing paradigm.
\begin{construction}[Signatures for long messages]
\label{con:long-oss}
\normalfont
Set the message length $L=\poly(\lambda)$. Define
\[
 \mathsf{OSS}
 =(\OSSGenOracles,\OSSGenQKey^{\mathcal O},\OSSSign^{\mathcal O},\OSSVer^{\mathcal O})
\]
as follows. From \Cref{con:one-bit-oss}, we use the corresponding oracles for one bit scheme and compose them component wise.
\begin{enumerate}[label=(\roman*),leftmargin=*]
 \item $\OSSGenOracles(1^\lambda)$:
 sample independent instances
 $\mathcal O_c\leftarrow\OSSGenOracles(1^\lambda)$,
 $c\in[L]$, of \Cref{con:one-bit-oss}, and an independent
 random function
 \begin{equation}\label{eq:Digest-Oracle-for-long-message-OSS}
  \Dig:\{0,1\}^*\longrightarrow\{0,1\}^{L}.
 \end{equation}
 Return the joint oracle
 $\mathcal O=(\mathcal O_1,\ldots,\mathcal O_L,\Dig)$.

 \item $\OSSGenQKey^{\mathcal O}(1^\lambda)$:
 independently generate
 \[
  (a_c,R_c)\leftarrow
       \OSSGenQKey^{\mathcal O_c}(1^\lambda),
       \qquad c\in[L],
 \]
 and output $(a,R_1\cdots R_L)$, where
 $a=(a_1,\ldots,a_L)$.

 \item $\OSSSign^{\mathcal O}(a,R_1\cdots R_L,M)$:
 compute $\Dig(M)=(b_1,\ldots,b_L)$, run
 \[
  \sigma_c\leftarrow
       \OSSSign^{\mathcal O_c}(a_c,R_c,b_c),
       \qquad c\in[L],
 \]
 and output $\sigma=(\sigma_1,\ldots,\sigma_L)$.

 \item $\OSSVer^{\mathcal O}(a,M,\sigma)$:
 reject malformed inputs. Otherwise, compute
 $\Dig(M)=(b_1,\ldots,b_L)$ and output
 \[
  \prod_{c=1}^L
       \OSSVer^{\mathcal O_c}(a_c,b_c,\sigma_c).
 \]
\end{enumerate}
Thus, for a fixed joint oracle $\mathcal{O}$ and message $M$ with $\Dig(M)=(b_1,\ldots,b_L)$, we have
\begin{equation}\label{eq:long}
\begin{aligned}
 &\text{the verification key }a=(a_1,\ldots,a_L),
 \qquad \text{the signing state }\ket{A_a}=\bigotimes_{c=1}^L\ket{W_{a_c}}, \\
 &\text{the signing unitary }U^{\mathcal{O}}_{a,M}=\bigotimes_{c=1}^L U^{\mathcal{O}_c}_{a_c,b_c},
 \qquad \text{ the joint projector }\Pi_a=\bigotimes_{c=1}^L\Pi_{a_c}=\proj{A_a},\\
 &\text{the signature }\sigma=(\sigma_1,\ldots,\sigma_L),
 \qquad\text{the verification procedure }\OSSVer^{\mathcal{O}}(a,M,\sigma)=\prod_{c=1}^L\OSSVer^{\mathcal{O}_c}(a_c,b_c,\sigma_c).
\end{aligned}
\end{equation}
\end{construction}

Combining the one-bit component guarantees as  given in
\Cref{thm:one-bit-correctness,fact:sz}, we get

\begin{theorem}[Long-message correctness and strong security,
adapted from {\cite[Theorem~30]{SZ25}} and
{\cite[Lemma~71]{HSVZ}}]
\label{lem:oss}
\normalfont
Let $p:\mathbb N\to\mathbb N$ be a polynomial in $\lambda$ and set the message space to be 
\(
 \mathcal M_\lambda=\{0,1\}^{\le p(\lambda)}.
\). Then, the scheme of \Cref{con:long-oss} satisfies
correctness and strong one-shot security as in
\Cref{def:oracle-oss} against every quantum adversary
making polynomially many total oracle queries for message space $\mathcal M_\lambda$.



\end{theorem}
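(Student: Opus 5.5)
We will derive both properties of \Cref{con:long-oss} from the one-bit guarantees in \Cref{thm:one-bit-correctness} and \Cref{fact:sz}, treating $\Dig$ as a random oracle.

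\textbf{Correctness.} Fix $M\in\mathcal M_\lambda$ and write $\Dig(M)=(b_1,\ldots,b_L)$. The $L$ components use independent oracle instances $\mathcal O_c$ and independent signing states. The signature fails to verify only if some component fails. By \Cref{thm:one-bit-correctness} and a union bound, the failure probability is at most $L\,e(\lambda)=L(2^{-\lambda}+2^{-(n-r)})$. This is negligible, since $L=\poly(\lambda)$ and $n-r=24\lambda$. The bits $b_c$ depend on $\Dig$, but $\Dig$ is independent of the $\mathcal O_c$, so the per-component bound applies for every fixed value of $b_c$.

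\textbf{Strong security.} Suppose $\mathcal A$ outputs $(a,M_0,\sigma_0,M_1,\sigma_1)$ with $(M_0,\sigma_0)\ne(M_1,\sigma_1)$ and both signatures verifying. We split into three cases.
\begin{enumerate}[label=(\roman*)]
\item $\Dig(M_0)\ne\Dig(M_1)$. Then some coordinate $c$ has $b^0_c\ne b^1_c$, so $(b^0_c,\sigma_{0,c})\ne(b^1_c,\sigma_{1,c})$.
\item $\Dig(M_0)=\Dig(M_1)$ and $\sigma_0\ne\sigma_1$. Then some coordinate $c$ has $\sigma_{0,c}\ne\sigma_{1,c}$.
\item $M_0\ne M_1$, $\Dig(M_0)=\Dig(M_1)$ and $\sigma_0=\sigma_1$. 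This is a collision of the random oracle $\Dig$.
\end{enumerate}

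Case (iii) occurs with probability $O(q^3/2^L)$ by the quantum collision bound for random functions. This is negligible provided $L\ge\lambda$, which we take as part of the parameter choice.

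In cases (i) and (ii), some coordinate $c$ carries two distinct valid one-bit pairs under $a_c$. The reduction $\mathcal B$ guesses $c^*\in[L]$ uniformly and embeds its one-bit challenge oracle as $\mathcal O_{c^*}$. It samples the other $\mathcal O_c$ itself. It simulates $\Dig$ with a $2q$-wise independent function, which is perfectly indistinguishable to a $q$-query adversary by Zhandry's theorem. If $\mathcal A$ succeeds, $\mathcal B$ recomputes the digests, outputs $(a_{c^*},b^0_{c^*},\sigma_{0,c^*},b^1_{c^*},\sigma_{1,c^*})$, and wins whenever its guess hits a differing coordinate.

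This loses only a factor $L$. So $\Pr[\text{(i) or (ii)}]\le L\cdot O(\lambda^3k^3(q+2)^3/2^\lambda)$, which is negligible. The messages have polynomially bounded length, so each verification costs polynomially many queries and the query counts stay polynomial.

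\textbf{Main obstacle.} The only delicate point is efficient simulation of $\Dig$ inside $\mathcal B$. Handling it via $2q$-wise independence, rather than sampling an exponential-size table, is what makes the reduction go through. Strictly, \Cref{fact:sz} is stated for query-bounded (not necessarily efficient) adversaries, so even an inefficient simulation of $\Dig$ would be acceptable.
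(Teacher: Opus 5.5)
Your proposal is correct and follows the paper's approach: the paper obtains the theorem by combining the one-bit guarantees of \Cref{thm:one-bit-correctness} and \Cref{fact:sz} componentwise (deferring to the cited works), and your union bound for correctness together with your coordinate-guessing reduction for strong security make that combination explicit. You also supply a step the paper leaves implicit, namely the quantum collision bound for $\Dig$ in case (iii), and you are right that it needs the digest length to be $\omega(\log\lambda)$; this holds for the choice $L=\lambda$ the paper uses later.
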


Finally, we state the \emph{delayed collapsing} property enjoyed by \Cref{con:one-bit-oss}.

\begin{definition}[Delayed Collapsing Poperty of \cref{con:one-bit-oss}, adapted from \cite{bartusek2025classical}, Theorem 6]
\normalfont
Consider the oracle distribution of \Cref{con:sz-source}. For a two-stage QPT adversary
$\mathcal A=(\mathcal A_0,\mathcal A_1)$ and
$\beta\in\{0,1\}$, define the experiment
$\mathsf{DCollapse}^{\mathcal A}_{\beta}(1^\lambda)$ as follows.

\begin{enumerate}[label=(\roman*),leftmargin=*]
    \item Sample
    \[
        \mathcal O=(\mathcal P,\mathcal P^{-1},\mathcal D)
        \leftarrow \OSSGenOracles(1^\lambda).
    \]

    \item Give the first stage access to all three oracles and run
    \[
        (y,b,U,Z)
        \leftarrow
        \mathcal A_0^{\mathcal P,\mathcal P^{-1},\mathcal D}
        (1^\lambda),
    \]
    where $y\in\{0,1\}^r$ and $b\in\{0,1\}$ are classical,
    $U$ is a $k$-qubit register, and $Z$ denotes arbitrary residual
    quantum side information.

    We require $\mathcal A$ to be valid, meaning that  measuring $U$ in the computational basis produces a string $u$ satisfying
    \[
        \OSSVer^{\mathcal O}(y,b,u)=1.
    \]

    \item If $\beta = 0$, do nothing. Otherwise, measure the register $U$ in the computational basis. Let the resulting state on the registers $U,Z$ be denoted by $\rho_{UZ}^{(\beta)}$.

    \item Give $(y,b,U,Z)$ in state $\rho^{(\beta)}_{UZ}$ to $\mathcal{A}_1$, which has access only to the public inversion
    oracles:
    \[
        d\leftarrow
        \mathcal A_1^{\mathcal P,\mathcal P^{-1}}
        (y,b,U,Z).
    \]
    Output $d$.
\end{enumerate}

The delayed-collapsing advantage of $\mathcal A$ is
\[
    \mathsf{Adv}_{\mathsf{DCollapse}}^{\mathcal A}(\lambda)
    :=
    \left|
        \Pr[
            \mathsf{DCollapse}^{\mathcal A}_{0}(1^\lambda)=1
        ]
        -
        \Pr[
            \mathsf{DCollapse}^{\mathcal A}_{1}(1^\lambda)=1
        ]
    \right|.
\]
We say that the one-bit OSS scheme is \emph{delayed collapsing}
if, for every valid two-stage QPT adversary
$\mathcal A=(\mathcal A_0,\mathcal A_1)$,
\[
    \mathsf{Adv}_{\mathsf{DCollapse}}^{\mathcal A}(\lambda)
    \leq \negl(\lambda).
\]
\end{definition}

Bartusek et al. \cite{bartusek2025classical} prove \cref{con:one-bit-oss} satisfies delayed collapsing.

\subsection{Quantum fire}\label{subsec:cgs}
We first start with the definition of Quantum fire in the classical oracle model as done in \cite{CGS26}.
\begin{definition}
\label{def:native-fire}
\normalfont
Let $\mu\in\mathbb N$ be the security parameter.
Quantum inputs and outputs are specified by registers
written in capital letters.
The states corresponding to these registers may be mixed.
A \emph{public-key quantum fire scheme} consists of an
oracle-generation procedure and four $\QPT$ algorithms
\[
 \mathsf{Fire}
 =(\FireGenOracles,\FireSetup,\FireSpark,\FireVer,\FireClone).
\]

\begin{enumerate}[label=\arabic*.,leftmargin=*,itemsep=5pt]

 \item $\mathcal O\leftarrow\FireGenOracles(1^\mu)$:
 an oracle-generation procedure that samples a collection
 $\mathcal O$ of classical oracles queryable in superposition.
 All algorithms and adversaries have query access to them.
 The sampled oracles remain fixed throughout each experiment.

 \item $(pk,sk)\leftarrow\FireSetup(1^\mu)$:
 this algorithm is run by the issuing authority.
 It takes the security parameter and outputs a classical
 public verification key $pk$ and a classical secret key $sk$.

\item $(s,F)\leftarrow\FireSpark(sk)$:
 this algorithm is run by the issuing authority.
 It takes the secret key and outputs a classical serial
 number $s$ and a quantum flame register $F$.
 Each invocation uses independent randomness.

 \item $b\leftarrow\FireVer(pk,s,F),\quad b\in\{0,1\}$:
 this public algorithm verifies an alleged flame $F$
 under serial number $s$ and public key $pk$.

 \item $(F_0,F_1)\leftarrow\FireClone(pk,s,F)$:
 this public algorithm returns two flame registers
 $F_0,F_1$, associated with the serial number $s$.

\end{enumerate}
We require the following two correctness guarantees
from the scheme.

\medskip
\noindent\textbf{Verification correctness.}
We require
\begin{equation}
\label{eq:cgs-fire-verification-definition}
 \Pr\biggl[
  \FireVer(pk,s,F)=1:
  \substack{
   \mathcal O\leftarrow\FireGenOracles(1^\mu)\\
   (pk,sk)\leftarrow\FireSetup(1^\mu)\\
   (s,F)\leftarrow\FireSpark(sk)
  }
 \biggr]
 \geq 1-\negl(\mu).
\end{equation}

\medskip
\noindent\textbf{Cloning correctness.}
For every polynomially bounded $q=q(\mu)$, sample
\[
 \begin{aligned}
  \mathcal O&\leftarrow\FireGenOracles(1^\mu),\\
  (pk,sk)&\leftarrow\FireSetup(1^\mu),\\
  (s,F_0)&\leftarrow\FireSpark(sk).
 \end{aligned}
\]
For $i=1,\ldots,q(\mu)$, execute
\[
 (F_{i-1},F_i)
 \leftarrow
 \FireClone(pk,s,F_{i-1}).
\]
The assignment to $F_{i-1}$ replaces the input register
by the first returned flame register.
After all cloning operations, check
\[
 b_i\leftarrow\FireVer(pk,s,F_i),
 \qquad i=0,\ldots,q(\mu).
\]
We require
\begin{equation}
\label{eq:cgs-fire-cloning-definition}
 \Pr\left[
  \bigwedge_{i=0}^{q(\mu)}(b_i=1)
 \right]
 \geq 1-\negl(\mu).
\end{equation}
\end{definition}
All probabilities include oracle generation randomness,
algorithmic randomness and measurement outcomes.
\\~\\
We also require the following untelegraphability property.
\begin{definition}
\label{def:native-fire-untelegraphability}
\normalfont
Let
\(
 \mathcal A^{\mathcal O}
 =(\mathcal A^{\mathcal O}_1,\mathcal A^{\mathcal O}_2)
\)
be a pair of quantum adversaries making polynomially many
queries to $\mathcal O$ in total, across both parties and
the entire experiment.
Their initial joint state is separable across the two
parties and independent of the sampled oracles and the
challenger's setup randomness.
Shared classical randomness is permitted.

Define the game
\[
 \mathsf{TELEGRAPH}^{\mathsf{Fire}}_{\mathcal A^{\mathcal O}}
 (1^\mu)
\]
between a challenger and an adversary as follows.

\begin{enumerate}[label=\arabic*.,leftmargin=*,itemsep=6pt]

 \item The challenger samples
 \(
  \mathcal O\leftarrow\FireGenOracles(1^\mu),
  \:\:
  (pk,sk)\leftarrow\FireSetup(1^\mu),
 \)
 and gives $pk$ to both parties.

 \item For polynomially many rounds,
 $\mathcal A^{\mathcal O}_1$ may submit a spark query.
 For each adaptive query, the challenger independently samples
 \(
  (s,F)\leftarrow\FireSpark(sk)
 \)
 and gives $(s,F)$ to $\mathcal A^{\mathcal O}_1$.
 No flame register is supplied to $\mathcal A^{\mathcal O}_2$.

 \item $\mathcal A^{\mathcal O}_2$ sends a classical message
 $\mathcal M$ to $\mathcal A^{\mathcal O}_1$.

 \item Using $\mathcal M$ and its private state,
 $\mathcal A^{\mathcal O}_1$ sends a classical string $L$
 to $\mathcal A^{\mathcal O}_2$.

 \item Using $L$ and its retained state,
 $\mathcal A^{\mathcal O}_2$ outputs a classical serial
 number $s^*$ and a quantum register $F^*$.

 \item The challenger verifies
 \[
  b\leftarrow\FireVer(pk,s^*,F^*)
 \]
 and outputs $b$.
 The serial number $s^*$ need not have appeared during
 the spark-query phase.

\end{enumerate}

The scheme satisfies \emph{untelegraphability} if, for every
such adversary $\mathcal A^{\mathcal O}$ making polynomially
many oracle queries,
\begin{equation}
\label{eq:cgs-fire-security-definition}
 \Pr\bigl[
  \mathsf{TELEGRAPH}^{\mathsf{Fire}}_{\mathcal A^{\mathcal O}}
  (1^\mu)=1
 \bigr]
 \leq\negl(\mu).
\end{equation}

\end{definition}
All probabilities include oracle generation randomness,
algorithmic randomness and measurement outcomes.
\\
\\
Now, we describe the oracle structure and other tools required for the Quantum Fire construction given in \cite{CGS26} followed by the original construction.

\begin{construction}[Oracle collections, adapted from  
{\cite[Sections~6.2]{CGS26}}]
\label{con:cgs-source}
\label{con:cgs-oracles}
\label{fact:cgs-forward-check}
\normalfont
\textbf{Underlying OSS oracles.}
Sample an independent instance of oracle tuples as in \Cref{con:sz-source}
with security parameter $\mu$, i.e.,
\begin{equation}\label{eq:native-parameters}
 r=16\mu(\mu-1),\qquad
 n=r+24\mu,\qquad
 k=n.
\end{equation}
Consider, the predicate below derived
from these source oracles. For $y\in\{0,1\}^r$, $b\in\{0,1\}$,
and $u\in\{0,1\}^k$, set
\begin{equation}\label{eq:native-split-oracles}
\begin{aligned}
 \mathcal D_0(y,b,u)
   &=\ind[u\in W_y]\wedge\ind[u_1=b].
\end{aligned}
\end{equation}
Thus, $\mathcal D_0$ is the
one-bit verification predicate of
\eqref{eq:one-bit-verification}.

\par\smallskip\noindent
\textbf{A Common Oracle Interface: }
We combine the four oracles
$\mathcal P,\mathcal P^{-1},\mathcal D,\mathcal D_0$
into a single classical oracle $\mathcal G$.
An input to $\mathcal G$ specifies the operation and its
arguments. Accordingly, we define the query set
\[
\begin{aligned}
 \mathcal Q
 =&
 \bigl(\{\texttt{forward}\}\times\{0,1\}^{n}\bigr)\\
 &\sqcup
 \bigl(\{\texttt{inverse},\texttt{dual}\}
       \times\{0,1\}^{r}\times\{0,1\}^{k}\bigr)\\
 &\sqcup
 \bigl(\{\texttt{verify}\}
       \times\{0,1\}^{r}\times\{0,1\}
       \times\{0,1\}^{k}\bigr).
\end{aligned}
\]
To represent every query by a string of the same length,
choose a polynomially bounded $\ell\ge r+k+3$, put
$\mathcal Z=\{0,1\}^{\ell}$, and fix an injective encoding
\begin{equation}\label{eq:native-encoding}
 \enc:\mathcal Q\longrightarrow\mathcal Z.
\end{equation}
The encoding prepends two type bits to the arguments and
pads with zeros to length $\ell$.
The processes of encoding, decoding and checking if an encoding is valid 
are efficient.
\\
The oracle $\mathcal G$ evaluates the query operation as follows:
\begin{equation}\label{eq:native-dispatcher}
\begin{aligned}
 \mathcal G(\enc(\texttt{forward},x))
   &=\mathcal P(x),\\
 \mathcal G(\enc(\texttt{inverse},y,u))
   &=\mathcal P^{-1}(y,u),\\
 \mathcal G(\enc(\texttt{dual},y,u))
   &=\mathcal D(y,u),\\
 \mathcal G(\enc(\texttt{verify},y,b,u))
   &=\mathcal D_0(y,b,u).
\end{aligned}
\end{equation}
Inputs outside $\enc(\mathcal Q)$ return $\perp$.
\begin{fact}
\normalfont
    Direct access to $\mathcal G$ is reserved for the authority;
public access will be provided through the oracle $O_2$
defined below.
\end{fact}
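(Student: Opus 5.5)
This Fact is a convention of the construction rather than a mathematical claim, so what needs checking is that it is \emph{consistent}: the scheme must work, and the security statements must be meaningful, when $\mathcal G$ is reachable only by the authority (inside $\FireSetup$ and $\FireSpark$) and everyone else reaches it only through the tag-gated interface $O_2$. I would verify this in three steps, in order.

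\textbf{Well-definedness of $\mathcal G$.} The encoding $\enc$ of \eqref{eq:native-encoding} is injective, with two type bits and zero padding to length $\ell\ge r+k+3$. Hence every string in $\enc(\mathcal Q)$ decodes to a unique operation together with its arguments. So \eqref{eq:native-dispatcher}, with $\perp$ on strings outside $\enc(\mathcal Q)$, defines a single classical function that can be queried in superposition. The dispatch is an efficient classical circuit wrapped around one call to $\mathcal P$, $\mathcal P^{-1}$, $\mathcal D$ or $\mathcal D_0$. Therefore one query to $\mathcal G$ simulates one query to any of the four underlying oracles, and conversely.

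\textbf{Honest algorithms never need direct access.} I would go through $\FireVer$ and $\FireClone$ once $O_2$ is defined. For every $\mathcal G$-query these algorithms make, I would exhibit the coherent tag computation described in \Cref{sec:overview-torch}: sign $0$ and $1$ under $y$ via the signing unitary of \Cref{rem:coherent-signing}, query the hash oracles, query $O_2$, then uncompute. Two facts then apply. First, $O_2$ agrees with $\mathcal G$ whenever the tag is valid. Second, honest signatures fail only with probability $e(\lambda)$, by \Cref{thm:one-bit-correctness}. Together these show that each such query deviates from an ideal $\mathcal G$-query by $O(\sqrt{e(\lambda)})$ in norm. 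Applying \Cref{lem:bbbv-hybrid} over polynomially many queries then gives a negligible total deviation.

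\textbf{Security is stated against the restricted interface.} The only remaining check is that the untelegraphability game of \Cref{def:native-fire-untelegraphability} gives adversaries the public collection $\mathcal O$, which contains $O_2$ but not $\mathcal G$. This is exactly the setting of \cite{CGS26}, so their theorem applies unchanged.

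The main obstacle is the second step, specifically the per-query error bookkeeping. If that bookkeeping proves delicate, I would import the corresponding lemma of \cite{CGS26} directly rather than re-derive it.
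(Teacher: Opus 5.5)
This is indeed a stipulation rather than a claim: the paper gives no proof and simply builds it into \Cref{con:cgs-algorithms} (the authority uses $\mathcal G$ only in spark, while verification and cloning use only the public collection \eqref{eq:native-public-oracles}), with consistency supplied by importing \Cref{fact:cgs,rem:native-cloning,rem:native-coherent-tag} from \cite{CGS26} — exactly your fallback. Your self-contained second step is therefore optional; if you keep it, take the ideal-$\mathcal G$ execution as the reference in the hybrid, so that $\ket K$ is exactly intact before each tag-wrapped query, as the paper does for $\Gamma$ in \Cref{thm:correct}.
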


\par\smallskip\noindent
\textbf{Flame preparation.}
Recall that \Cref{fact:sz-generation} prepares a uniform
superposition of verification keys and their coset states.
Let \[\FireGen\] denote this circuit with $Y$ left unmeasured
and its queries to $\mathcal P,\mathcal P^{-1}$ implemented
by the corresponding queries to $\mathcal G$.
Its output on the register $YU$ is
\begin{equation}\label{eq:K}
 \ket K_{YU}
 =2^{-r/2}\sum_{y\in\{0,1\}^{r}}
       \ket y_Y\ket{W_y}_U,
\end{equation}
and its remaining workspace is returned to zero.
This state is going to be the flame state of the Fire scheme.

\par\smallskip\noindent
\textbf{Public Oracles: }
A public query to $\mathcal G$ must be accompanied by a
tag. To define these tags, independently sample random
functions
\[
 H_0,H_1:\{0,1\}^{r+\ell}\longrightarrow\{0,1\}^{\mu},
 \qquad
 h:\mathcal Z\longrightarrow\{0,1\}^{\mu}.
\]
For a query $z\in\mathcal Z$, a tag consists of a
verification key $y$ and two proposed hash values:
\[
 p=(y,\tau_0,\tau_1)
 \in\{0,1\}^{r}\times\{0,1\}^{\mu}
                  \times\{0,1\}^{\mu}.
\]
It is valid precisely when both hash values are correct.
Explicitly, we write $\FireCheck$ for this predicate:
\begin{equation}\label{eq:native-tag-check}
 \FireCheck(z,p)
 =\ind[\tau_0=H_0(y\Vert z)]
  \wedge
  \ind[\tau_1=H_1(y\Vert z)].
\end{equation}
The predicate is zero on malformed queries or tags.
It is used inside the public oracles and is not itself
a public oracle.

Together, we consider the following oracle collection:
\begin{equation}\label{eq:CGS-oracles}
\begin{aligned}
 O_b(y,u,z)
 &=
 \begin{cases}
  H_b(y\Vert z),&\mathcal D_0(y,b,u)=1,\\
  \perp,&\text{otherwise},
 \end{cases}
 &&b\in\{0,1\},\\[2pt]
 O_2(z,p)
 &=
 \begin{cases}
  \mathcal G(z),&\FireCheck(z,p)=1,\\
  \perp,&\text{otherwise},
 \end{cases}\\[2pt]
 N(z,p)
 &=
 \begin{cases}
  h(z),&\FireCheck(z,p)=1,\\
  \perp,&\text{otherwise},
 \end{cases}\\[2pt]
 V(z,\eta)&=\ind[\eta=h(z)].
\end{aligned}
\end{equation}
For each bit $b$, the oracle $O_b$ supplies the hash value
$H_b(y\Vert z)$ upon receiving a valid one-bit OSS
signature $u$ on $b$ under $y$.
A valid tag then permits $O_2$ to return $\mathcal G(z)$
and $N$ to return $h(z)$.
The oracle $V$ performs a check. 
Malformed inputs return $\perp$, except that $V$ returns
zero. \\
\textbf{The public oracle collection} is
\begin{equation}\label{eq:native-public-oracles}
 \mathcal O=(\mathcal D,O_0,O_1,O_2,N,V).
\end{equation}
The functions $\mathcal G,\mathcal D_0,H_0,H_1,h$ are not directly queryable for public.

\end{construction}

\begin{remark}
\normalfont
    We note that, the validity of a supplied tag can
be checked using the public oracle $O_2$ by submitting the encoding of a forward query:
\begin{equation}\label{eq:cgs-forward-check}
 \FireCheck(\enc(\texttt{forward},x),p)
 =
 \ind\!\left[
  O_2(\enc(\texttt{forward},x),p)\ne\perp
 \right].
\end{equation}
Indeed, $\mathcal P(x)$ never returns $\perp$.
\end{remark}

\begin{remark}
\normalfont
    Now, to fully describe the Fire construction, we need to build the Tag generation machinery explicitly. Write $F=YU$ for the flame register.
The algorithms use a common procedure $\FireTag$ to obtain
the two hash responses needed for a tag in registers $R_0$ and $R_1$. For sake of continuity, we defer the detailed implementation of $\FireTag$ to \Cref{sec:coherent-tag-generation}.
\end{remark}

\par\smallskip\noindent
\textbf{Signing a classical query.}
Verification uses a classical request $z\in\mathcal Z$
and tests a response against $V$. For the verification procedure only, consider 
the routine $\FireHash(F,z)$. This obtains a response by
forming a tag and submitting it to $N$:
\begin{enumerate}[label=(\roman*),leftmargin=*,itemsep=3pt]
 \item Measure $Y$, obtaining $y$.
 Initialize $Z$ to $\ket z$ and all auxiliary registers
 to zero.

 \item Apply $\FireTag$ and measure $R_0,R_1$,
 obtaining $\tau_0,\tau_1$.

 \item Return $N(z,(y,\tau_0,\tau_1))$.
\end{enumerate}
Malformed tags give output $\perp$. We note, measuring the register $Y$, potentially destroys the alleged flame. But this is compatible with the definition \Cref{def:native-fire} where verification only returns a bit. Although, one party interested in maintaining the flame can keep a cloned copy of it.
\\~\\
We now are ready to describe the full Quantum Fire scheme of \cite{CGS}. Throughout this paper, we shall call this scheme \emph{CGS fire} and the fire state as \emph{CGS flame}.
\begin{construction}[Quantum Fire, adapted from 
{\cite[Section~2.2]{CGS}}]
\label{con:cgs-algorithms}
\label{con:native-test}
\label{fact:native-effect}
\normalfont
\par\smallskip\noindent
 
The authority
has direct access to $\mathcal G$ for spark, while
verification and cloning use only $\mathcal O$.
The procedures of $\mathsf{Fire}_{\mathrm{CGS}}$ are as follows.
\begin{enumerate}[label=\arabic*.,leftmargin=*,itemsep=5pt]
 \item $\mathcal O\leftarrow\FireGenOracles(1^\mu)$:
 sample the oracles as in \Cref{con:cgs-source} and return
 the public collection \eqref{eq:native-public-oracles}.
 Retain private access to $\mathcal G$ for the authority.

 \item $(pk,sk)\leftarrow\FireSetup(1^\mu)$:
 Sample an instance label $s\leftarrow\{0,1\}^{\mu}$,
 independently of the sampled oracles, and set $pk=s$. The
 authority's private oracle $\mathcal G$ is going to serve the purpose of its secret key $sk$.
 The oracles remain fixed throughout the rest of the experiment.

 \item $(s,F)\leftarrow\FireSpark(sk)$:
 run $\FireGen$ (recall \Cref{eq:K}) on fresh zero registers and return
 \[
  (s,F),\qquad F=YU,\qquad \rho_F=\proj K.
 \]

 \item $b\leftarrow\FireVer(pk,s,F)$:
 reject unless $s=pk$ and $F$ has the prescribed size.
 Otherwise sample $z\leftarrow\mathcal Z$ uniformly,
 compute $\eta\leftarrow\FireHash(F,z)$, and return
 $b=V(z,\eta)$.

 \item $(F_0,F_1)\leftarrow\FireClone(pk,s,F)$:
on malformed inputs, return two zero registers of the
prescribed flame size.

We write $F=YU$ for the input flame.
The intention is to prepare a new flame on fresh registers by following
the circuit of \Cref{fact:sz-generation}, leaving its
verification-key register $Y$ unmeasured. but note to that, one needs access to the oracle $\mathcal G$ which is not available publicly. Instead, we 
implement its forward and inverse oracle calls using
the public procedure below.

\begin{equation}\label{eq:native-generation-query}
 \widetilde U_{\mathcal G}^{\mathcal O}
 =(\FireTag)^\dagger U_{O_2}\FireTag.
\end{equation}

where $U_{O_2}$ is the unitary implementing the oracle $O_2$. The details are given in, \Cref{sec:flame-cloning-details}.

After completing the preparation circuit, return the
original flame register as $F_0$ and the newly prepared
flame register as $F_1$.
Discard all remaining workspace.
\end{enumerate}
\end{construction}

Now, We state the correctness and security guarantees of CGS fire
at security parameter $\mu$.

\begin{theorem}[Correctness and untelegraphability,
adapted from {\cite[Theorems~16--17]{CGS}}]
\label{fact:cgs}
\label{cor:cgs-fire}
\label{thm:cgs-correctness}
\normalfont
Consider \Cref{con:cgs-algorithms} with the oracles of
\Cref{con:cgs-source} at security parameter $\mu$.
The construction satisfies verification correctness and cloning
correctness in \Cref{def:native-fire}, and untelegraphability in
\Cref{def:native-fire-untelegraphability}.

\end{theorem}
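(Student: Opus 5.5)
The plan is to treat \Cref{fact:cgs} mainly as a transfer of \cite[Theorems~16--17]{CGS}. The work then splits in two. First, check that \Cref{con:cgs-source} and \Cref{con:cgs-algorithms} define exactly the oracle distribution and algorithms of \cite{CGS}, up to syntactic repackaging: the dispatcher $\mathcal G$, the fixed-length encoding $\enc$, and the public collection \eqref{eq:native-public-oracles}. Second, re-derive the correctness bounds in our notation, since the cloning analysis of the later sections will reuse them quantitatively.

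For \emph{verification correctness}, I would first use \eqref{eq:K}: $\FireGen$ outputs exactly $\ket K$ with workspace returned to zero, by \Cref{fact:sz-generation}. Measuring $Y$ in $\FireHash$ then leaves $\ket{W_y}$ for a uniform $y$. The tag procedure $\FireTag$ signs $0$ and $1$ coherently via the unitary of \Cref{rem:coherent-signing}. Its failure probability is bounded through \eqref{eq:one-bit-pointwise-error}: averaged over $y$ it is at most $e(\mu)=2^{-\mu}+2^{-(n-r)}$, since $e_1^{\mathsf T}A_y=0$ happens with probability at most $2^{-(n-r)}$. On success, $O_b$ returns $H_b(y\Vert z)$, the tag is valid, $N$ returns $h(z)$, and $V$ accepts. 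So the rejection probability is $O(e(\mu))$, which is negligible.

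For \emph{cloning correctness}, I would compare the public procedure $\widetilde U^{\mathcal O}_{\mathcal G}$ of \eqref{eq:native-generation-query} with an ideal query to $\mathcal G$, acting on a register that holds $\ket K$. The key claim is that $\widetilde U^{\mathcal O}_{\mathcal G}$ differs from $U_{\mathcal G}$ on $\ket K\otimes\ket{\psi}$ by at most $2\sqrt{d}$ in Euclidean norm, where $d$ is the total signing failure probability. The argument is that on the branch where both signatures verify, $O_2$ agrees with $\mathcal G$ and $\FireTag^\dagger$ uncomputes perfectly; the bad branch has squared amplitude $d$ and contributes twice, once in the computation and once in the uncomputation. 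Cloning makes polynomially many such queries, and the $q$ successive clonings make polynomially many more. Summing the per-query errors by the triangle inequality (a BBBV-style hybrid as in \Cref{lem:bbbv-hybrid}) gives total trace distance $\poly(\mu)\sqrt{e(\mu)}$ from the ideal experiment. In the ideal experiment every $F_i$ equals $\ket K$ exactly, so the union bound with verification correctness finishes.

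For \emph{untelegraphability} I would not reprove anything; I would invoke \cite[Theorem~17]{CGS} directly. The only obligation is to confirm two things. First, our game \Cref{def:native-fire-untelegraphability} is no stronger than theirs: shared classical randomness is allowed, the initial state is separable, and query counts are polynomial. Second, our public interface gives the adversary no more power than theirs. Each query to our oracle set can be simulated by $O(1)$ queries to theirs and vice versa, so query bounds are preserved up to constants.

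I expect this interface comparison to be the main obstacle. In particular, the $\texttt{verify}$ branch of $\mathcal G$, which exposes $\mathcal D_0$ through $O_2$, and the public dual oracle $\mathcal D$ must both match what \cite{CGS} makes available. Otherwise the black-box citation does not apply and one would have to re-enter their compression/hybrid argument. If a mismatch appears, I would try first to show the extra oracle is simulable from the others, for example $\mathcal D_0$ via $\mathcal P^{-1}$-type access obtained through a tagged $\texttt{inverse}$ query.
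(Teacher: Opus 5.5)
Your plan matches the paper, which gives no independent proof of this statement: it imports it from \cite[Theorems~16--17]{CGS}, and likewise cites the stronger cloning and coherent-tag facts of \Cref{rem:native-cloning,rem:native-coherent-tag}. Your optional correctness re-derivation is the same compute--query--uncompute hybrid the paper itself runs for torch-fire in \Cref{lem:tf-row-access} and \Cref{thm:correct}. The one imprecision, which does not affect negligibility, is in your $2\sqrt d$ bound: if $d$ denotes the sum of the two signing failure probabilities on $\ket K$, the invalid-tag weight after $\FireTag$ is only $O(d)$, because $T_1^{\mathcal O}$ acts on the state already perturbed by $T_0^{\mathcal O}$, so the bound holds up to a constant factor.
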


The next two remarks record stronger correctness
guarantees for the cloning and coherent tag circuits.

\begin{remark}[Strong Cloning Correctness, adapted from {\cite[Section 9.4]{CGS}}]
\label{rem:native-cloning}
\normalfont
Sample $\mathcal O\leftarrow\FireGenOracles(1^\mu)$ and
$(pk,sk)\leftarrow\FireSetup(1^\mu)$, and put $s=pk$.
Let $\ket K$ be the honest CGS flame in \eqref{eq:K} for these
oracles. The strong cloning guarantee, in trace-distance form,
gives
\begin{equation}\label{eq:native-cloning-distance}
 \mathbb E\!\left[
  D( \FireClone(pk,s,\proj K),\proj K^{\otimes2})
 \right]
 \le\negl(\mu).
\end{equation}
The expectation is over oracle generation and setup.
\end{remark}

\begin{remark}[Coherent tag correctness,
adapted from {\cite[Section 9.4]{CGS}}]
\label{rem:native-coherent-tag}
\label{lem:cgs-tags}
\normalfont
For each oracle and setup sample in \Cref{rem:native-cloning},
let $\Pi$ be the diagonal projector on $(Z,Y,R_0,R_1)$
onto the basis values satisfying
\[
 \FireCheck(z,(y,\tau_0,\tau_1))=1.
\]
It acts as the identity on all other registers.
For every finite-dimensional register $E$, extend $\FireTag$
and $\Pi$ by the identity on $E$. Then
\begin{equation}\label{eq:native-coherent-tag}
 \mathbb E\!\left[
  \sup_{\|\ket\psi_{ZE}\|=1}
  \left\|
   (I-\Pi)\FireTag
   \bigl(\ket K\otimes\ket\psi_{ZE}\otimes\ket0\bigr)
  \right\|^2
 \right]
 \le\negl(\mu).
\end{equation}
Here $\FireTag$ is the circuit in \eqref{eq:native-tag-circuit},
and $\ket0$ comprises its response and signing workspace registers.
The expectation is over oracle generation and setup.
\end{remark}

\begin{remark}[Verification effect]
\label{rem:native-verification-effect}
\normalfont
Fix the oracles, $pk$, and $s=pk$.
There is an operator $B$ such that
\begin{equation}\label{eq:B}
 0\preceq B\preceq I,
 \qquad
 \Pr[\FireVer(pk,s,\rho_F)=1]=\operatorname{Tr}(B\rho_F)
\end{equation}
for every density operator $\rho_F$ on the prescribed flame register.
\end{remark}
\section{Quantum Fire with Classical Cloning Key}
\label{sec:classical-secret-key-Q-Fire}
In this section we define and present a privately verifiable variant of quantum fire that only allows delegated cloners with a classical token to sign. Cloning is impossible without such a token. We begin by formally defining this variant, as our syntax slightly deviates from that of \cref{def:native-fire}.

\begin{definition}[Privately Verifiable Quantum Fire with Classical Cloning Key]
    \normalfont
    Let $\mu \in \mathbb N$ be the security parameter. A \emph{Privately Verifiable Quantum Fire with Classical Cloning Key} Scheme consists of an oracle generation procedure and a tuple of three $\QPT$ algorithms \[
        \mathsf{PVQF} = (\mathsf{PVQF.GenOracles}, \mathsf{PVQF.Spark}, \mathsf{PVQF.Clone}, \mathsf{PVQF.Ver})
    \] which work as follows:

    \begin{enumerate}
        \item $\mathcal{O}, \aux \leftarrow \mathsf{PVQF.GenOracles}(1^{\mu}):$ an oracle-generation procedure that samples a collection $\mathcal{O}$ of classical oracles queryable in superposition.

        \item $s,F,sk_{clone},sk_{ver} \leftarrow \mathsf{PVQF.Spark}^{\mathcal{O}, \aux}(1^{\mu}):$ On input $1^{\mu}$, this algorithm outputs a flame state $F$ and serial number $s$, as well as secret keys $sk_{clone}$ and $sk_{ver}$ for cloning and verifying respectively.

        \item $F_0, F_1 \leftarrow \mathsf{PVQF.Clone}^{\mathcal{O}}(sk_{clone}, s, F)$: On input a secret cloning key $sk_{clone}$, serial number $s$, and flame state $F$, the cloning algorithm outputs $2$ flame registers $F_0, F_1$.

        \item $b \leftarrow \mathsf{PVQF.Ver}^{\mathcal{O}}(s, F, sk_{ver})$: On input a secret verification key $sk_{ver}$, a serial number $s$, and an alleged flame state $F$, outputs $0$ or $1$.
    \end{enumerate}
\end{definition}\label{def:QFire-Classical-Cloning-Key}

\noindent Verification correctness, cloning correctness, and untelegraphability are analogous to \cref{def:native-fire}.  

\begin{construction}[Privately Verifiable Quantum Fire with Classical Cloning Key]
\label{con:PVQF-algorithms}
\normalfont
\par\smallskip\noindent
We now proceed to give a construction for this variant, which is  essentially a slight modification of the quantum fire scheme given in \cite{CNS26}. Similar to their scheme, this construction only allows for $O(\log(\mu))$ many clones, where $\mu$ denotes the security parameter. Here, all OSS operations refer to the long message variant described in \cref{con:long-oss}.

\begin{enumerate}
    \item $\mathsf{PVQF.GenOracles}(1^{\mu})$: 
    \begin{itemize}
        \item Sample $P,P^{-1},D$ by running the oracle generator given in \cref{con:sz-source} on input $1^{\mu}$.

        \item Set $\mathcal{O} = P,P^{-1}$ and $\aux = D$. Output $\mathcal{O}, \aux$. 
    \end{itemize}

    \item $\mathsf{PVQF.Spark}^{\mathcal{O},\aux}(1^{\mu})$: 
        \begin{itemize}
            \item Parse $\mathcal{O}$ as $P,P^{-1}$ and run the key generation procedure

            \item Set the $\mathsf{OSS}$ public key as $pk$ and the $\mathsf{OSS}$ signing key as  $\ket{sk}$

            \item Set $s = pk$, $F = \ket{sk}$, $D = sk_{clone} = sk_{ver}$. Output $s,F,sk_{clone}, sk_{ver}$.
        \end{itemize}  

    \item $\mathsf{PVQF.Clone}^{\mathcal{O}}(sk_{clone}, s, F)$: 
        \begin{itemize}
            \item Parse $sk_{clone}$ as the oracle $D$ and parse $F$ as
            \[
                (\mathsf{history},pk_{\mathrm{cur}},
                \ket{sk_{\mathrm{cur}}}).
            \]

            \item Run
            \[
                (pk_0,\ket{sk_0}),
                (pk_1,\ket{sk_1})
                \leftarrow
                \mathsf{OSS.KeyGen}^{\otimes 2}(1^\lambda).
            \] 
            
            and compute 
            \[
                \sigma
                \leftarrow
                \mathsf{OSS.Sign}^{D}
                \left(
                    \ket{sk_{\mathrm{cur}}},
                    pk_0\Vert pk_1
                \right).
            \]

            \item Set
            \[
                F_0
                    :=
                    \left(
                    \mathsf{history}\Vert(pk_0,pk_1,\sigma),
                    pk_0,
                    \ket{sk_0}
                \right)
            \]
            and
            \[
                F_1
                    :=
                    \left(
                    \mathsf{history}\Vert(pk_0,pk_1,\sigma),
                    pk_1,
                    \ket{sk_1}
                    \right).
            \]

            \item Output $F_0, F_1$.
        \end{itemize}

    \item $\mathsf{PQVF.Verify}^{\mathcal{O}}(s,F,sk_{ver})$: 
        \begin{itemize}
            \item Parse $s := pk_{root}$, $\mathcal{O}:= P,P^{-1}$ and 
            
            \[
                F :=
                    \left(
                    \mathsf{history},
                    pk_{\mathrm{cur}},
                    \ket{sk_{\mathrm{cur}}}
                \right),
            \] where 
            \[
        \mathsf{history}
            =
                \left(
                pk_0^{(1)},pk_1^{(1)},\sigma_1,
                \ldots,
                pk_0^{(T)},pk_1^{(T)},\sigma_T
                \right).
            \]

            \item  Define
            \[
                pk_0^{(0)}=pk_1^{(0)}:=pk_{\mathrm{root}}.
            \] For every \(t\in\{1,\ldots,T\}\), compute
            \[
                a_t
                    :=
                    \mathsf{OSS.Verify}_{pk_0^{(t-1)}}
                    \left(
                     pk_0^{(t)}\Vert pk_1^{(t)},\sigma_t
                 \right)
                \vee
                    \mathsf{OSS.Verify}_{pk_1^{(t-1)}}
                    \left(
                        pk_0^{(t)}\Vert pk_1^{(t)},\sigma_t
                \right).
            \]

            \item If \(T=0\), reject unless
            \[
                pk_{\mathrm{cur}}=pk_{\mathrm{root}}.
            \]
            If \(T\geq 1\), reject unless
            \[
                pk_{\mathrm{cur}}=pk_0^{(T)}
                \quad\text{or}\quad
                pk_{\mathrm{cur}}=pk_1^{(T)}.
            \]

            \item Run the coset verification procedure outlined in \cref{lem:token-test} for all single bit OSS public key/secret key pairs in $(pk_{cur}, \ket{sk_{cur}})$. Set $a := 1$ if all these tests pass, otherwise set $a:= 0$.
                
            \item Output
                \[
                    a\wedge\left(\bigwedge_{t=1}^{T}a_t\right).
                \]
        \end{itemize}
     
\end{enumerate}
\end{construction}

Correctness follows from OSS correctness. 

\begin{theorem}[Security]
    \cref{con:PVQF-algorithms} is a $c\log(\mu)$-clonable untelegraphable quantum fire scheme that is uncloneable for non-delegated cloners. 
\end{theorem}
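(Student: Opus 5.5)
The plan is to prove the three properties separately: bounded cloneability, uncloneability for parties without $D$, and untelegraphability. Each follows the outline of \Cref{sec:overview-classical}.

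\emph{Bounded cloneability.} Cloning correctness for up to $c\log(\mu)$ generations follows from long-message OSS correctness (\Cref{lem:oss}). Each honest clone signs $pk_0\Vert pk_1$ with the current signing state and $D$. The fresh signing states $\ket{A_{pk_b}}$ pass the coset test of \Cref{lem:token-test} with certainty. A union bound over the polynomially many signatures along a chain of depth $c\log\mu$ then gives overall failure probability $\negl(\mu)$. Verification rejects histories longer than $c\log\mu$, which fixes the bound on the number of generations.

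\emph{Uncloneability without $D$.} Fix an adversary with access only to $P,P^{-1}$. It receives a flame, possibly produced by earlier honest clones that used $D$, and outputs two registers that both verify. I would proceed in four steps.
\begin{enumerate}
\item Weaken verification. Replace the projective test $\Pi_y=\proj{W_y}$ on the current signing state by the coset test $\Theta_y$ alone. Since $\Pi_y\preceq\Theta_y$, this only increases the adversary's success probability, and the modified experiment after the flame is handed over makes no calls to $D$.
\item Collapse the signing states. Use the delayed collapsing property of \cite{bartusek2025classical}, via a hybrid over each one-bit component of the current signing state. Any register measuring to a valid signature on some bit can be replaced by its computational-basis measurement. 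The flame was prepared with $D$, but everything afterwards uses only $P,P^{-1}$, so this applies. Since there are polynomially many components, the total loss is negligible.
\item Simulate in the OSS game. A measured $\ket{W_y}$ is a uniform element of $W_y$, which is exactly a signature on a uniform bit. The OSS reduction therefore pre-samples next-generation keys $pk_0,pk_1$, keeps their signing states, and hands the adversary a signature on $pk_0\Vert pk_1$ in place of the collapsed state. The reduction has $D$, since it plays the OSS game. The adversary's view is identical to the collapsed hybrid.
\item Extract a forgery from a successful output. Consider the extended entry $(\widehat{pk}_0,\widehat{pk}_1,\hat\sigma)$.
\begin{itemize}
\item If $\widehat{pk}_0\Vert\widehat{pk}_1\neq pk_0\Vert pk_1$, the messages differ in some digest bit, and \Cref{fact:sz} gives two valid signatures on distinct bits under one key.
\item Otherwise, both output flames hold registers passing the coset test for keys whose signing states the reduction retains. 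Measuring one component yields a signature on some bit $b$, and the reduction signs $1-b$ itself.
\item If both clones end at the same key, the two coset-valid registers directly give two signatures, which are distinct with overwhelming probability.
\end{itemize}
\end{enumerate}
I also have to handle an adversary that does not extend the chain. In that case it outputs two flames whose current key is the current key of the given flame, and two coset-valid registers for the same key again yield a strong forgery.

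\emph{Untelegraphability.} This follows \cite{CNS26}. From a telegraphing pair $(\mathcal A_1,\mathcal A_2)$, the reduction reruns the receiving side on the single classical message $L$ more than $2^{c\log\mu}=\poly(\mu)$ times, with shared randomness fixed. Each run yields a flame whose chain has depth at most $c\log\mu$, so it has at most $\poly(\mu)$ possible internal keys. By pigeonhole, either two runs end in the same terminal key, or one run's terminal key appears inside another run's history. In the first case, the two terminal signing states give two distinct signatures. In the second, the historic signature under that key plus a fresh signature on a different message from the terminal state give a forgery. Verification of each run's output is again relaxed to the coset test so that the reduction needs no extra access.

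\emph{Main obstacle.} I expect the collapsing hybrid to be the hardest step, in particular justifying that the whole $\mathcal A_1$ stage, including the coset-test verification and the OSS reduction's bookkeeping, lies in the $P,P^{-1}$-only world required by delayed collapsing. This needs care because the reduction in step 3 does use $D$. My plan is to order the hybrids so that collapsing is applied first, purely as an indistinguishability statement about the experiment's output bit, and only afterwards to switch to the OSS reduction, whose view is identical to the collapsed hybrid.
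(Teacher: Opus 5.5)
\textbf{Gap: the global relaxation in Step~1.} If you replace $\Pi_y=\proj{W_y}$ by the coset test for the entire winning event, the game becomes trivially winnable. The adversary measures $\ket{sk_y}$ in the computational basis, obtains $u$ with $u_\ell\in W_{y_\ell}$ for every $\ell$, and outputs $(\mathsf{history},y,\ket{u})$ twice. The chain checks pass, and both registers pass the coset test with probability $1$; under the real test they pass with probability $2^{-(n-r)L}$. So no later hybrid can yield a negligible bound.

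For the same reason, your claims in Step~4 are false. Two coset-valid registers for the same key need not ``yield a strong forgery'' and need not be ``distinct with overwhelming probability''. Distinctness of measured outcomes comes only from the Hadamard/$D$ half of the test, i.e.\ from the rank-one projector. After collapsing, the adversary even holds a classical string such as $\sigma^*$ and can simply copy it.

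The paper avoids this by doing the case analysis \emph{before} relaxing anything. Outputs with equal terminal keys, outputs ending at an old internal key, and outputs ending at an unused sibling leaf are all handled with the real projective test. The OSS reduction can run that test because the OSS game gives it $D$.
\begin{itemize}
\item Two accepted copies of $\ket{W_{y_\ell}}$ are used to sign $0$ and $1$.
\item An accepted state for an internal key signs the complement of the bit already signed in the history.
\item For an unused sibling, the reduction has retained the honest token and signs the other bit with it.
\end{itemize}
The coset relaxation and delayed collapsing are applied only to the event $\mathsf{E}_y$ that an accepted output contains a \emph{new authenticated transition out of $y$}. There the forgery comes either from the new signature under $y$ compared with $\sigma^*$, or from a coset-valid register for a fresh key $pk_a^*$ the adversary has never seen. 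Neither source needs distinctness.

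\textbf{Further points.} Your Step~4 assumes the new entry is signed under the current key $y$. A valid output may instead branch off the given history at an earlier key, or end at an old key other than $y$. You need the paper's ``first new authenticated transition'' argument for this. Your reduction must also retain the signing states of all unused siblings when it builds the history, not only those of $pk_0,pk_1$.

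The same relaxation also breaks your untelegraphability sketch. Under coset-test verification, $\mathcal A_1$ can just send a measured $u\in W_{pk_{\mathrm{root}}}$, and every rerun then outputs an identical classical register, so ``two runs end in the same terminal key'' yields no forgery. The relaxation is unnecessary there anyway, since the reduction is an OSS adversary with access to $D$. Keep the full projective test and sign with the accepted states.
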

\begin{proof}
    Untelegraphability follows from the same reasoning given for Construction 1a in \cite{CNS26}. Therefore, we now focus on proving unclonability for non-delegated cloners (i.e. ones without access to $sk_{clone}$).
    \\
    \\
    Let
    \[
    F=(\mathsf{history},y,\ket{sk_y})
    \]
    be the honestly generated flame given to \(\mathcal A\), where
    \[
    \mathsf{history}
    =
    \left(
      pk_0^{(1)},pk_1^{(1)},\sigma_1,
      \ldots,
      pk_0^{(T)},pk_1^{(T)},\sigma_T
    \right)
    \]
    is its classical signature chain, $y$ is its terminal public key, and
    $\ket{sk_y}$ is its terminal signing token.

    \noindent For every $t$, the signature \(\sigma_t\) authenticates the message
    \[
    M_t:=pk_0^{(t)}\Vert pk_1^{(t)}
    \]
    under one of the two public keys from the previous level. 
    \\
    \\
    \noindent Let $y_0, y_1$ be the terminal public keys at the end of the adversary's alleged clones. Let us condition on any of the following events occurring when running $\mathcal{A}$ on a flame state: namely that (1)$y_0 = y_1$, (2).$y_0 \neq y_1$ and one of these keys was used during a previous clone and is in $\mathsf{history}$, (3). $y_0 \neq y_1$ annd both public keys were leaves in $\mathsf{history}$ but were never used for a previous clone, (4). $y_0 \neq y_1$, neither were used for a previous clone, and at least of them is new (i.e. not in $\mathsf{history}$). In each case, we construct an adversary
    $(\mathcal B^{P,P^{-1},D})$ against $\mathsf{OSS}$ security.
    \\
    \\
    For each of the three cases, the reduction starts out the same. It honestly generates an input flame state. Whenever it applies
    the honest cloning algorithm and obtains two child signing keys, it continues the
    honest path using one child and privately retains the unused $\mathsf{OSS}$ key. It clones an arbitrary number (within the $O(\lambda)$ limit) of times. Let $F$ be the flame state the reduction passes to $\mathcal{A}^{P,P^{-1}}$. 
    \\
    \\
    \noindent Run
    \[
    (g_0,g_1)\leftarrow\mathcal A^{P,P^{-1}}(F).
    \]
    Parse the terminal public keys and alleged terminal tokens of the two outputs as
    \[
    (pk_0 = y_0,\rho_0)
    \qquad\text{and}\qquad
    (pk_1 = y_1,\rho_1).
    \]
    Write each string-signing public key as
    \[
    pk_i=(pk_{i,1},\ldots,pk_{i,L})
    \]
    and its corresponding signing token as an $L$-register state.
    \\
    \\
    Since \(g_0\) and \(g_1\) pass verification, their classical signature chains
    pass all checks and their terminal signing key registers pass the full
    coset verification procedure. Let
    \[
    \Pi_{y_i}
    =
    \bigotimes_{\ell=1}^{L}
    \ket{W_{y_{i,\ell}}}\!\bra{W_{y_{i,\ell}}}
    \]
    be the accepting projector for the OSS signing key under $y_i$. Conditioned on \cref{lem:token-test} accepting (recall we run this in flame state verification), the post-verification state is
    $\bigotimes_{\ell=1}^{L}\ket{W_{y_{i,\ell}}}$.
    \\
    \\
    \noindent We consider three cases.
    
    \begin{enumerate}
        \item \emph{The two output flames have the same terminal public key.}
    
        Suppose
        \[
        y := y_0 = y_1.
        \]
        Run the full token test on both alleged terminal tokens. Conditioned on both
        tests accepting, the reduction obtains two valid signing tokens under the
        same public key $y$.
    
        Fix any component \(\ell\in\{1,\ldots,L\}\). Let
        \[
        y:=y_{0,\ell}=y_{1,\ell}.
        \]
        Using the \(\ell\)-th component of the first accepted token, compute
        \[
        u_0
        \leftarrow
        \mathsf{Sign}^{P,P^{-1},D}
        \left(
          \ket{W_{y_0,\ell}},0
        \right).
        \]
        Using the \(\ell\)-th component of the second accepted token, compute
        \[
        u_1
        \leftarrow
        \mathsf{Sign}^{P,P^{-1},D}
        \left(
          \ket{W_{y_1, \ell}},1
        \right).
        \]
        By correctness,
        \[
        \mathsf{OSS.Verify}(y,0,u_0)=1
        \]
        and
        \[
        \mathsf{OSS.Verify}(y,1,u_1)=1.
        \]
        Therefore,
        \[
        (y,0,u_0,1,u_1)
        \]
        violates single-bit OSS unforgeability.
    
        \item \emph{One of the output flames terminates at an internal public key.}
    
        Suppose that, for some $i\in \{0,1\}$, the terminal public key $y_i$ is
        an internal public key in the graph determined by
        $\mathsf{history}$. Thus, $\mathsf{history}$ contains a message
        \[
        M=(M_1,\ldots,M_{L})
        \]
        and a signature
        \[
        \sigma=(u_1,\ldots,u_{L})
        \]
        satisfying
        \[
        \mathsf{OSS.Verify}(y_i, M,\sigma)=1.
        \]
        Componentwise, for every $\ell$, we have
        \[
        \mathsf{OSS.Verify}
        \left(
          y_{i,\ell},M_\ell,u_\ell
        \right)
        =1.
        \]
    
        Apply the full token test to \(\rho_i\). Conditioned on acceptance, its
        \(\ell\)-th component is the valid signing token
        \(\ket{W_{y_{i,\ell}}}\). Fix any component \(\ell\) and compute
        \[
        u'
        \leftarrow
        \mathsf{Sign}^{P,P^{-1},D}
        \left(
          \ket{W_{y_{i,\ell}}},
          1-M_\ell
        \right).
        \]
        Then
        \[
        \mathsf{OSS.Verify}
        \left(
          y_{i,\ell},1-M_\ell,u'
        \right)
        =1.
        \]
        Hence,
        \[
        \left(
          y_{i,\ell},
          M_\ell,u_\ell,
          1-M_\ell,u'
        \right)
        \]
        is a valid single-bit OSS forgery.
    
        \item \emph{The two output flames terminate at distinct leaf public keys.} 
    
        Suppose
        \[
        y_0\neq y_1
        \]
        and we are in case (3) described above.
    
        Note that $\mathcal{A}$ receives only one terminal public key/signing key pair. Let $y$ be the terminal public key. Then at least one of $y_0, y_1$ is different from $y$.
    
        Since $y_0\neq y_1$, at least one of the two output terminal keys differs from the input terminal key $y$. Choose $i^\star\in\{0,1\} \quad\text{such that}\quad y_{i^\star}\neq y.$
        
        Since $y_{i^\star}$ is an old leaf distinct from the current terminal key, it is an unused sibling leaf. When constructing the input flame, $\mathcal{B}$ generated and privately retained its honest signing token $\ket{sk_{y_{i^\star}}}$.
        \\
        \\
        Assuming the adversarial clone successfully verifies, we know $\rho_{i^*}$ is a valid signing key for $y_{i^*}$. Fix a
        component $\ell$, and write
        $y_{i^*}=(y_{i^*,1},\ldots,y_{i^*,L}).$
        Use the adversarial token component to compute
        \[
        u_0
        \leftarrow
        \mathsf{Sign}^{P,P^{-1},D}
        \left(
          \ket{W_{y_{i^*,\ell}}}^{\,\mathsf{adv}},0
        \right),
        \]
        and use the corresponding retained honest sibling-token component to compute
        \[
        u_1
        \leftarrow
        \mathsf{Sign}^{P,P^{-1},D}
        \left(
          \ket{W_{y_{i^*,\ell}}}^{\,\mathsf{honest}},1
        \right).
        \]
        Both signatures verify under the same single-bit public key:
        \[
        \mathsf{OSS.Verify}(y_{i^*,\ell},0,u_0)=1
        \]
        and
        \[
        \mathsf{OSS.Verify}(y_{i^*,\ell},1,u_1)=1.
        \]
        Thus,
        \[
        (y_{i^*,\ell},0,u_0,1,u_1)
        \]
        violates single-bit OSS unforgeability.
    \end{enumerate}
    
    \noindent Now we focus on the cases where the adversary returns two flame states where the terminal public keys are different from each other \emph{and} at least one of them has not appeared in $\mathsf{history}$ yet.
    \\
    \\
    Let the flame given to \(\mathcal A\) be
    $F=(\mathsf{history},y,\ket{sk_y})$, where $y=(y_1,\ldots,y_L), \ket{sk_y} = \bigotimes_{\ell=1}^{L}\ket{W_{y_\ell}}$.
    \\
    \\
    Choose an accepted output flame whose terminal public key is new, meaning that it did not appear in the flame before the flame was given to $\mathcal{A}.$ Traverse its authenticated chain from the root, and let $(p,\widehat M,\widehat\sigma)$
    be the first authenticated transition that does not occur in the original history, where a transition is identified by its parent key $p$ and signed message $\widehat M$. Because this is the first new transition, $p$ is a public key appearing in the original tree. If $p$ is an old internal key, then the original history already contains a signature under $p$ of a different message, immediately yielding an OSS forgery. If $p$ is an unused sibling leaf distinct from the current terminal key $y$, then $\mathcal{B}$ retained its honest signing token and can again obtain an OSS forgery.
    It remains to consider the case $p = y$, where $y$ is the current terminal public key of the flame given to $\mathcal{A}$. Although $y$ appears in the old tree, it has no old outgoing transition, and $\mathcal{B}$ gave its only signing token to $\mathcal A$. We handle this case using delayed collapsing.
    \\
    \\
    For a single-bit public key $y_i$, define $\mathsf{CosetTest}(y_i,\rho)$ as follows:
    \begin{enumerate}
        \item measure $\rho$ in the computational basis, obtaining $u$.
    
        \item accept if $P^{-1}(y_i,u)\neq\perp$.
    
    \end{enumerate}
    
    \noindent For a public key verifying longer message signatures, we can apply the same test component wise. Note that replacing the coset verification test performed in $\mathsf{PVQF.Verify}$ with $\mathsf{CosetTest}$ increases the winning probability for $\mathcal{A}$.
    \\
    \\
    Let $\mathsf{E_y}$ be the event that an accepted
    output contains a new authenticated transition whose parent is $y$,
    when its terminal signing key is checked using the component-wise
    $\mathsf{CosetTest}$. It suffices to prove that
    $\Pr[\mathsf{E_y}] \leq \mathsf{negl}(\lambda).
    $
    \\
    \\
    Define hybrids $\mathsf H_0,\ldots,\mathsf H_{L}$ where, in $\mathsf{H_j}$, the first $j$ registers $\ket{W_{y_1}},\ldots,\ket{W_{y_j}}$
    of the current terminal signing key are measured in the computational basis before the flame is given to $\mathcal{A}$. Thus,
    $\mathsf{H_0}$ is the unmeasured experiment and, in
    $\mathsf{H_L}$, the entire current signing key has been measured. Note that delayed collapsing implies that every pair of consecutive hybrids are computationally indistinguishable to each other. It therefore remains to prove that event $\mathsf{E_y}$ occurs in hybrid $\mathsf{H_L}$ with negligible probability.
    \\
    \\
    Suppose otherwise. We now show how to design the OSS adversary $\mathcal{B}$ for this case. First, $\mathcal{B}$ prepares a flame state
    $$ F=(\mathsf{history},y,\ket{sk_y})
    $$
    of the desired depth. It then generates two fresh string-OSS key pairs
    $$
    (pk_0^*,\ket{sk_0^*}),
    (pk_1^*,\ket{sk_1^*})
    \leftarrow
    \mathsf{OSS.KeyGen}^{\otimes 2}$$
    and retains both child signing keys. Define
    $$
    M^*:=pk_0^*|| pk_1^* =(M_1^*,\ldots,M_L^*).$$
    \\
    \\
    For every $\ell\in\{1,\ldots,L\}$, $\mathcal B$ uses $D$ to
    compute
    $$
    u_\ell^*
    \leftarrow
    \mathsf{OSS.Sign}^{P,P^{-1},D}
    (\ket{W_{y_\ell}},M_\ell^*)$$ and sets
    $$ \sigma^*=(u_1^*,\ldots,u_L^*).$$
    \\
    \\
    The reduction replaces the current signing-key register
    $\ket{sk_y}$ by the computational-basis state
    $\ket{\sigma^*}$, and gives the resulting flame to
    $\mathcal A^{P,P^{-1}}$.
    \\
    \\
    Note that $pk_0^*,pk_1^*$ are uniformly distributed, and therefore $M^*$ is a
    uniform message. Moreover, conditioned on
    $(u_{\ell}^*)_1=M_\ell^*$, the signing algorithm outputs a uniformly
    random element of the corresponding half of $W_{y_\ell}$. Averaging
    over the uniform bit $M_\ell^*$, $u_\ell^*$ is therefore uniform
    over $W_{y_\ell}$. Therefore,
    $\ket{\sigma^*}$
    has exactly the same distribution as the one obtained by measuring
    $\ket{sk_y}$ in the computational basis. Thus, the state given to $\mathcal{A}$ is distributed exactly as in $\mathsf{H_L}$.
    \\
    \\
    Suppose that the event $\mathsf{E_{y}}$ occurs. Let $(y, \widehat{M}, \sigma)$ be the first authenticated transition in the $\mathsf{history}$ leaving $y$, where $\widehat{M} = (\widehat{pk_0}||\widehat{pk_1}) = (\widehat{M_1},\ldots, \widehat{M_L})$ and $\widehat{\sigma} = (\widehat{\mu_1}\ldots \widehat{\mu_L})$.  
    There are two cases.
    \\
    \\
    \noindent\emph{Case 1: $\widehat M\neq M^*$.}
    
    \noindent Choose an index $\ell$ satisfying $\widehat M_\ell\neq M_\ell^*$. By construction, $\mathsf{OSS.Verify}
    (y_\ell,M_\ell^*,u_\ell^*)=1$
    whereas the validity of the adversary's clone $\mathsf{OSS.Verify}
    (y_\ell,\widehat M_\ell,\widehat u_\ell)=1.$
    Therefore $\bigl(
    y_\ell,
    M_\ell^*,u_\ell^*,
    \widehat M_\ell,\widehat u_\ell
    \bigr)$
    is a valid single-bit OSS forgery.
    \\
    \\
    \noindent\emph{Case 2: $\widehat M=M^*$.}
    \\
    \\
    In this case,
    $(\widehat{pk}_0,\widehat{pk}_1) = (pk_0^*,pk_1^*)$. Now if the history ends here, the verifier's final-key check
    implies that the terminal public key is $pk_a^*$ for some
    $a = 0,1$. Write
    $pk_a^*=(z^*_{a,1},\ldots,z^*_{a,L}).$
    Since the terminal register passes the componentwise
    \(\mathsf{CosetTest}\), the reduction records, for any coordinate $\ell$, $w_{\ell}$ such that $P^{-1}(z^*_{a,\ell},w_\ell)\neq\perp.$
    \\
    \\
    Let $b=(w_\ell)_1$.
    Then $\mathsf{OSS.Verify}
    (z^*_{a,\ell},b,w_\ell)=1.$
    The reduction retained the honest token component
    $\ket{W_{z^*_{a,\ell}}}$. Using $D$, it computes
    $$ w_\ell' \leftarrow
    \mathsf{OSS.Sign}^{P,P^{-1},D}
    (\ket{W_{z^*_{a,\ell}}},1-b).$$
    It follows that
    $$ \mathsf{OSS.Verify}
    (z^*_{a,\ell},1-b,w_\ell')=1.$$
    Thus, \(w_\ell,w_\ell'\) are signatures of two different bits under
    the same single-bit OSS public key.
    \\
    \\
    Finally, note that if the event $\mathsf{E_y}$ occurs but the adversary decides to extend the chain and clone further, the same reduction still works because the clone must contain a signature with respect to $y' = pk^*_a$, and the reduction can use its retained $\ket{sk^*_a}$ to sign another message.
    \\
    \\
    We conclude that every occurrence of
    \(\mathsf{E_y}\) in \(\mathsf H_L\) yields two valid
    signatures of different bits under a common single-bit OSS public key.
    Therefore,
    $$ \Pr[\mathsf H_L:\mathsf{E_y}]
    \leq\mathsf{negl}(\lambda).
    $$
    The delayed-collapsing hybrid argument then gives
    \[
    \Pr[\mathsf H_0:\mathsf{E_y}]
    \leq\mathsf{negl}(\lambda).
    \]
    Finally, because acceptance by the full token test implies acceptance
    by \(\mathsf{CosetTest}\), the corresponding probability in the
    original verification experiment is also negligible.
    \\
    \\
    We have covered all possible types of clones and reduced to OSS for each one. This lets us conclude that $\mathcal{A}$ can clone with negligible winning probability.
\end{proof}
\section{Definition: Torch-Fire}
\label{sec:definition}
In this section we give the formal definition of the torch-fire object in a classical oracle model.
\begin{definition}
\label{def:primitive}
\normalfont
Let $\lambda\in\mathbb N$ be the security parameter. Quantum inputs and outputs are specified by registers written in Capital letters. The states corresponding to these registers may be in mixed states. A \emph{torch-fire
scheme} consists of an
oracle-generation procedure and four $\QPT$ algorithms
\[
  \mathsf{TF}=(\TFGenOracles,\TFSetup,\TFSpark,\TFVer,\TFClone).
\]
\begin{enumerate}[label=\arabic*.,leftmargin=*,itemsep=5pt]
 \item $\mathcal O\leftarrow\TFGenOracles(1^\lambda)$:
 an oracle-generation procedure that samples a collection
 $\mathcal O$ of classical oracles query-able in superposition. All algorithms
and adversaries have query access to them. The sampled oracle
 remain fixed throughout each experiment.
\item $\bigl(pk,sk,(t_1,T_1),\ldots,(t_R,T_R)\bigr)
  \leftarrow\TFSetup(1^\lambda,1^R)$: This algorithm is run by the issuing authority. It takes in the security parameter and a polynomially bounded $R=R(\lambda)$ and outputs the public verification key $pk$, secret key $sk$ and the torch states $\{(t_i,T_i)\}_{i\in[R]}$. 
The values $pk,sk,t_1,\ldots,t_R$ are classical, and $T_1,\ldots,T_R$
are quantum torch registers. 

\item $(s,F)\leftarrow\TFSpark(sk)$: 
This algorithm is run by the issuing authority. It takes in the secret key and outputs a classical serial
number $s$ and a quantum flame register $F$. Each invocation uses
independent randomness.

\item $
  b\leftarrow\TFVer(pk,s,F),\:\: b\in\{0,1\}$: 
This public algorithm verifies an alleged flame $F$ under serial number $s$ and public key $pk$.

\item $(T',F_0,F_1)\leftarrow\TFClone(pk,t,s,T,F)$: 
This public algorithm returns a torch register $T'$, associated with the torch identifier 
$t$, and two flame registers $F_0,F_1$, associated with the serial number $s$.

\end{enumerate}
We require the following two correctness guarantees from the scheme.

\medskip
\noindent\textbf{Verification correctness.}
For every polynomially bounded $R$. 
We require
\begin{equation}
\label{eq:correct-ver}
  \Pr\biggl[
    \TFVer(pk,s,F)=1: 
    \substack{\mathcal{O}\leftarrow \TFGenOracles(1^\lambda)\\
    \bigl(pk,sk,(t_\ell,T_\ell)_{\ell\in[R]}\bigr)
    \leftarrow\TFSetup(1^\lambda,1^R)\\
  (s,F)\leftarrow\TFSpark(sk)}
  \biggr]\geq 1-\negl(\lambda).
\end{equation}

\medskip
\noindent\textbf{Cloning correctness.}
For every polynomially bounded $R,q$ sample
\[
\begin{aligned}
\mathcal{O}&\leftarrow \TFGenOracles(1^\lambda)\\
  \bigl(pk,sk,(t_\ell,T_\ell)_{\ell\in[R]}\bigr)
    &\leftarrow\TFSetup(1^\lambda,1^R)\\
  (s,F_0)&\leftarrow\TFSpark(sk)
\end{aligned}
\]
Now for any $j\in[R]$, set $T^{(0)}=T_j$. For $i=1,\ldots,q(\lambda)$, execute
\[
  \bigl(T^{(i)},F_{i-1},F_i\bigr)
  \leftarrow
  \TFClone\bigl(pk,t_j,s,T^{(i-1)},F_{i-1}\bigr).
\]
Note, the assignment to $F_{i-1}$ replaces the input register by the first
returned flame register. After all cloning operations, check
\[
  b_i\leftarrow\TFVer(pk,s,F_i),\qquad i=0,\ldots,q(\lambda).
\]
We require for every $j\in[R]$,
\begin{equation}
\label{eq:correct-clone}
  \Pr\left[
    \bigwedge_{i=0}^{q(\lambda)}(b_i=1)
  \right]\geq 1-\negl(\lambda).
\end{equation}
\end{definition}
All probabilities include oracle generation randomness, algorithmic randomness and measurement
outcomes. 

\begin{definition}[Global non-proliferation]
\label{def:global}
\normalfont
Let $R,m,N$ be polynomially bounded functions of $\lambda$, with
$0\leq m\leq R$, and let the adversary 
$\mathcal A^{\mathcal O}=(\mathcal A^{\mathcal O}_0,\mathcal A^{\mathcal O}_1,\ldots,\mathcal A^{\mathcal O}_N)$ be a tuple
of quantum polynomial-time algorithms. Define the game
\[
  \mathsf{SPLIT}^{\mathsf{TF}}_{\mathcal A^{\mathcal O}}
  (1^\lambda,1^R,m,N)
\]
between a challenger and an adversary as follows.

\begin{enumerate}[label=\arabic*.,leftmargin=*,itemsep=6pt]
\item The challenger samples \(\mathcal O\leftarrow \TFGenOracles\) and 
\(
  \bigl(pk,sk,(t_j,T_j)_{j\in[R]}\bigr)
  \leftarrow\TFSetup(1^\lambda,1^R)
\)
and gives $pk$ to $\mathcal A^{\mathcal O}_0$.

\item  The challenger answers polynomially many
spark queries from $\mathcal A^{\mathcal O}_0$. For each adaptive query, it
independently samples
\(
  (s,F)\leftarrow\TFSpark(sk)
\)
and gives $(s,F)$ to $\mathcal A^{\mathcal O}_0$. 

At any one point during this phase, $\mathcal A^{\mathcal O}_0$ submits a set
$J\subseteq[R]$ with $|J|=m$ and receives $(T_j)_{j\in J}$ from the challenger.
No other torch registers are supplied.

\item The algorithm $\mathcal A^{\mathcal O}_0$ outputs registers
\(
  B_1,\ldots,B_N
\)
and terminates. Register $B_i$ is assigned to $\mathcal A^{\mathcal O}_i$. Their
joint state may be arbitrarily entangled.

From this point onward, any classical or quantum pairwise communication between the parties $(\mathcal A^{\mathcal O}_1,\mathcal A^{\mathcal O}_2,\ldots,\mathcal A^{\mathcal O}_N)$ is
forbidden. No further
torch or flame issuance is permitted, except for the challenge inputs
specified in Step~4.

\item After the split, the challenger independently
samples
\(
  (s_i,F_i)\leftarrow\TFSpark(sk),\:\: i\in[N].
\)
It gives $(pk,s_1,\ldots,s_N)$ to every party and gives $F_i$ only to
$\mathcal A^{\mathcal O}_i$.

\item The party $i, \:\: i\in [N]$ executes
\[
  (F_{i,0}',F_{i,1}')
  \leftarrow\mathcal A^{\mathcal O}_i(pk,s_1,\ldots,s_N,B_i,F_i).
\]
and sends \((F_{i,0}',F_{i,1}')\) back to the challenger.

\item The challenger checks
\[
  b_{i,a}\leftarrow\TFVer(pk,s_i,F_{i,a}'),
  \qquad i\in[N],\ a\in\{0,1\},
\]
and sets $X_i=b_{i,0}b_{i,1}$. The game outputs $1$ if
\[
  \sum_{i=1}^{N}X_i\geq m+1,
\]
and outputs $0$ otherwise.
\end{enumerate}

The scheme satisfies the \emph{global non-proliferation} property if, for every
$R,m,N$ and every adversary $\mathcal A^{\mathcal O}$ as above,
\begin{equation}
\label{eq:G}
  \Pr\bigl[
    \mathsf{SPLIT}^{\mathsf{TF}}_{\mathcal A^{\mathcal O}}
    (1^\lambda,1^R,m,N)=1
  \bigr]\leq\negl(\lambda).
\end{equation}
\end{definition}
All probabilities include oracle generation randomness, algorithmic randomness and measurement
outcomes. 
\begin{remark}
\normalfont
    We note that for $m=0$, this property implies that an adversary without given a torch, cannot clone a flame state, except with negligible probability.
\end{remark}
\begin{definition}[Untelegraphability]
\label{def:untelegraphability}
\normalfont
Let $\mathcal A^{\mathcal O}=(\mathcal A^{\mathcal O}_1,\mathcal A^{\mathcal O}_2)$ be a pair of quantum
adversaries making polynomially many
queries to $\mathcal O$ in total, across both parties and
the entire experiment. 
 The initial joint state of \((\mathcal A^{\mathcal O}_1,\mathcal A^{\mathcal O}_2)\) is separable
across the two parties (i.e., they strictly do not have any shared entanglement) and independent of the challenger's setup
randomness. Shared classical randomness is permitted.
\\
Define the game
\[
  \mathsf{TELEGRAPH}^{\mathsf{TF}}_{\mathcal A^{\mathcal O}}(1^\lambda,1^R)
\]
between a challenger and an adversary as follows.

\begin{enumerate}[label=\arabic*.,leftmargin=*,itemsep=6pt]
\item The challenger samples \(\mathcal O \leftarrow \TFGenOracles\) and 
\(
  \bigl(pk,sk,(t_j,T_j)_{j\in[R]}\bigr)
  \leftarrow\TFSetup(1^\lambda,1^R)
\)
and gives $pk$ to both parties.

\item The adversary specifies disjoint sets
$J_1,J_2\subseteq[R]$. The challenger gives $(T_j)_{j\in J_1}$ to
$\mathcal A^{\mathcal O}_1$ and $(T_j)_{j\in J_2}$ to $\mathcal A^{\mathcal O}_2$.

\item For polynomially many rounds,
$\mathcal A^{\mathcal O}_1$ may submit a spark query. For each adaptive query, the
challenger independently samples
\(
  (s,F)\leftarrow\TFSpark(sk)
\)
and gives $(s,F)$ to $\mathcal A^{\mathcal O}_1$. No flame register is
supplied to $\mathcal A^{\mathcal O}_2$.

\item $\mathcal A^{\mathcal O}_2$ sends a classical message $\mathcal{M}$ to $\mathcal A^{\mathcal O}_1$.

\item Using $\mathcal{M}$ and its private state,
$\mathcal A^{\mathcal O}_1$ sends a classical string $L$ to
$\mathcal A^{\mathcal O}_2$.

\item Using $L$ and its retained state,
$\mathcal A^{\mathcal O}_2$ outputs a classical serial number $s^*$ and a quantum
register $F^*$.

\item The challenger verifies
\[
  b\leftarrow\TFVer(pk,s^*,F^*)
\]
and outputs $b$. The serial number $s^*$ need not have appeared during
the spark query phase.
\end{enumerate}


The scheme satisfies \emph{untelegraphability} if, for every polynomially
bounded $R$ and every such adversary $\mathcal A^{\mathcal O}$,
\begin{equation}
\label{eq:U}
  \Pr\bigl[
    \mathsf{TELEGRAPH}^{\mathsf{TF}}_{\mathcal A^{\mathcal O}}(1^\lambda,1^R)=1
  \bigr]\leq\negl(\lambda).
\end{equation}
All probabilities include oracle generation randomness, algorithmic randomness and measurement
outcomes. 
\end{definition}
\section{Construction of Torch-fire}
\label{sec:torch-fire}
\label{sec:construction}

\subsection{Construction}
\label{subsec:tf-construction}

The construction uses the long-message OSS signing states in
\eqref{eq:long} and the CGS flame state in \eqref{eq:K}.
A torch is one such OSS signing state.
A flame has two parts: an independently generated OSS signing state,
called its \emph{anchor}, and a CGS flame state.

The two parts are cloned separately.
For the CGS flame part, we use the public cloning procedure of CGS fire.
For the anchor, we use a procedure whose oracle queries
require signatures from the torch and anchor, together with a CGS tag.
We first describe formally the component wise  machinery and these oracle interfaces,
then give a construction of torch-fire.

\paragraph{Parameters.} Let $\lambda\ge2$ be the security parameter of the overall torch-fire scheme. 
\\
For the long-message OSS of \Cref{con:long-oss}, we use the parameters
\[
 L=\lambda,\qquad r=16\lambda(\lambda-1),\qquad
 n=r+24\lambda,\qquad k=n.
\]
For the CGS fire of \Cref{con:cgs-source,con:cgs-algorithms}, we use
the parameters below.
We distinguish its dimensions by writing a subscript $K$ throughout.
\[
\mu=\lambda^2,
\qquad
r_K=16\mu(\mu-1),\qquad
 n_K=r_K+24\mu,\qquad
 k_K=n_K
\]

\paragraph{Oracle collections for each instance.}
Fix a polynomially bounded integer $R\ge0$, the number of torches.
We use the per-instance oracle convention of \cite[Section~8]{CGS}:
each instance has its own sampled oracle collection, with public and
private access specified below. We can lift it to the standard oracle model where there is one global oracle and the Setup procedure samples public key, secret key to specify the corresponding oracles instead of outputting oracle tuples.
Oracle sampling and evaluation need
not be efficient; the procedures using these oracles are $\QPT$
with their prescribed access.
The public key is $pk=R$; the associated oracle collection specifies
the instance and is implicit throughout. 

For each instance, the construction uses one common OSS
oracle collection for its torches, a family of OSS oracle collections
for its anchors, and one CGS fire oracle collection.

Fix one such instance and sample the following collections.
\begin{enumerate}[label=(\roman*),leftmargin=*,itemsep=6pt]
\label{TF:OSS-Fire-Oracles}
 \item $\mathcal O_T\leftarrow\OSSGenOracles(1^\lambda)$:
 sample one long-message OSS oracle collection as in \Cref{con:long-oss}.
 All $R$ torches use this same collection.

 \item For every $\ell\in\{0,1\}^{\lambda}$, sample
 \[
  \mathcal O_{A,\ell}\leftarrow\OSSGenOracles(1^\lambda).
 \]
 as in \Cref{con:long-oss}. Each $\ell$ selects an independent long-message OSS oracle collection
 for anchors. 

 \item $\mathcal O_K\leftarrow\FireGenOracles(1^\mu)$:
 sample the CGS public  oracle collection in
 \eqref{eq:native-public-oracles}.
 Its underlying generation oracle is denoted by $\mathcal G$, as
 in \eqref{eq:native-dispatcher}.
 Direct access to $\mathcal G$ is private to the issuing authority
 and is not included in $\mathcal O_K$.
\end{enumerate}
Every long-message OSS oracle collection contains $L$ one-bit components
and its own digest oracle (\Cref{eq:Digest-Oracle-for-long-message-OSS}).

\paragraph{Torch verification keys.}
For the fixed instance, sample independently
\[
 t_j=(t_{j,1},\ldots,t_{j,L})
 \leftarrow(\{0,1\}^r)^L,
 \qquad j\in[R].
\]
These are long-message OSS verification keys relative to $\mathcal O_T$.
Uniform sampling agrees with the key distribution in
\Cref{fact:sz-generation,con:long-oss}.

Publish the keys through the oracle
\begin{equation}\label{eq:tf-torch-list}
 \mathcal T(j)=
 \begin{cases}
  t_j,&j\in[R],\\
  \perp,&\text{otherwise},
 \end{cases}
 \qquad
 \Tpub=\{t_j:j\in[R]\}.
\end{equation}

\paragraph{Private generation data for the issuing authority. }
For each published torch key $t_j$ and component $c\in[L]$,
the issuing authority receives the matrix and shift defining
$W_{t_{j,c}}$ (Recall \Cref{eq:sz-cosets}) in component $c$ of $\mathcal O_T$.
Write these private classical data as
\[
 \mathsf{td}_T
 =\bigl(A_{t_{j,c}},b_{t_{j,c}}\bigr)_{j\in[R],\,c\in[L]}.
\]
Their total size is polynomial in $\lambda$ for polynomially bounded $R$.
Knowing $A_{t_{j,c}}$ and $b_{t_{j,c}}$, by the relation \eqref{eq:sz-coset-state}, the authority can prepare the torch registers
\begin{equation}
\ket{T_j}=\bigotimes_{c=1}^L\ket{W_{t_{j,c}}}
\end{equation}
Separately, the authority retains private access to $\mathcal G$
to prepare the CGS flame part during spark.
Neither $\mathsf{td}_T$ nor direct access to $\mathcal G$ is public.

\paragraph{Anchor labels and flame serials.}
An anchor has a source label $\ell$ and a verification key $a$
relative to $\mathcal O_{A,\ell}$.
The pair is the flame serial:
\begin{equation}\label{eq:tf-serial}
 s=(\ell,a),\qquad
 \ell\in\{0,1\}^{\lambda},\qquad
 a=(a_1,\ldots,a_L)\in(\{0,1\}^r)^L.
\end{equation}

\paragraph{The signing registers.}
For a torch verification key $t\in T_{pub}$ and a flame serial $s=(\ell,a)$,
write the corresponding signing states from \eqref{eq:long} as
\begin{equation}\label{eq:tf-states}
\begin{aligned}
 \ket{T_t}_T&=\bigotimes_{c=1}^L\ket{W_{t_c}}_{T_c},
 &T&=T_1\cdots T_L,\\
 \ket{A_a}_A&=\bigotimes_{c=1}^L\ket{W_{a_c}}_{A_c},
 &A&=A_1\cdots A_L.
\end{aligned}
\end{equation}
In the first line, $W_{t_c}$ belongs to component $c$ of
$\mathcal O_T$.
In the second, $W_{a_c}$ belongs to component $c$ of
$\mathcal O_{A,\ell}$.

\paragraph{The CGS flame part and the flame.}
Let $\ket K$ be the state in \eqref{eq:K} relative to $\mathcal O_K$.
The honest flame with serial $s$ is the state of the register \(F=AK\). Hence, the flame is:
\begin{equation}\label{eq:flame-state}
 \ket{F_s}_F=\ket{A_a}_A\otimes\ket K_K.
\end{equation}
\begin{remark}
\normalfont
    Having described the structure of the flame and the torch, we now move towards building a torch-fire scheme as in \Cref{def:primitive}. To build that, along with the oracles \(\mathcal O_T,\{\mathcal O_{A,\ell}\}_\ell,\mathcal O_{K}\), we need some more fine-grained access structure which we build next.
\end{remark}
\paragraph{Affine functions for a fixed verification key.}
Cloning a flame with
serial $s=(\ell,a)$ needs to prepare an additional register in
$\ket{W_{a_c}}$ for each $c\in[L]$.
This task, ultimately requires preparing $\ket{W_y}$ for a specified
key $y$ in the corresponding one-bit OSS source.
We define the affine functions used in this preparation.

Consider one anchor component and a key $y\in\{0,1\}^r$.
Using its matrix $A_y$, shift $b_y$, and coset $W_y$ from
\Cref{con:sz-source}, define
\begin{equation}\label{eq:tf-rows}
\begin{aligned}
 \operatorname{affine}_{y,0}(w)&=A_yw+b_y,
 &&w\in\{0,1\}^{n-r},\\
 \operatorname{affine}_{y,1}(u)&=
 \begin{cases}
  w,&u=A_yw+b_y\text{ for some }w\in\{0,1\}^{n-r},\\
  \perp,&u\notin W_y,
 \end{cases}
 &&u\in\{0,1\}^k.
\end{aligned}
\end{equation}
Full column rank makes the inverse coordinate $w$ unique.
\begin{remark}\label{remark:row-functions-are-hidden}
\normalfont
    The crucial thing is, these affine functions do not have public access;
access to them is specified below. Critically, if one has access to these affine functions corresponding to a $y\in \{0,1\}^r$, that party can prepare the corresponding coset state as follows. Note, the contrast between this coset preparation procedure against that one in \Cref{fact:sz-generation} lies in which oracles they are allowed to access.
\end{remark}
\paragraph{The coset-state preparation circuit.}
Given the two affine queries for $y$, initialize registers $X,Y$ of
$n-r,k$ qubits to zero and apply
\begin{equation}\label{eq:tf-row-preparation}
\begin{aligned}
 \ket{0^{n-r}}_X\ket{0^k}_Y
 &\xrightarrow{H_X^{\otimes(n-r)}}
  2^{-(n-r)/2}\sum_w\ket w_X\ket{0^k}_Y\\
 &\xrightarrow{\operatorname{affine}_{y,0}:\,X\to Y}
  2^{-(n-r)/2}\sum_w\ket w_X\ket{A_yw+b_y}_Y\\
 &\xrightarrow{\operatorname{affine}_{y,1}:\,Y\to X}
  \ket{0^{n-r}}_X\ket{W_y}_Y.
\end{aligned}
\end{equation}

\paragraph{Messages for anchor cloning} We want to build a structure such that a party holding a torch can clone an anchor but a party without a torch cannot. To do so, we want, 
to clone the anchor part of a flame using the torch, signatures (of some underlying messages) are needed from both the anchor and the torch treating them as long-message OSS signing states. We set the underlying messages to be signed in the following way.
For $s=(\ell,a)$ and a torch verification key $t$, set the messages
\begin{equation}\label{eq:authorization-messages}
 M_A(s,t)=0\Vert\langle s,t\rangle,
 \qquad
 M_T(s,t)=1\Vert\langle s,t\rangle,
\end{equation}
where
\[
 \langle s,t\rangle
 =\ell\Vert a_1\Vert\cdots\Vert a_L
      \Vert t_1\Vert\cdots\Vert t_L.
\]
The anchor signs $M_A(s,t)$, and the torch signs $M_T(s,t)$, using
\Cref{con:long-oss}.
These messages have length $1+\lambda+2Lr$ and remain fixed
throughout one cloning call. Also note, the messages are hashed to $L$ bit digests by the corresponding digest oracles (\Cref{eq:Digest-Oracle-for-long-message-OSS}) in $\mathcal O_{A,\ell}$ and $\mathcal O_T$.

\paragraph{Anchor-affine queries.}
For an anchor, to request an affine value (\Cref{eq:tf-rows}), we need to specify
\[
 q=(s,t,c,e,u),\qquad c\in[L],\qquad e\in\{0,1\},\qquad s=(\ell,a)
\]
The query asks for $\operatorname{affine}_{a_c,e}(u)$ in component $c$
of $\mathcal O_{A,\ell}$.
Its input length is $n-r$ for $e=0$ and $k$ for $e=1$.

Let $\mathcal Q_{\rm affine}$ be the set of these well-formed queries,
with $s$ as in \eqref{eq:tf-serial} and
$t\in(\{0,1\}^r)^L$. The key $t$ specifies the torch under which the accompanying
torch signature is verified.

\paragraph{Encoding the query for a CGS tag.} We also want to enforce, a party who holds an existing torch-fire state should only be able to clone the anchor part. To enforce that, we observe a party holding an existing torch-fire state can generate a valid CGS tag using its CGS flame part as in \Cref{eq:native-tag-circuit}. Thus, in addition to the two OSS signatures, 
the gate keeping oracle $\Gamma$ in \Cref{con:gamma} checks
a CGS tag before answering a complete affine query
$q=(s,t,c,e,u)$.
Recall from \eqref{eq:native-encoding} and
\eqref{eq:native-dispatcher} that a $\texttt{forward}$
query to the CGS fire interface is specified by an
$n_K$-bit string. We encode $q$ into such a string.

Let $[c]_\lambda$ denote the binary representation of $c$
with leading zeros to make its length exactly $\lambda$.
Since $c\in[L]$ and $L=\lambda$, this suffices to represent
every component index.
Define
\begin{equation}\label{eq:tf-request-encoding}
 \mathsf{code}:\mathcal Q_{\rm affine}
       \longrightarrow\{0,1\}^{n_K}
\end{equation}
by concatenating the fields of $q$ in the following order
and appending zeros until the total length is $n_K$:
\[
 \mathsf{code}(q)
 =
 \langle s,t\rangle
 \Vert[c]_\lambda
 \Vert e
 \Vert u
 \Vert 0\cdots0.
\]
Here $\langle s,t\rangle$ is the concatenation already
defined in \eqref{eq:authorization-messages}.
It has length $\lambda+2Lr$, since $|\ell|=\lambda$.
The bit $e$ determines the length of $u$: $n-r$ for $e=0$
and $k$ for $e=1$. Thus all field boundaries are known,
and the encoding is injective and efficiently decodable.

Our choice of CGS fire security parameter $\mu=\lambda^2$
provides enough space for this encoding. Indeed,
\[
 n_K=16\mu^2+8\mu=16\lambda^4+8\lambda^2,
\]
whereas the concatenated fields before padding have length
at most
\[
 2\lambda+2Lr+1+k
 \le32\lambda^3
 \le n_K
 \qquad(\lambda\ge2).
\]

Using the CGS fire encoding from \eqref{eq:native-encoding},
set
\[
 x_q=\mathsf{code}(q),\qquad
 z_q=\enc(\texttt{forward},x_q)\in\mathcal Z.
\]
The encoding $\enc$ and query space $\mathcal Z$ belong
to the CGS fire source $\mathcal O_K$ of this instance.
The circuit $\mathsf{Fire.Tag}^{\mathcal O_K}$ from \eqref{eq:native-tag-circuit}
computes a tag $p$ with query input $z_q$, and $\Gamma$
checks the supplied tag $p$ using $\FireCheck(z_q,p)$.
By \eqref{eq:cgs-forward-check}, this check can also be
performed through the public oracle $O_2$ associated with the CGS fire oracle collection $\mathcal O_K$.

\begin{construction}[The \(\Gamma\) oracle]\label{con:gamma}
\normalfont
The public oracle $\Gamma$ receives a query $q=(s,t,c,e,u)$,
two OSS signatures $\sigma_A,\sigma_T$, and a CGS tag $p$.
Writing $s=(\ell,a)$, it checks
\begin{equation}\label{eq:tf-gate}
\begin{aligned}
 &\quad q\in\mathcal Q_{\rm affine}
   \quad\wedge\quad t\in\Tpub\\
 &\quad\wedge\quad
   \OSSVer^{\mathcal O_{A,\ell}}(a,M_A(s,t),\sigma_A)=1\\
 &\quad\wedge\quad
   \OSSVer^{\mathcal O_T}(t,M_T(s,t),\sigma_T)=1\\
 &\quad\wedge\quad \FireCheck(z_q,p)=1.
\end{aligned}
\end{equation}
Its response is
\begin{equation}\label{eq:Gamma}
 \Gamma(q,p,\sigma_A,\sigma_T)=
 \begin{cases}
  \operatorname{affine}_{a_c,e}(u),
      &\text{if \eqref{eq:tf-gate} holds},\\
  \perp,&\text{otherwise}.
 \end{cases}
\end{equation}
All component oracles belong to the fixed instance.
The returned affine query value belongs to component $c$ of $\mathcal O_{A,\ell}$.
The tag predicate is \eqref{eq:native-tag-check} for $\mathcal O_K$.
Malformed inputs return $\perp$.
\end{construction}

\paragraph{The full public oracle collection.}
For the fixed instance, the full public collection is
\begin{equation}\label{eq:tf-public-oracles}
 \mathcal O=\mathcal O_{\mathrm{pub}}
 =\Bigl(
 \mathcal O_T,
 (\mathcal O_{A,\ell})_{\ell\in\{0,1\}^{\lambda}},
 \mathcal O_K,\mathcal T,\Gamma
 \Bigr).
\end{equation}
All algorithms and adversaries have quantum query access to
$\mathcal O$. Anchor queries specify $\ell$; the additional interfaces
use their inputs defined above.
The issuing authority alone receives $\mathsf{td}_T$ and retains
private access to $\mathcal G$.
\\
\paragraph{Coherent preparation of an affine query.}
Fix a serial $s=(\ell,a)$, a published torch key
$t\in\Tpub$, a component $c\in[L]$, and $e\in\{0,1\}$.
Let, the unitary $\mathcal R^{\mathcal O}_{c,e}$ describes how the anchor $A$, torch $T$, and CGS flame part $K$
are used to request a value of $\operatorname{affine}_{a_c,e}$ from
$\Gamma$ in a coherent way. We defer its implementation details to \Cref{sec:coherent-affine-query}.
\\~\\
Now we are ready to describe the full torch-fire protocol.
\begin{construction}[Torch-fire]\label{con:torch-fire}
\normalfont
 The procedures use the oracle collection in
\eqref{eq:tf-public-oracles}, with the instance implicit throughout.
Setup also receives $\mathsf{td}_T$, and spark uses private access
to $\mathcal G$.
All Fire subroutines below use the CGS fire source $\mathcal O_K$.
Its public key and serial are both taken to be $0^\mu$ in these calls.
\begin{enumerate}[label=\arabic*.,leftmargin=*,itemsep=8pt]
 \item $\mathcal O\leftarrow\TFGenOracles(1^\lambda,1^R)$.

 Sample the oracles \(\Bigl(
 \mathcal O_T,
 (\mathcal O_{A,\ell})_{\ell\in\{0,1\}^{\lambda}},
 \mathcal O_K\Big)\) and torch verification keys $\mathcal T$ according
 to the distributions specified above.
 Define
 $\Gamma$ from the same underlying samples.
 Return the full public collection \eqref{eq:tf-public-oracles}.
 Provide the issuing authority with $\mathsf{td}_T$ and private
 access to $\mathcal G$ for this instance.
 These functions remain fixed throughout the experiment.

 \item $\bigl(pk,sk,(t_j,T_j)_{j\in[R]}\bigr)
       \leftarrow\TFSetup(1^\lambda,1^R)$.

 \begin{enumerate}[label=(\roman*),leftmargin=*,itemsep=4pt]
  \item \emph{Set the public and secret keys.}
  Set $pk=R$. The authority's private access to $\mathcal G$
  serves as its secret key $sk$.

  \item \emph{Read the torch keys using $\mathcal T$.}
  Obtain $t_j=\mathcal T(j)$ for every $j\in[R]$.
  Abort if two keys coincide.

  \item \emph{Prepare the torches using $\mathsf{td}_T=\bigl(A_{t_{j,c}},b_{t_{j,c}}\bigr)_{j\in[R],\,c\in[L]}$.}
  For every $j\in[R]$ and $c\in[L]$, run the coset-state
  preparation circuit \eqref{eq:tf-row-preparation} for
  $y=t_{j,c}$.
  Denote the resulting signing register by $T_{j,c}$ and put
  $T_j=T_{j,1}\cdots T_{j,L}$.
 \end{enumerate}
 Return $pk,sk,(t_j,T_j)_{j\in[R]}$.
 The state of each torch is
 \[
 \rho_{T_j}=\proj{T_{t_j}},
 \qquad
 \ket{T_{t_j}}=\bigotimes_{c=1}^L\ket{W_{t_{j,c}}}.
 \]

 \item $(s,F)\leftarrow\TFSpark(sk)$.

 Prepare a fresh anchor and CGS flame part.
 \begin{enumerate}[label=(\roman*),leftmargin=*,itemsep=4pt]
  \item \emph{Generate the anchor using $\mathcal O_{A,\ell}$.}
  Sample $\ell\leftarrow\{0,1\}^{\lambda}$ and run
  \[
  (a,A)\leftarrow\OSSGenQKey^{\mathcal O_{A,\ell}}(1^\lambda).
  \]
  This invokes \Cref{fact:sz-generation} in each anchor component,
  using its public $\mathcal P,\mathcal P^{-1}$ interfaces.
  It produces verification key $a$ and signing state $\proj{A_a}$.

  \item \emph{Generate the CGS flame part using $\mathcal G$.}
  Prepare $K$ by the circuit for \eqref{eq:K}, using the authority's
  private access to $\mathcal G$ for its forward and inverse queries.
  This prepares $\proj K$ for the same instance.
 \end{enumerate}
 Return
 \[
 s=(\ell,a),\qquad F=AK,\qquad
 \rho_F=\proj{A_a}\otimes\proj K=\proj{F_s}.
 \]
 Each invocation uses fresh randomness and fresh registers.

 \item $b\leftarrow\TFVer(pk,s,F)$.

 Reject malformed classical inputs or incorrect register sizes.
 Parse $s=(\ell,a)$ and $F=AK$, and test the two components.
 \begin{enumerate}[label=(\roman*),leftmargin=*,itemsep=4pt]
  \item \emph{Test the anchor using $\mathcal O_{A,\ell}$.}
  Apply the test of \Cref{lem:token-test} to every component of $A$.
  Each test uses $\mathcal P^{-1}$ for coset membership and
  $\mathcal D$ for dual-subspace membership.
  The joint accepting projector is
  \[
  \Pi_a=\bigotimes_{c=1}^L\Pi_{a_c}=\proj{A_a}.
  \]
  If any component rejects, return zero.

  \item \emph{Test the CGS flame part using $\mathcal O_K$.}
  Return the output of
  \[
  b\leftarrow\FireVer(0^\mu,0^\mu,K).
  \]
  This procedure uses $\mathcal D,O_0,O_1$ to form a tag,
  $N$ to obtain a response, and $V$ to check that response.
 \end{enumerate}
 With $B$ denoting the CGS flame verification effect from \eqref{eq:B},
 the accepting effect on $F$ is
 \begin{equation}\label{eq:flame}
 V_s^F=\Pi_a\otimes B.
 \end{equation}

 \item $(T',F_0,F_1)\leftarrow\TFClone(pk,t,s,T,F)$.

 Reject malformed classical inputs or incorrect register sizes by
 returning zero registers of the prescribed output sizes.
 Otherwise parse $s=(\ell,a)$ and $F=AK$.
 \begin{enumerate}[label=(\roman*),leftmargin=*,itemsep=4pt]
  \item \emph{Check the torch key using $\mathcal T$.}
  Read the published keys and check $t\in\Tpub$.
  If the check fails, return zero registers of the prescribed sizes.

  \item \emph{Clone the CGS flame part using $\mathcal O_K$.}
  Run
  \[
  (K_0,K_1)\leftarrow\FireClone(0^\mu,0^\mu,K).
  \]
  This uses $\mathcal D,O_0,O_1$ to prepare tags and $O_2$
  to implement the CGS generation queries.

  \item \emph{Prepare a second anchor through $\Gamma$.}
  Use $A,T,K_0$ as the resources for the affine-query.
  For each $c\in[L]$, initialize $X_c,Y_c$ to zero, of sizes
  $n-r,k$, respectively, and execute
  \[
  H_{X_c}^{\otimes(n-r)},\qquad
  \mathcal R_{c,0}^{\mathcal O}:X_c\to Y_c,\qquad
  \mathcal R_{c,1}^{\mathcal O}:Y_c\to X_c.
  \]
  This is essentially, the steps of the coset state the preparation procedure of \eqref{eq:tf-row-preparation}.
  Each affine query uses $\mathcal O_{A,\ell}$ and $\mathcal O_T$
  for the two signatures, $\mathcal O_K$ for the tag, and
  $\Gamma$ for the affine value.
 \end{enumerate}
 Return
 \[
 T'=T,\qquad F_0=AK_0,\qquad F_1=Y_1\cdots Y_LK_1,
 \]
 and discard the remaining workspace.
\end{enumerate}
\end{construction}

The four algorithms $\TFSetup,\TFSpark,\TFVer,\TFClone$ are
$\QPT$ with the oracle access and private data specified above,
for polynomially bounded $R$.

\subsection{Correctness}
\label{sec:correctness}

We prove verification and cloning correctness for the construction
in \Cref{subsec:tf-construction}.
An honest flame consists of an anchor $A$ and a CGS flame part $K$.
Verification tests these two registers separately.
Cloning first produces two CGS flame registers, then uses the
original anchor, the torch, and one CGS flame register to prepare
a second anchor through $\Gamma$.
It, though, requires a proof that these affine queries also preserve
the three resources used to make them. We establish this for one
query, then for one cloning call, and finally for repeated cloning.

Throughout, $L=\lambda$, $n-r=24\lambda$, $\mu=\lambda^2$, and
$D(\rho,\sigma)=\tfrac12\norm{\rho-\sigma}_1$.
Expectations are over the sampling of the oracles and classical keys. In principle, we want to argue all the steps relevant to the torch-fire construction in \Cref{subsec:tf-construction} succeeds together, as desired, except with negligible probability.


\begin{lemma}\label{lem:tf-honest-outputs}
\normalfont
Fix the oracles, published torch keys, and an honest spark
serial $s=(\ell,a)$. 
We have,
\begin{equation}\label{eq:tf-setup-abort}
 \Delta_{\rm setup}:=\Pr[\TFSetup\text{ aborts}]
 \le\binom R2 2^{-rL}=\negl(\lambda).
\end{equation}
\end{lemma}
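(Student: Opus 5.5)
The plan is to show that $\TFSetup$ aborts only when two of the published torch keys coincide, and then to bound that collision probability with a union bound over pairs. By \Cref{con:torch-fire}, step~2(ii), $\TFSetup$ aborts exactly when $t_j=t_{j'}$ for some $j\neq j'$ in $[R]$. The key $t_j=\mathcal T(j)$ is one of the torch verification keys sampled in $\TFGenOracles$. There, each $t_j=(t_{j,1},\ldots,t_{j,L})$ is drawn uniformly and independently from $(\{0,1\}^r)^L$, for $j\in[R]$. No other step of setup can abort, since preparing the coset states from $\mathsf{td}_T$ via \eqref{eq:tf-row-preparation} is deterministic. So $\Delta_{\rm setup}$ is precisely the collision probability of $R$ independent uniform samples from a set of size $2^{rL}$. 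This event is independent of the rest of the oracles and of the fixed serial $s$, so conditioning on these does not change the bound.

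For a fixed pair $j<j'$, independence and uniformity give $\Pr[t_j=t_{j'}]=2^{-rL}$. Summing over the $\binom R2$ pairs then yields $\Delta_{\rm setup}\le\binom R2 2^{-rL}$, which is \eqref{eq:tf-setup-abort}.

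It remains to check that this is negligible. Here $R=\poly(\lambda)$, $L=\lambda$ and $r=16\lambda(\lambda-1)$. Hence $\binom R2\le R^2=\poly(\lambda)$, while $2^{-rL}=2^{-16\lambda^2(\lambda-1)}$ decays faster than any inverse polynomial. Their product is therefore $\negl(\lambda)$.

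There is no substantive obstacle in this argument. The only point to confirm is that the uniform, independent sampling of the $t_j$ matches the paper's sampling description; the construction states this explicitly, noting that it agrees with \Cref{fact:sz-generation,con:long-oss}.
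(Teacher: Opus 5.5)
Your proof is correct and is essentially the paper's argument: setup aborts exactly when two of the $R$ independent uniform keys in $(\{0,1\}^r)^L$ coincide, and a union bound over the $\binom R2$ pairs gives $\binom R2 2^{-rL}$. Your explicit negligibility check is a useful addition. So is your reading that the probability is over the torch-key sampling rather than with the keys fixed, as the statement literally says; the paper's own proof implicitly uses the same reading.
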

\begin{proof}
\normalfont
Setup aborts precisely when two independently sampled torch keys
coincide, so
\[
 \Delta_{\rm setup}
 \le\sum_{1\le j<j'\le R}\Pr[t_j=t_{j'}]
 =\binom R2 2^{-rL}.
\]
\end{proof}

\paragraph{Signing errors.}
Affine access requires the two OSS signatures and the CGS tag in
\eqref{eq:tf-gate} to pass their checks.
For the OSS checks, let $\epsilon_{y,b}$ be the pointwise error
in \Cref{thm:one-bit-correctness}.
Set $e(\lambda)=2^{-\lambda}+2^{-24\lambda}=\negl(\lambda)$ and
\begin{equation}\label{eq:outer-sign-errors}
\begin{aligned}
 e_A&=\sum_{c=1}^L\sum_{b=0}^1\epsilon_{a_c,b},
 &\E[e_A]&\le2Le(\lambda),\\
 e_T&=\sum_{j=1}^R\sum_{c=1}^L\sum_{b=0}^1
                \epsilon_{t_{j,c},b},
 &\E[e_T]&\le2RLe(\lambda).
\end{aligned}
\end{equation}
If $b^A_c(M)$ and $b^T_c(M)$ denote the $c$th bits of the anchor
and torch digests of $M$, then \Cref{con:long-oss,rem:coherent-signing} gives
\[
\begin{aligned}
 \Pr[\text{anchor signature on }M\text{ is rejected}]
 &\le\sum_{c=1}^L\epsilon_{a_c,b^A_c(M)}\le e_A,\\
 \Pr[\text{torch }j\text{ signature on }M\text{ is rejected}]
 &\le\sum_{c=1}^L\epsilon_{t_{j,c},b^T_c(M)}\le e_T.
\end{aligned}
\]

\paragraph{CGS errors.}
For the CGS component, we need verification, joint cloning,
and tag correctness. With $B$ from \eqref{eq:B} and the tag
projector $P_K$ from \Cref{rem:native-coherent-tag}, define
\begin{equation}\label{eq:tf-native-errors}
\begin{aligned}
 e_{\rm ver}
 &=1-\Tr(B\proj K),\\
 e_{\rm cl}
 &=D\!\left(\FireClone(0^\mu,0^\mu,\proj K),
                         \proj K^{\otimes2}\right),\\
 e_{\rm tag}
 &=\sup_{\|\ket\psi_{ZE}\|=1}
   \norm{(I-P_K)\FireTag
     (\ket K\otimes\ket\psi_{ZE}\otimes\ket0)}^2.
\end{aligned}
\end{equation}
By \Cref{fact:cgs,rem:native-cloning,rem:native-coherent-tag},
\begin{equation}\label{eq:tf-native-mean-errors}
 \E[e_{\rm ver}+e_{\rm cl}+e_{\rm tag}]
 \le\negl(\lambda),
 \qquad \mu=\lambda^2.
\end{equation}

\paragraph{Affine query access.}
The anchor-preparation circuit \eqref{eq:tf-row-preparation}
uses the same anchor, torch, and CGS flame across its affine queries.
We therefore need the coherent implementation of \eqref{eq:tf-row-circuit} for the
requested affine operation while preserving these three resources.
In principle, we want to argue, since $\Gamma$ agrees with the ideal affine oracle whenever its
checks pass, the error probability is negligible.

\begin{lemma}\label{lem:tf-row-access}
\normalfont
Fix an honest serial $s=(\ell,a)$ and $t\in\Tpub$, and set
\begin{equation}\label{eq:authorization-error}
 d=\min\{1,e_A+e_T+e_{\rm tag}\}.
\end{equation}
For any normalized $\ket\phi_{XCE}$, let
\[
 \ket\psi
 =\ket{A_a}\ket{T_t}\ket K
   \otimes\ket\phi_{XCE}\otimes\ket0_W,
\]
where $X,C$ are the affine-input and answer registers, $E,W$ are auxiliary workspace registers.
Then, for every $c\in[L]$ and $e\in\{0,1\}$,
\begin{equation}\label{eq:approx-row-query}
 \norm{\bigl(\mathcal R_{c,e}^{\mathcal O}
             -U_{\operatorname{affine}_{a_c,e}}\bigr)\ket\psi}
 \le2\sqrt d,
\end{equation}
where the ideal query acts only on $X,C$. Moreover,
\begin{equation}\label{eq:mean-authorization-error}
 \E[d]\le2L(R+1)e(\lambda)+\E[e_{\rm tag}]
 \le\negl(\lambda).
\end{equation}
\end{lemma}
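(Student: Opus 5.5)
The coherent affine query has the structure $\mathcal R^{\mathcal O}_{c,e}=S^\dagger\,U_\Gamma\,S$. Here $S$ is the isometry that coherently computes three things into the zeroed workspace $W$: the anchor signature on $M_A(s,t)$ (the unitary $U^{\mathcal O_{A,\ell}}_{a,M_A(s,t)}$ of \eqref{eq:long}), the torch signature on $M_T(s,t)$, and the CGS tag $\FireTag$ on $z_q$. Since $q$ depends on the input $u$, $S$ is controlled on the basis value of $X$. $U_\Gamma$ is the query to $\Gamma$ from $(q,p,\sigma_A,\sigma_T)$ into $C$.

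The plan is to compare $\mathcal R_{c,e}$ with the hybrid $S^\dagger\,\widetilde U\,S$. The ideal map $\widetilde U$ writes $\operatorname{affine}_{a_c,e}(u)$ into $C$ regardless of the signatures and tag. Because $\widetilde U$ acts only on $X,C$ and commutes with $S$ (which only reads $X$), we get $S^\dagger\widetilde U S=U_{\operatorname{affine}_{a_c,e}}$ exactly on $\ket\psi$, using $S^\dagger S=I$.

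\textbf{Step 1 (good branch).} Let $P$ be the diagonal projector onto basis values of $W$ where all three checks in \eqref{eq:tf-gate} pass. On $\operatorname{Ran}(P)$ the query $U_\Gamma$ agrees with $\widetilde U$: $q\in\mathcal Q_{\rm affine}$ holds by construction, $t\in\Tpub$ holds by hypothesis, and $\Gamma$ returns exactly the affine value. Also, both oracles act diagonally on the check registers, so they preserve $\operatorname{Ran}(P)$ and $\operatorname{Ran}(I-P)$. Hence
\[
(U_\Gamma-\widetilde U)S\ket\psi=(U_\Gamma-\widetilde U)(I-P)S\ket\psi,
\]
and this has norm at most $2\norm{(I-P)S\ket\psi}$.

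\textbf{Step 2 (bad branch).} The main task is to show $\norm{(I-P)S\ket\psi}^2\le d$. A union bound over the three failing events gives
\[
\norm{(I-P)S\ket\psi}^2\le\norm{(I-P_A)S\ket\psi}^2+\norm{(I-P_T)S\ket\psi}^2+\norm{(I-P_K)S\ket\psi}^2,
\]
since $I-P\preceq(I-P_A)+(I-P_T)+(I-P_K)$ for commuting diagonal projectors. Each term is then bounded separately.

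\begin{itemize}
\item \emph{Anchor term.} For each basis value $u$ of $X$, the message $M_A(s,t)$ is fixed; only $q$ depends on $u$, and the signature message does not. By \eqref{eq:coherent-signing-output} applied componentwise, this term is then the probability that the honest anchor signature on $M_A(s,t)$ is rejected, which is at most $e_A$.
\item \emph{Torch term.} The same argument for the torch signature gives at most $e_T$.
\item \emph{Tag term.} This is at most $e_{\rm tag}$ by \eqref{eq:tf-native-errors}. The supremum over $\ket\psi_{ZE}$ there absorbs the $u$-dependence of $z_q$ together with all other registers.
\end{itemize}

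Adding the three bounds, and noting that the squared norm never exceeds $1$, gives $d$. This yields $2\sqrt d$ in \eqref{eq:approx-row-query}.

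The main obstacle is the ordering of $S$. The tag circuit acts on $K$ but its query input $z_q$ depends on $X$, and the three sub-isometries act on disjoint resource registers but share the control register $X$. One therefore has to check that the failure bound for each sub-circuit holds conditioned on, and uniformly over, the branch $u$ and the outputs of the other two. This is why the bounds are stated pointwise (suprema and fixed-message errors) rather than on average.

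For \eqref{eq:mean-authorization-error}, use $d\le e_A+e_T+e_{\rm tag}$ and take expectations with \eqref{eq:outer-sign-errors}: this gives $\E[d]\le 2Le(\lambda)+2RLe(\lambda)+\E[e_{\rm tag}]$. The bound $\E[e_{\rm tag}]\le\negl(\lambda)$ follows from \eqref{eq:tf-native-mean-errors}, and $R$, $L$ are polynomial in $\lambda$.
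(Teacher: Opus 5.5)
Your proposal is correct and follows the paper's proof. You write $\mathcal R^{\mathcal O}_{c,e}=S^\dagger U_\Gamma S$, use that $S$ commutes with the ideal affine query, reduce to $\norm{(U_\Gamma-U)(I-P)S\ket\psi}\le 2\norm{(I-P)S\ket\psi}$, and bound the bad-branch weight by $d$ using the signing and tag errors. Your explicit union bound $I-P\preceq(I-P_A)+(I-P_T)+(I-P_K)$, together with your observations that $M_A(s,t),M_T(s,t)$ do not depend on $u$ and that the supremum in $e_{\rm tag}$ absorbs the $u$-dependence of $z_q$, spells out the step the paper compresses into a single sentence.
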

\begin{proof}
    The proof is given at \Cref{sec:proof:of-lemma-tf-row-access}.
\end{proof}

\begin{theorem}[Verification and cloning correctness]\label{thm:correct}
\normalfont
\Cref{con:torch-fire} satisfies \eqref{eq:correct-ver} for every
polynomially bounded $R\ge0$, and \eqref{eq:correct-clone} for
every polynomially bounded $R,q$ and every $j\in[R]$.
\end{theorem}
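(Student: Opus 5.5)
Verification correctness is immediate from the product structure. An honest spark outputs $\proj{A_a}\otimes\proj K$, and by \eqref{eq:flame} the accepting effect is $\Pi_a\otimes B$ with $\Pi_a=\proj{A_a}$. Hence the acceptance probability is $\Tr(\Pi_a\proj{A_a})\cdot\Tr(B\proj K)=1-e_{\rm ver}$. Taking expectation over oracle sampling and using \eqref{eq:tf-native-mean-errors} gives \eqref{eq:correct-ver}. The only other contribution is the setup abort probability $\Delta_{\rm setup}$ of \Cref{lem:tf-honest-outputs}, which is negligible.

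For cloning correctness I will prove a single-call statement in trace distance and then iterate. Fix the oracles, the keys, $t=t_j$ and $s$. The claim is that one call to $\TFClone$ on the honest input $\proj{T_t}\otimes\proj{A_a}\otimes\proj K$ produces a state $\epsilon_1$-close to $\proj{T_t}\otimes\proj{F_s}^{\otimes 2}$, where
\[
\epsilon_1\le e_{\rm cl}+4L\sqrt d .
\]
The proof is a hybrid argument with two steps.
\begin{itemize}
\item First, replace the output of $\FireClone$ by $\proj K^{\otimes2}$, at cost $e_{\rm cl}$ by its definition in \eqref{eq:tf-native-errors}.
\item Second, walk through the $2L$ calls $\mathcal R^{\mathcal O}_{c,e}$ in order, replacing each by the ideal affine unitary $U_{\operatorname{affine}_{a_c,e}}$. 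Before each replacement, the resources $A,T,K_0$ are exactly in the product state $\ket{A_a}\ket{T_t}\ket K$, and the remaining registers play the role of $\ket\phi_{XCE}$. This holds because every earlier query has already been idealized and the ideal query does not touch the resources. So \Cref{lem:tf-row-access} applies and each replacement costs at most $2\sqrt d$ in Euclidean norm. By unitarity, the norm errors add to $4L\sqrt d$, and trace distance between pure states is bounded by the norm distance.
\end{itemize}
In the fully ideal hybrid, the circuit \eqref{eq:tf-row-preparation} produces $\ket{W_{a_c}}$ on each $Y_c$ with $X_c$ returned to zero. The output is then exactly the torch, two copies of $\ket{A_a}\ket K$, and discarded workspace.

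Next I iterate over $i=1,\ldots,q$. The state after $i$ calls is $i\epsilon_1$-close to the ideal state: the torch $\proj{T_t}$ together with $i+1$ honest flames. This follows by induction, since each call is a CPTP map, trace distance is contractive under it, and on ideal inputs each call incurs at most $\epsilon_1$ by the single-call claim. The final verification of all $q+1$ flames is a product test. On the ideal state it fails with probability at most $(q+1)e_{\rm ver}$ by the union bound, since each flame passes with probability $1-e_{\rm ver}$. Hence, conditioned on no setup abort,
\[
\Pr\Bigl[\bigvee_i b_i=0\Bigr]\le q\,(e_{\rm cl}+4L\sqrt d)+(q+1)e_{\rm ver}.
\]

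To finish, I take expectations over the oracles and keys and add $\Delta_{\rm setup}$. By Jensen's inequality, $\E[\sqrt d]\le\sqrt{\E[d]}$, which is negligible by \eqref{eq:mean-authorization-error}. The terms $\E[e_{\rm cl}]$ and $\E[e_{\rm ver}]$ are negligible by \eqref{eq:tf-native-mean-errors}. Since $q$ and $L$ are polynomial, the whole bound is negligible.

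The main obstacle is checking that \Cref{lem:tf-row-access} legitimately applies at each hybrid step. The lemma needs the resources to be in exactly the product form $\ket{A_a}\ket{T_t}\ket K$ with $W$ at zero. I will arrange the hybrid order so that all previous steps are already ideal, including the CGS cloning step, which replaces $K_0$ by an exact copy of $\ket K$. I also need to confirm that each $\mathcal R_{c,e}$ returns its workspace to zero, so that the resources and workspace are back in the required form for the next query; this is the reason for idealizing in order rather than bounding all errors at once.
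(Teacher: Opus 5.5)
Your proposal is correct and follows the paper's proof essentially step for step: verification via the product effect $\Pi_a\otimes B$, a single-call bound $e_{\rm cl}+4L\sqrt d$ from first replacing the CGS clone and then idealizing the $2L$ affine queries in order via \Cref{lem:tf-row-access}, induction by contractivity, and a final union bound combined with $\E[\sqrt d]\le\sqrt{\E[d]}$. Your extra check that each $\mathcal R_{c,e}$ returns its workspace to zero is not needed: because the prefix before each replaced query is already ideal, that query's input automatically has the resources in product form and fresh workspace.
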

\begin{proof}
\normalfont
Fix a successful setup and the honest states from
\Cref{lem:tf-honest-outputs}.

\paragraph{1. Verification.}
By \eqref{eq:flame} and \Cref{lem:token-test},
\[
 \Tr\!\left[(\Pi_a\otimes B)\proj{F_s}\right]
 =\Tr(\Pi_a\proj{A_a})\Tr(B\proj K)
 =1-e_{\rm ver}.
\]
Hence \eqref{eq:tf-setup-abort} and \eqref{eq:tf-native-mean-errors}
give verification error at most
$\Delta_{\rm setup}+\E[e_{\rm ver}]=\negl(\lambda)$.

\paragraph{2. One cloning call.}
Let $\rho$ be the cloning output on
$\proj{T_t}\otimes\proj{F_s}$.
Compare it with $\widetilde\rho$, obtained by replacing the
CGS cloning output with $\proj K^{\otimes2}$ before preparing
the second anchor.
This replacement changes the state by trace distance $e_{\rm cl}$ i.e. \(D(\rho,\widetilde\rho)\le e_{\rm cl}.\)

It remains to compare the $2L$ anchor-affine queries (2 calls per component) with the ideal
queries in \eqref{eq:tf-row-preparation}.
With workspace retained, let $\ket{\Psi_h}$ be the final state
when the first $h$ queries are ideal and the rest actual,
for $0\le h\le2L$.
Write $\mathcal R_h^{\mathcal O},U_h$ for the two versions of query
$h$, and $\ket{\psi_h}$ for its input after the ideal prefix.
That prefix preserves $A,T,K_0$, so \Cref{lem:tf-row-access} gives
\[
 \norm{\ket{\Psi_{h-1}}-\ket{\Psi_h}}
 =\norm{(\mathcal R_h^{\mathcal O}-U_h)\ket{\psi_h}}
 \le2\sqrt d.
\]
Summing over the queries,
\[
 \norm{\ket{\Psi_0}-\ket{\Psi_{2L}}}
 \le\sum_{h=1}^{2L}
       \norm{\ket{\Psi_{h-1}}-\ket{\Psi_h}}
 \le4L\sqrt d.
\]
The ideal circuit \eqref{eq:tf-row-preparation} has output
\[
 \ket{\Psi_{2L}}
 =\ket{T_t}\otimes\ket{F_s}^{\otimes2}
   \otimes\ket0_{\rm work}.
\]
Passing to trace distance, tracing out workspace, and adding the
CGS cloning error yields
\begin{equation}\label{eq:one-clone-distance}
 D\!\left(\rho_{T'F_0F_1},
          \proj{T_t}\otimes\proj{F_s}^{\otimes2}\right)
 \le e_{\rm cl}+4L\sqrt d=\negl(\lambda).
\end{equation}

\paragraph{3. Repeated cloning.}
Fix $j\in[R]$ and $t=t_j$.
Let $\rho_i$ be the joint state of the torch and all $i+1$ flames
after $i$ calls in \Cref{def:primitive}, and set
\[
 \tau_i=\proj{T_t}\otimes\proj{F_s}^{\otimes(i+1)}.
\]
The $i$th cloning operation $\mathcal C_i$ acts on $T,F_{i-1}$
and leaves $F_0,\ldots,F_{i-2}$ untouched.
Contractivity and \eqref{eq:one-clone-distance} imply
\[
\begin{aligned}
 D(\rho_i,\tau_i)
 &\le D\!\left(\mathcal C_i(\rho_{i-1}),
                \mathcal C_i(\tau_{i-1})\right)
     +D\!\left(\mathcal C_i(\tau_{i-1}),\tau_i\right)\\
 &\le D(\rho_{i-1},\tau_{i-1})+e_{\rm cl}+4L\sqrt d.
\end{aligned}
\]
Since $\rho_0=\tau_0$, induction gives
\begin{equation}\label{eq:tf-repeated-cloning}
 D(\rho_i,\tau_i)\le i\bigl(e_{\rm cl}+4L\sqrt d\bigr).
\end{equation}

\paragraph{4. Acceptance of all flames.}
For the failure event
$\mathcal E=\{\exists i\in\{0,\ldots,q\}:b_i=0\}$,
\eqref{eq:tf-repeated-cloning} and a union bound give
\[
\begin{aligned}
 \Pr_{\rho_q}[\mathcal E]
 &\le\Pr_{\tau_q}[\mathcal E]+D(\rho_q,\tau_q)\\
 &\le(q+1)e_{\rm ver}+q\bigl(e_{\rm cl}+4L\sqrt d\bigr).
\end{aligned}
\]
Averaging and using $\E[\sqrt d]\le\sqrt{\E[d]}$, we obtain
\begin{equation}\label{eq:correctness-total-error}
\begin{aligned}
 \Pr[\text{cloning correctness fails}]
 &\le\Delta_{\rm setup}
    +q\bigl(\E[e_{\rm cl}]+4L\sqrt{\E[d]}\bigr)
    +(q+1)\E[e_{\rm ver}]\\
 &\le\negl(\lambda).
\end{aligned}
\end{equation}
The last inequality uses \eqref{eq:tf-setup-abort},
\eqref{eq:tf-native-mean-errors}, and
\eqref{eq:mean-authorization-error}, with $R,q,L$ polynomially bounded.
\end{proof}
\subsection{Untelegraphability}
\label{sec:untelegraphability}
We now prove the \Cref{con:torch-fire} satisfies the untelegraphability property as in \Cref{def:untelegraphability}.
By \eqref{eq:flame}, the accepting effect satisfies
\[
 \Pi_a\otimes B\preceq I_A\otimes B.
\]
We therefore reduce to CGS untelegraphability by discarding the
output anchor. 
\begin{theorem}\label{thm:tf-untelegraphability}
\normalfont
for every polynomially bounded $R$ and every polynomial query bounded pair of adversaries 
$\mathcal A^{\mathcal O}=(\mathcal A_1^{\mathcal O},
\mathcal A_2^{\mathcal O})$,
\begin{equation}\label{eq:tf-untelegraphability}
 \Pr\!\left[
  \mathsf{TELEGRAPH}^{\mathsf{TF}}_{\mathcal A^{\mathcal O}}
       (1^\lambda,1^R)=1
 \right]\le\negl(\lambda).
\end{equation}
\end{theorem}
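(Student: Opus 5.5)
We reduce to the untelegraphability of CGS fire (\Cref{fact:cgs}, i.e.\ \Cref{def:native-fire-untelegraphability} at parameter $\mu=\lambda^2$). Since the torch-fire accepting effect is $\Pi_a\otimes B\preceq I_A\otimes B$, any successful telegraphing pair $(\mathcal A_1,\mathcal A_2)$ for torch-fire yields, after discarding the anchor register $A$ of the output $F^*$, a register $K^*$ accepted by $\FireVer(0^\mu,0^\mu,\cdot)$ with at least the same probability. It therefore suffices to build a pair $(\mathcal B_1,\mathcal B_2)$ in the CGS telegraph game that perfectly simulates the torch-fire telegraph game to $(\mathcal A_1,\mathcal A_2)$ and outputs $(0^\mu,K^*)$.

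\textbf{Simulation of everything except $\mathcal O_K$.} The reductions use shared classical randomness, which the CGS definition permits. The reductions do not know each other's queries, so they cannot lazily sample the exponentially large oracles $\mathcal O_T$, $(\mathcal O_{A,\ell})_\ell$ consistently. Instead, each oracle is given by a seed. We would first pass to $2q$-wise independent (or otherwise efficiently simulable) functions in the style of Zhandry, which are perfectly indistinguishable to $q$-query algorithms. Alternatively, since the reductions need not be efficient (CGS security is query-bounded), both parties can read the same full sample from the shared randomness. Either way both reductions hold identical copies of $\mathcal O_T$, all $\mathcal O_{A,\ell}$, the keys $t_1,\dots,t_R$ (hence $\mathcal T$), and $\mathsf{td}_T$. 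With these, each reduction prepares the torches in $J_1$ or $J_2$ locally via \eqref{eq:tf-row-preparation}, and it can evaluate every $\operatorname{affine}_{a_c,e}$ and every OSS verification itself. Queries to $\mathcal O_K$ are forwarded to the CGS challenger's public oracles. A spark query of $\mathcal A_1$ is answered by $\mathcal B_1$ by requesting a CGS flame $(0^\mu\text{-serial},K)$ and attaching a freshly generated anchor $(\ell,a,A)$ from $\mathcal O_{A,\ell}$ with $\ell$ chosen by $\mathcal B_1$'s private randomness. This matches $\TFSpark$ exactly, since the CGS part of a spark is $\proj K$ for the same oracles. The two classical messages are passed through unchanged.

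\textbf{Simulating $\Gamma$.} This is the one oracle mixing CGS data with sampled data, and the step needing care. A query $(q,p,\sigma_A,\sigma_T)$ is answered by computing, in superposition, the well-formedness, the $t\in\Tpub$ check and both OSS verifications locally. In addition, one computes $\FireCheck(z_q,p)$ through the public oracle as $\ind[O_2(z_q,p)\ne\perp]$, using \eqref{eq:cgs-forward-check}. This works because $z_q=\enc(\texttt{forward},x_q)$ is always a forward query, on which $\mathcal G$ never returns $\perp$. The reduction then writes $\operatorname{affine}_{a_c,e}(u)$ into the answer register controlled on all checks, and uncomputes the $O_2$ answer with a second $O_2$ query. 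The result is an exact unitary simulation of $\Gamma$ at the cost of two CGS queries per $\Gamma$ query, so the total query count stays polynomial.

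\textbf{Conclusion and main obstacle.} The separability and independence conditions are preserved, since the reductions share only classical randomness independent of the CGS setup. The simulated view is identical to the real one, so
\[
\Pr[\mathsf{TELEGRAPH}^{\mathsf{TF}}=1]\le\Pr[\mathsf{TELEGRAPH}^{\mathsf{Fire}}_{\mathcal B}=1]\le\negl(\mu)=\negl(\lambda).
\]
The main obstacle I expect is the consistent sampling of the huge independent oracle families across two parties that may use only shared classical randomness. This is also where one must note that the CGS serial check $s=pk$ is trivially met by the fixed $0^\mu$ convention of \Cref{con:torch-fire}.
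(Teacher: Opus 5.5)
Your proposal is correct and follows the paper's proof essentially step for step. The shared steps are: simulating the torch and anchor sources from shared classical randomness, simulating $\Gamma$'s tag check exactly through $O_2$ on forward queries, answering spark by attaching a fresh anchor to a challenger CGS flame, and dropping the anchor via $\Pi_a\otimes B\preceq I_A\otimes B$.

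One small fix is needed. The CGS challenger's public key is a uniformly random label $pk\in\{0,1\}^\mu$, and $\FireVer$ rejects unless $s^*=pk$. So $\mathcal B_2$ must output the $pk$ it received as its serial (the paper's $\widehat s$), not $0^\mu$. Since $B$ does not depend on this label, the rest of your argument is unchanged.
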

\begin{proof}
\normalfont
Suppose $\mathcal A$ is an adversary to the untelegraphability game of the torch-fire object (\Cref{def:untelegraphability}) making at most $Q$ public-oracle queries
and $v$ spark requests. Using that, 
we construct an adversary 
$(\mathcal B_1^{\mathcal O_K},\mathcal B_2^{\mathcal O_K})$
for the untelegraphability game of the CGS fire scheme as per \Cref{def:native-fire-untelegraphability} at $\mu=\lambda^2$,
with $\mathcal B_i$ running $\mathcal A_i$.

\paragraph{1. Simulating the oracles.}
Use the challenged CGS collection as $\mathcal O_K$ in
\eqref{eq:tf-public-oracles}; write $\widehat s$ for its public
key and serial. Sample the remaining torch and anchor data,
including digest oracles and published keys, as in
\Cref{subsec:tf-construction}, using shared classical randomness
independent of $\mathcal O_K$.
Both simulated adversaries $(\mathcal B_1^{\mathcal O_K},\mathcal B_2^{\mathcal O_K})$ can then evaluate
$\mathcal O_T$, $(\mathcal O_{A,\ell})_\ell$, and $\mathcal T$
locally, forward $\mathcal O_K$ queries to the challenger,
and simulate setup with $pk=R$.

\paragraph{2. Simulating $\Gamma$.}
For a well-formed query $q=(s,t,c,e,u)$, let
$\chi(q,\sigma_A,\sigma_T)$ be the predicate which is 1 if the
torch membership check and two OSS checks in \eqref{eq:tf-gate} pass. 
Both $\chi$ and $\operatorname{affine}_{a_c,e}(u)$ are determined
by the sampled oracles and data. For the tag check, put
\[
 z_q=\enc(\texttt{forward},\mathsf{code}(q)).
\]
Since $\mathcal G(z_q)=\mathcal P(\mathsf{code}(q))\ne\perp$,
\eqref{eq:cgs-forward-check} gives
\begin{equation}\label{eq:tf-telegraph-tag-check}
 \FireCheck(z_q,p)=\ind[O_2(z_q,p)\ne\perp].
\end{equation}
Consequently, \eqref{eq:Gamma} becomes
\[
 \Gamma(q,p,\sigma_A,\sigma_T)=
 \begin{cases}
  \operatorname{affine}_{a_c,e}(u),
    &\chi(q,\sigma_A,\sigma_T)=1
       \ \wedge\ O_2(z_q,p)\ne\perp,\\
  \perp,&\text{otherwise}.
 \end{cases}
\]

\paragraph{3. Simulating issuance and output.}
\begin{enumerate}[label=(\roman*),leftmargin=*,itemsep=5pt]
 \item \emph{Torches.}
 For the chosen disjoint sets $J_1,J_2$, each $\mathcal B_i$
 prepares its assigned torches from the sampled matrices and shifts.
 By \Cref{lem:tf-honest-outputs}, this gives the required joint state
 \[
  \rho_{T_{J_1}T_{J_2}}
  =\bigotimes_{j\in J_1\cup J_2}\proj{T_{t_j}}.
 \]

 \item \emph{Spark.}
 For each request, $\mathcal B_1$ obtains a fresh $K$ from the
 CGS challenger, samples $\ell\leftarrow\{0,1\}^{\lambda}$,
 and runs $\OSSGenQKey^{\mathcal O_{A,\ell}}(1^\lambda)$
 to obtain $(a,A)$. It returns
 \[
  (s,F)=((\ell,a),AK),
  \qquad \rho_{F\mid s}=\proj{A_a}\otimes\proj K,
 \]
 exactly as in \Cref{con:torch-fire}.

 \item \emph{Output.}
 Given $s^*=(\ell,a)$ and $F^*=AK$, party $\mathcal B_2$ returns
 \[
  (s^*,\rho_{AK})\longmapsto
       (\widehat s,\Tr_A\rho_{AK}).
 \]
\end{enumerate}
The classical messages exchanged between $(\mathcal A_1^{\mathcal O},
\mathcal A_2^{\mathcal O})$ are kept unchanged.

\paragraph{4. Comparing success probabilities.}
Let $p_{\rm TF},p_{\rm CGS}$ be the respective success probabilities,
and let $\rho_{AK}^{s}$ be the (possibly subnormalized) state on valid
outputs with serial $s$. The CGS effect $B$ is the same for every $s$.
Thus, \Cref{fact:cgs} gives
\begin{equation}\label{eq:tf-native-domination}
\begin{aligned}
 p_{\rm TF}
 &=\sum_{s=(\ell,a)}
   \Pr\!\left[s^*=s\ \wedge\ \TFVer\text{ accepts}\right]\\
 &=\E\!\left[
   \sum_{s=(\ell,a)}
    \Tr\!\left[(\Pi_a\otimes B)\rho_{AK}^{s}\right]
   \right]\\
 &\le\E\!\left[
   \sum_{s=(\ell,a)}
    \Tr\!\left[(I_A\otimes B)\rho_{AK}^{s}\right]
   \right]\\
 &=\E\!\left[
   \sum_{s=(\ell,a)}
    \Tr\!\left[B\,\Tr_A(\rho_{AK}^{s})\right]
   \right]\\
 &\le p_{\rm CGS}
 \le\negl(\mu)=\negl(\lambda).
\end{aligned}
\end{equation}
The first inequality uses
$\Pi_a\preceq I_A$ and $B\succeq0$.
\end{proof}
\section{Extraction and Global Non-proliferation}
\label{sec:extraction}
\label{sec:global}

We prove the global non-proliferation property (\Cref{def:global}) for \Cref{con:torch-fire} in the
per-instance oracle model of \Cref{subsec:tf-construction}.
All public oracles are as shown in 
\eqref{eq:tf-public-oracles}.

The idea is to convert successful cloning at several separated locations $i\in [N]$
into extracting valid torch signatures for their respective challenge labels $\ell_i$.

\paragraph{The experiment.}
Fix polynomially bounded $R,m,N$, with $0\le m\le R$, and an
adversary
\[
 \mathcal A^{\mathcal O}
 =(\mathcal A_0^{\mathcal O},\mathcal A_1^{\mathcal O},\ldots,
   \mathcal A_N^{\mathcal O})
\]
as in \Cref{def:global}.
Let $v$ is an upper bound on the number of pre-split spark queries, and let
$Q_{\rm pre}$ is upper bound on the number of pre-split oracle queries.
The challenge serial at location $i$ is:
\[
 s_i=(\ell_i,a_i),\qquad
 \ell_i\in\{0,1\}^{\lambda},\qquad
 a_i=(a_{i,1},\ldots,a_{i,L})\in (\{0,1\}^r)^L.
\]
Let $X_i$ indicate that both output flames (outcome of the cloning experiment) at location $i$ verify
under the serial $s_i$. The winning event is $\sum_iX_i\ge m+1$.
It is trivially impossible when $m\ge N$, so throughout the proof we may assume
$m<N$ and hence $N\ge1$.

\paragraph{Distinct labels $\ell\in \{0,1\}^\lambda$.}
Define the \emph{ideal experiment} by additionally declaring
failure if any two labels $\ell$ among the $v$ pre-split and $N$ challenge spark
queries coincide. If $p_{\rm real}$ and $p_{\rm ideal}$ are the real and ideal
winning probabilities for an adversary, respectively, then a union bound gives
\begin{equation}\label{eq:global-reference-comparison}
 p_{\rm real}\le p_{\rm ideal}+\binom{v+N}{2}2^{-\lambda}=p_{\rm ideal}+\negl(\lambda)
\end{equation}
for polynomial $v(\lambda), N(\lambda)$.

\subsection{Extraction from separated computations}
\label{subsec:separated-extraction}

We prove a general extraction lemma for $N$ local oracle
computations on a possibly entangled input.
Its application in the context of torch-fire game is given in
\Cref{subsec:extraction-oss}.
Previous work considers simultaneous extraction from two entangled
oracle algorithms~\cite{CMP24}, or from a common quantum
input with essentially fixed outputs~\cite{CLPY25}.
Our lemma compares, for every $d$, the probability that at least
$d$ non-communicating parties accept with the probability that
at least $d$ local extractors output witnesses.
The extractors use only the modified oracles. 

\paragraph{The input spaces.}
For each location $i\in[N]$, let $\mathcal H_i$ be its input
space and $\mathcal W_i$ its workspace.
Let $\rho$ be a possibly subnormalized joint density operator on
$\mathcal H=\bigotimes_{i=1}^N\mathcal H_i$, and let $\rho_i$
be its reduced state at location $i$:
\[
 \rho\succeq0,\qquad
 \Tr\rho\le1,\qquad
 \rho_i=\Tr_{\bigotimes_{k\ne i}\mathcal H_k}(\rho).
\]
Each computation operates only on its own input and workspace;
the locations do not communicate.

\paragraph{The two oracle computations.}
At location $i$, consider the q-query, original and modified mode of computation given by:
\begin{equation}\label{eq:original-modified-computation}
\begin{aligned}
 U_i^1
 &=G_{i,q}O_{i,q}^1G_{i,q-1}\cdots
   G_{i,1}O_{i,1}^1G_{i,0},\\
 U_i^0
 &=G_{i,q}O_{i,q}^0G_{i,q-1}\cdots
   G_{i,1}O_{i,1}^0G_{i,0}.
\end{aligned}
\end{equation}
Both act on the joint space $\mathcal H_i\otimes\mathcal W_i$.
Here $G_{i,0},\ldots,G_{i,q}$ are are common  unitary operators, whereas
 $O_{i,j}^1$ and $O_{i,j}^0$
are the original and modified oracle query unitaries. The unitaries $G_{i,j}$ may also include identical oracle
queries.
\paragraph{Acceptance.} At location $i\in [N]$,
Let $J_i$ be the isometry that appends zero workspace to an input space.
\[
 J_i:\mathcal H_i\longrightarrow
       \mathcal H_i\otimes\mathcal W_i,
 \qquad
 J_i\ket\psi=\ket\psi\ket0_{\mathcal W_i}.
\]
Let $\Pi_i$ be an
efficiently measurable final acceptance projector.
The accepting maps for the original and modified computations are
\begin{equation}\label{eq:extraction-local-maps}
 A_i=\Pi_iU_i^1J_i,
 \qquad
 C_i=\Pi_iU_i^0J_i.
\end{equation}
Their acceptance probabilities on the local input $\rho_i$ are
\[
 \Pr[\text{original accepts}]
   =\Tr(A_i\rho_iA_i^\dagger),
 \qquad
 \Pr[\text{modified accepts}]
   =\Tr(C_i\rho_iC_i^\dagger).
\]

\paragraph{Witness at a position}
Let $Q_i$ be a computational-basis diagonal projector on the
input register (i.e. on the Hilbert space $\mathcal H_i$) such that
\begin{equation}\label{eq:witness-proj}
 (O_{i,j}^1-O_{i,j}^0)(I-Q_i)=0
 \qquad\text{for every }j\in[q].
\end{equation}
Thus the two oracles agree on every query outside the support of $Q_i$.
A classical query input $x$ is a \emph{witness} at location $i\in [N]$ if it is in the support of the projector $Q_i$ i.e, 
\begin{equation}\label{eq:witness}
 Q_i\ket x=\ket x.
\end{equation}

\begin{remark}
\label{rem:extraction-oracle-access}
\normalfont
Let $U_{i,<j}^0$ be the unitary part of the modified computation
before its $j$th query:
\begin{equation}\label{eq:U-prefix-ops}
\begin{aligned}
 U_{i,<1}^0&=G_{i,0},\\
 U_{i,<j}^0
 &=(G_{i,j-1}O_{i,j-1}^0)\cdots
   (G_{i,1}O_{i,1}^0)G_{i,0},
   \qquad 2\le j\le q.
\end{aligned}
\end{equation}
Appending the zero workspace providing isometry $J_i$, we also define
\begin{equation}\label{eq:V-prefix-ops}
    V_{i,j}=U_{i,<j}^0J_i
\end{equation}
We assume that all of $U_{i,<j}^0$, $(U_{i,<j}^0)^\dagger, Q_i$
can be implemented efficiently, using only the oracles 
available to $U_i^0$.
\end{remark}

\paragraph{Sampling a witness.}
A witness sampler chooses an index $j\in [q]$ uniformly,
runs the modified computation up to that position, and measures
its query input.
It succeeds if the measured input is a witness, and 
fails otherwise.

The effect describing the success probability of this sampler is
\begin{equation}\label{eq:extraction-sampler}
 E_i=\frac1q\sum_{j=1}^qV_{i,j}^\dagger Q_iV_{i,j},
 \qquad
 0\preceq E_i\preceq I.
\end{equation}
In particular,
\[
 \Pr[\text{the sampler outputs a witness on }\rho_i]
 =\Tr(E_i\rho_i).
\]

We note that, measuring a uniformly chosen oracle query is a standard
extraction technique as done in ~\cite{CMP24}.
Here the sampler runs the modified computation.

\paragraph{Filtering the input.}
On input components $i\in [N]$, where the above witness sampling procedure succeeds with low probability, 
we will bound the probability of acceptance in the original experiment using that of the modified
experiment. To do that, first fix an integer
$\beta\ge1$ and define a filtering operator
\begin{equation}\label{eq:extraction-filter-definition}
 L_i=(I-E_i)^\beta.
\end{equation}
For every unit eigenvector $\ket e$ of $E_i$,
\[
 E_i\ket e=u\ket e
 \quad\Longrightarrow\quad
 L_i\ket e=(1-u)^\beta\ket e,
 \qquad u\in[0,1].
\]
Here $u$ is the witness-sampling probability on $\ket e$.
The filter therefore suppresses components with large
witness-sampling probability.

The implementation follows the modified Marriott--Watrous
procedure of \cite{HLM17}.
As shown in the proof, each successful cycle applies $I-E_i$,
so $\beta$ successful cycles implement the branch
\[
 \rho_i\longmapsto L_i\rho_iL_i.
\]

\paragraph{Acceptance probability after filtering for the modified experiment.}
Let $\varepsilon_\beta$ be the sum, over all locations $i\in [N]$, of the
probabilities that the filter $L_i$ succeeds and the modified
computation then accepts:
\begin{equation}\label{eq:extraction-filtered-null}
 \varepsilon_\beta
 =\sum_{i=1}^N
   \E\!\left[
     \Tr\!\left(
       C_iL_i\rho_iL_iC_i^\dagger
     \right)
   \right].
\end{equation}
The expectation is over the distribution of the
classical oracles.
\\
Now, we are ready to formally state the extraction lemma.
\begin{lemma}
\label{lem:compact-extraction}
\normalfont
Fix an integer $\beta\ge1$. 
Consider the original and modified $q$-query oracle aided computation as described in \Cref{eq:original-modified-computation}. Let $X_i$ indicate acceptance by
the original computation at location $i\in [N]$.
For every $\delta\in(0,1)$, for every location $i\in [N]$ there are local extracting procedures that output either a 
a witness (\ref{eq:witness-proj},\ref{eq:witness}) or a failure. If $Y_i$ indicates a successful witness output, then,
for every $d\in[N]$,
\begin{equation}\label{eq:compact-extraction}
\begin{aligned}
 \Pr\!\left[\sum_{i=1}^N X_i\ge d\right]
 &\le \Pr\!\left[\sum_{i=1}^N Y_i\ge d\right]
       +N\delta(2+\delta)\\
 &\quad +(1+\delta^{-1})
       \left(2\varepsilon_\beta
             +\frac{8Nq^2}{2\beta+1}\right).
\end{aligned}
\end{equation}
The two probabilities refer to separate executions on the same
input state $\rho$.
Each witness extracting procedure receives only its local input and fresh
workspace. It uses the the unitaries $U^0_{i,<j}, {U^0_{i,<j}}^\dagger$
and witness testing projectors $Q_i$, as in \Cref{rem:extraction-oracle-access}, with no communication between separate locations.
Moreover, the witness extraction procedure is $\QPT$ under the assumption of \Cref{rem:extraction-oracle-access} and 
polynomially bounded  $\beta$, and $1/\delta$. 
\end{lemma}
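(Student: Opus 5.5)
The proof has three parts: a local operator inequality at each location, a construction of the local extractors, and a hybrid over locations that lifts the local comparisons to the threshold events.

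\emph{Step 1 (operator hybrid).} Write $A_i-C_i=\Pi_i(U_i^1-U_i^0)J_i$ and telescope over the $q$ queries, in the style of \Cref{lem:bbbv-hybrid}. Each term is of the form
\[
\Pi_i\cdot(\text{unitary})\cdot(O^1_{i,j}-O^0_{i,j})Q_iV_{i,j},
\]
with the original oracle used after position $j$ and the modified oracle before it. By \eqref{eq:witness-proj}, $(O^1-O^0)=(O^1-O^0)Q_i$, and $\norm{O^1-O^0}\le2$. The triangle inequality and Cauchy--Schwarz over the $q$ terms then give
\[
(A_i-C_i)^\dagger(A_i-C_i)\preceq 4q\sum_j V_{i,j}^\dagger Q_iV_{i,j}=4q^2E_i .
\]
Conjugating by $L_i$, which commutes with $E_i$, and using $\sup_{u\in[0,1]} u(1-u)^{2\beta}\le 1/(2\beta+1)$ yields
\[
L_i(A_i-C_i)^\dagger(A_i-C_i)L_i\preceq 4q^2E_i(I-E_i)^{2\beta}\preceq \tfrac{4q^2}{2\beta+1}I .
\]
From this I would conclude
\[
L_iA_i^\dagger A_iL_i\preceq 2L_iC_i^\dagger C_iL_i+\tfrac{8q^2}{2\beta+1}I .
\]
Taking traces against $\rho_i$ and summing over $i$ produces the term $2\varepsilon_\beta+8Nq^2/(2\beta+1)$.

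\emph{Step 2 (extractor).} The extractor at location $i$ proceeds as follows.
\begin{itemize}
\item It appends the registers for a uniform superposition over $j$ and applies the coherent prefix $V_{i,j}$. This is an isometry $V_i$ with $V_i^\dagger(I\otimes Q_i)V_i=E_i$.
\item It measures $Q_i$; on success it outputs the measured query.
\item Otherwise it alternates, for $t=\lceil 2\beta/\delta^2\rceil$ rounds, between the projector $V_iV_i^\dagger$ and $Q_i$, outputting at the first $Q_i$ success.
\end{itemize}
By Jordan's lemma, on an eigenvector of $E_i$ with eigenvalue $u$ the dynamics lie in a two-dimensional invariant subspace, and each later round succeeds with probability $2u(1-u)$.

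The goal is the bound $W_i\succeq (I-L_i)^2-\delta I$ on the success effect $W_i$, proved by the case split on eigenvalues.
\begin{itemize}
\item For $u\ge\delta/(2\beta)$: the failure probability is at most $(1-u)(1-2u(1-u))^{t}$, which is at most $\delta$ by the choice of $t$. Eigenvalues near $1$ are handled by the first measurement.
\item For small $u$: $(1-(1-u)^\beta)^2\le(\beta u)^2\le\delta$.
\end{itemize}
Efficiency follows from \Cref{rem:extraction-oracle-access}, since $V_iV_i^\dagger$ is implemented by uncomputing, checking for zero, and recomputing.

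\emph{Step 3 (combination).} Split $A_i=A_i(I-L_i)+A_iL_i$ and apply the weighted Cauchy--Schwarz bound
\[
(X+Y)^\dagger(X+Y)\preceq(1+\delta)X^\dagger X+(1+\delta^{-1})Y^\dagger Y .
\]
With $A_i^\dagger A_i\preceq I$ this gives
\[
A_i^\dagger A_i\preceq(1+\delta)(I-L_i)^2+(1+\delta^{-1})L_iA_i^\dagger A_iL_i .
\]
Combining with Step 2,
\[
A_i^\dagger A_i-W_i\preceq \delta(2+\delta)I+(1+\delta^{-1})L_iA_i^\dagger A_iL_i ,
\]
and Step 1 then controls the last term.

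Next, write $\Pr[\sum X_i\ge d]=\Tr(\Theta_X\rho)$, where $\Theta_X$ is the threshold effect built from the local effects $A_i^\dagger A_i$ and $I-A_i^\dagger A_i$; define $\Theta_Y$ likewise from $W_i$. This is valid because the local computations act on separate registers, so their effects tensor. Replace location $i$'s effect by the extractor's one at a time. The difference between consecutive hybrids is $(A_i^\dagger A_i-W_i)\otimes G_i$, where $G_i\succeq0$ is the effect of exactly $d-1$ successes among the other locations, in their current hybrid form; the terms with $d$ or more successes elsewhere cancel. Moreover $G_i\preceq I$. Tensoring the local inequality with $G_i$, tracing against $\rho$, and using $\Tr((P\otimes G_i)\rho)\le\Tr(P\rho_i)$ for $P\succeq0$ bounds each hybrid step. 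Summing the $N$ steps and taking the expectation over the oracles gives \eqref{eq:compact-extraction}.

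\emph{Main obstacle.} I expect Step 2, the lower bound on $W_i$, to be the hardest part. It requires exact bookkeeping of the alternating-projection success probabilities, including the independence claim across rounds and the boundary eigenvalues $u\in\{0,1\}$ where the Jordan blocks are one-dimensional. A secondary risk is the exact constant in Step 1, where a factor of $2$ enters when converting between $A_i$ and $C_i$.
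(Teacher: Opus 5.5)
Your proposal is correct and follows the paper's proof essentially step for step. It has the same ingredients: the operator BBBV bound $(A_i-C_i)^\dagger(A_i-C_i)\preceq 4q^2E_i$ sharpened by the filter $L_i=(I-E_i)^\beta$; the alternating-projector extractor with $t=\lceil 2\beta/\delta^2\rceil$, where the eigenvalue split at $\delta/(2\beta)$ gives $W_i\succeq(I-L_i)^2-\delta I$; the weighted Cauchy--Schwarz local comparison; and the location-by-location threshold hybrid with difference $(A_i^\dagger A_i-W_i)\otimes G_i$. The only addition in the paper is an explicit postselected implementation of $L_i$, in which each successful cycle applies $I-E_i$; this is not needed for the inequality itself, but it makes $\varepsilon_\beta$ something the later anchor reduction can actually run.
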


\begin{proof}
\normalfont
We first work at one location and omit its index.

\paragraph{1. Comparing the two computations.}
We use the standard oracle-hybrid argument of
\cite{BBBV97}, as in
\Cref{lem:bbbv-hybrid}, written as an operator inequality.
For each query position $j$, let $B_j$ be the suffix unitary of
the original computation after the $j$-th query:
\[
 B_q=G_q,\qquad
 B_j=G_qO_q^1G_{q-1}\cdots O_{j+1}^1G_j
 \quad (1\le j<q).
\]
Replacing the queries (original and modified oracle queries) one at a time gives the telescoping identity
\[
 A-C=\sum_{j=1}^q\Pi B_j(O_j^1-O_j^0)V_j.
\]
Recall $V_j$ is the prefix relative to the modified oracles defined as in $\Cref{eq:V-prefix-ops}$.
By the witness condition (\Cref{eq:witness-proj}) and unitarity, we also have,
\[
 O_j^1-O_j^0=(O_j^1-O_j^0)Q,
 \qquad
 \norm{O_j^1-O_j^0}_{\rm op}\le2,
 \qquad
 \norm{\Pi B_j}_{\rm op}\le1.
\]
Consequently, for every input vector $\ket\psi$, by triangle-inequality,
\begin{align*}
 \norm{(A-C)\ket\psi}^2
 &\le4\left(\sum_{j=1}^q\norm{QV_j\ket\psi}\right)^2\\
 &\le4q\sum_{j=1}^q\norm{QV_j\ket\psi}^2
  =4q^2\bra\psi E\ket\psi.
\end{align*}
The second inequality is Cauchy--Schwarz and the final equality is due the definition of the operator $E$ as in \Cref{eq:extraction-sampler}.
Since this is true for all unit vector $\ket{\psi}$,
\begin{equation}\label{eq:extraction-hybrid}
 (A-C)^\dagger(A-C)\preceq4q^2E.
\end{equation}

\paragraph{2. Implementing the filter.}
We construct a procedure whose successful branch applies the operator $I-E$ on the input register.
Repeating it $\beta$ times,  we will implement the filter operator $L=(I-E)^\beta$. The filter follows the modified Marriott--Watrous procedure of
Harrow, Lin, and Montanaro~\cite{HLM17}.
For this purpose, we first implement the witness sampling experiment coherently.

Let $R$ be a register of $\log_2q$ qubits, with computational-basis
states labelled by $j\in[q]$.
Let $J_0$ append both $R$ and the circuit workspace $\mathcal W$
in their zero states:
\[
 J_0\ket\psi
 =\ket0_RJ\ket\psi
 =\ket0_R\ket\psi\ket0_{\mathcal W},
 \qquad J_0^\dagger J_0=I.
\]
We recall from \Cref{rem:extraction-oracle-access} $V_j=U_{<j}^0J$, where $U_{<j}^0$ is the unitary part of
the modified computation before query $j$.

Apply Hadamards on $R$ and then apply $U_{<j}^0$ controlled on
its value $j$, without measuring $R$.
Let, the resulting joint unitary is $V$. It satisfies
\[
 VJ_0\ket\psi
 =\frac1{\sqrt q}\sum_{j=1}^q
      \ket j_RU_{<j}^0J\ket\psi
 =\frac1{\sqrt q}\sum_{j=1}^q
      \ket j_RV_j\ket\psi.
\]
Under the assumption of \Cref{rem:extraction-oracle-access}, we note that both $V$ and $V^\dagger$ have efficient implementation.
\\~\\
Now, On the enlarged space, define the projector
\begin{equation}\label{eq:proj-pi0-piQ}
 \Pi_0=J_0J_0^\dagger
      =\proj{0}_R\otimes I\otimes\proj{0}_{\mathcal W},
 \qquad
 \Pi_Q=I_R\otimes Q.
\end{equation}
The projector $\Pi_0$ checks whether both appended registers
are zero, leaving the input register unrestricted.
The projector $\Pi_Q$ tests the witness condition on the query
input and acts identically on $R$.
\\
Combining the operators, we have the following important identity:
\begin{equation}\label{eq:extraction-compression}
\begin{aligned}
 J_0^\dagger V^\dagger\Pi_QVJ_0
 &=\frac1q\sum_{k,j=1}^q
      \langle k|j\rangle\,V_k^\dagger QV_j
 =\frac1q\sum_{j=1}^qV_j^\dagger QV_j=E
\end{aligned}
\end{equation}
Also,
\begin{equation}
    \Pi_0V^\dagger(I-\Pi_Q)VJ_0
 =J_0(I-E)
\end{equation}
The last identity follows from $\Pi_0=J_0J_0^\dagger$,
$V^\dagger V=I$, and $J_0^\dagger J_0=I$.
\\
We are now ready to describe the filter operator. One filter cycle consists of the following operations.
\begin{enumerate}[label=(\roman*),leftmargin=*,itemsep=3pt]
\item Apply $J_0$.
 \item Apply $V$ and measure $\{\Pi_Q,I-\Pi_Q\}$.
       Continue only on outcome $I-\Pi_Q$.
 \item Apply $V^\dagger$ and measure $\{\Pi_0,I-\Pi_0\}$.
       Continue only on outcome $\Pi_0$.
\end{enumerate}
Every other outcome is failure.

By \eqref{eq:extraction-compression}, the (unnormalized) vector
after one successful cycle is
\[
 \Pi_0V^\dagger(I-\Pi_Q)VJ_0\ket\psi
 =J_0(I-E)\ket\psi.
\]
Since the appended registers are again zero, the same cycle
can be repeated on the remaining input.
After $\beta$ successful cycles, the (unnormalized) vector is
\[
 J_0(I-E)^\beta\ket\psi
 =J_0L\ket\psi
 =\ket0_R\bigl(L\ket\psi\bigr)\ket0_{\mathcal W}.
\]

In particular, for a unit eigenvector $E\ket e=u\ket e$,
\[
 \Pr[\text{all $\beta$ filter cycles succeed on }\ket e]
 =\norm{J_0L\ket e}^2
 =\norm{L\ket e}^2
 =(1-u)^{2\beta}.
\]

\paragraph{3. Bounding acceptance after filtering.}
We will compare the  acceptance probabilities of the original and modified computation on the filtered input.
Their successful maps are $AL$ and $CL$, respectively.
We first bound their difference $(A-C)L$.

Since $L=(I-E)^\beta$ is Hermitian and commutes with $E$,
\eqref{eq:extraction-hybrid} gives
\[
 L(A-C)^\dagger(A-C)L
 \preceq4q^2LEL
 =4q^2E(I-E)^{2\beta}.
\]
The eigenvalues of $E(I-E)^{2\beta}$ are
$u(1-u)^{2\beta}$, where $u\in[0,1]$ ranges over the eigenvalues
of $E$.
This scalar function is maximized at $u=1/(2\beta+1)$, so
\[
 u(1-u)^{2\beta}
 \le\frac1{2\beta+1}
      \left(\frac{2\beta}{2\beta+1}\right)^{2\beta}
 \le\frac1{2\beta+1}.
\]
Consequently,
\[
 L(A-C)^\dagger(A-C)L
 \preceq\frac{4q^2}{2\beta+1}I.
\]

Now decompose
\[
 AL=CL+(A-C)L.
\]
Using $\norm{x+y}^2\le2\norm{x}^2+2\norm{y}^2$, for every
input vector $\ket\psi$ we obtain
\begin{align*}
 \norm{AL\ket\psi}^2
 &\le2\norm{CL\ket\psi}^2
     +2\norm{(A-C)L\ket\psi}^2\\
 &\le2\norm{CL\ket\psi}^2
     +\frac{8q^2}{2\beta+1}\norm{\ket\psi}^2.
\end{align*}
Since this holds for every $\ket\psi$, it is equivalent to
\begin{equation}\label{eq:extraction-filtered-acceptance}
 LA^\dagger AL
 \preceq2LC^\dagger CL+\frac{8q^2}{2\beta+1}I.
\end{equation}
Restoring the location indices, the same bound applies to each
local mixed input $\rho_i$:
\[
\begin{aligned}
 \Tr(A_iL_i\rho_iL_iA_i^\dagger)
 &\le2\Tr(C_iL_i\rho_iL_iC_i^\dagger)
       +\frac{8q^2}{2\beta+1}\Tr\rho_i.
\end{aligned}
\]
The two trace terms are the probabilities of successful filtering
followed by acceptance for the original and modified computation, respectively.

Finally, $\Tr\rho_i\le1$.
Taking expectation over the oracle distributions, summing over all locations,
and using the definition of $\varepsilon_\beta$ in
\eqref{eq:extraction-filtered-null}, we obtain
\begin{equation}\label{eq:extraction-total-filtered}
 \sum_{i=1}^N
 \E\!\left[\Tr(A_iL_i\rho_iL_iA_i^\dagger)\right]
 \le2\varepsilon_\beta+\frac{8Nq^2}{2\beta+1}.
\end{equation}

\paragraph{4. The witness extraction procedure.}
We now construct a witness extractor on the original local input.
Its purpose is to find a witness with high probability on the
components strongly suppressed by the filter $L$.
\\
Consider the projectors $\Pi_0,\Pi_Q$ from \Cref{eq:proj-pi0-piQ}. On the enlarged space, define the projector and the parameter
\[
 S=V\Pi_0V^\dagger=(VJ_0)(VJ_0)^\dagger,
 \qquad
 T=\Pi_Q,
 \qquad
 t=\left\lceil\frac{2\beta}{\delta^2}\right\rceil.
\]
Since $VJ_0$ is an isometry $\mathcal H \rightarrow \mathcal H \otimes \mathcal W$, $S$ is the orthogonal projector onto
the image of the map $VJ_0$.
The projector $T$ tests whether the current query input is a witness.

After rejection by $T$, the state is supported in the range of
$I-T$. An immediate repetition of the witness test would therefore
reject again, since $T(I-T)=0$.
We interleave witness tests with measurements of $S$, which can kickback some nonzero weight in the witness subspace.

Use the following alternating-projector procedure, based on
Marriott--Watrous amplification~\cite{MW05}.
\begin{enumerate}[label=(\roman*),leftmargin=*,itemsep=3pt]
 \item Append zero registers using $J_0$, apply $V$, and measure
       $\{T,I-T\}$.

 \item Whenever the witness test accepts, measure the complete
       query input in the computational basis, output it, and stop.

 \item After the first rejection, perform at most $t$ further
       rounds. In each round, measure $\{S,I-S\}$, discard that
       outcome, and then measure $\{T,I-T\}$.
       On acceptance, output the query as above.
       If every witness test rejects, output failure.
\end{enumerate}
Both outcomes of the $S$-measurement lead to the next witness
test; neither outcome causes the procedure to abort.



\paragraph{5. Computing the extraction probability.}
By Jordan's lemma, the sampler space decomposes into mutually
orthogonal subspaces of dimension at most two that are invariant
under both $S$ and $T$~\cite{NWZ09,CMSZ21}.
We include the standard two-projector calculation for
completeness~\cite{NWZ09,CMSZ21}. It gives the exact extraction
probability, which we then use to compare the success effect
$W$ with $(I-L)^2$.

Fix a unit eigenvector $E\ket e=u\ket e$ with $0<u<1$, and let
\[
 \ket v=VJ_0\ket e
\]
Since $VJ_0$ is an isometry and
$J_0^\dagger V^\dagger TVJ_0=E$, by \Cref{eq:extraction-compression}, we have
\[
 \norm{\ket v}=1,\qquad
 \norm{T\ket v}^2=u,\qquad
 \norm{(I-T)\ket v}^2=1-u.
\]
Thus the normalized states after witness acceptance and rejection
are, respectively,
\[
 \ket w=\frac{T\ket v}{\sqrt u},
 \qquad
 \ket z=\frac{(I-T)\ket v}{\sqrt{1-u}}.
\]
These vectors are orthogonal and
\[
 \ket v=\sqrt u\ket w+\sqrt{1-u}\ket z.
\]

To determine how the $S$-measurement changes these
states, use $S=(VJ_0)(VJ_0)^\dagger$ to obtain
\[
 S\ket v=\ket v,
 \qquad
 ST\ket v=VJ_0E\ket e=u\ket v.
\]
It follows that
\[
\begin{aligned}
 S\ket w
 &=\sqrt u\ket v
  =u\ket w+\sqrt{u(1-u)}\ket z,\\
 S\ket z
 &=\sqrt{1-u}\ket v
  =\sqrt{u(1-u)}\ket w+(1-u)\ket z.
\end{aligned}
\]
Hence the two-dimensional space
$\mathcal K_e=\operatorname{span}\{\ket w,\ket z\}$
is invariant under $S$, $T$.
In the ordered basis $(\ket w,\ket z)$, their restrictions are
\begin{equation}\label{eq:extraction-two-projectors}
 T\big|_{\mathcal K_e}
 =\begin{pmatrix}1&0\\0&0\end{pmatrix},
 \qquad
 S\big|_{\mathcal K_e}
 =\begin{pmatrix}
 u&\sqrt{u(1-u)}\\
 \sqrt{u(1-u)}&1-u
 \end{pmatrix}.
\end{equation}

The first witness test succeeds with probability $u$.
On rejection, the normalized state is $\ket z$.
In the next round, the two possible $S$-outcomes followed by
witness acceptance give the (unnormalized) vectors
\[
 TS\ket z=\sqrt{u(1-u)}\ket w,
 \qquad
 T(I-S)\ket z=-\sqrt{u(1-u)}\ket w.
\]
Since these branches correspond to distinct measured outcomes of $\{S,I-S\}$,
their probabilities add. Thus one further round succeeds with
probability
\[
 \norm{TS\ket z}^2+\norm{T(I-S)\ket z}^2
 =2u(1-u).
\]
If the witness test rejects again, the normalized state returns
to $\ket z$.


Consequently, each further round, conditioned on all earlier
rejections, fails with probability $1-2u(1-u)$.
Including the initial rejection and the $t$ further rounds,
the total witness extraction probability is
\begin{equation}\label{eq:extraction-witness-function}
 g_t(u)=1-(1-u)[1-2u(1-u)]^t.
\end{equation}
We note that, for $u=0$, every witness test rejects; for $u=1$, the first accepts.
These endpoint cases agree with $g_t(0)=0$ and $g_t(1)=1$.
\\~\\
To extend this calculation to arbitrary inputs, choose an
orthonormal eigenbasis $\{\ket{e_k}\}_k$ of $E$, with eigenvalues
$u_k$, and write $\ket{v_k}=VJ_0\ket{e_k}$.
For $k\ne\ell$,
\[
\begin{aligned}
 \langle Tv_k,Tv_\ell\rangle
 &=\bra{e_k}E\ket{e_\ell}=0,\\
 \langle (I-T)v_k,(I-T)v_\ell\rangle
 &=\bra{e_k}(I-E)\ket{e_\ell}=0.
\end{aligned}
\]
Therefore, the spaces
\[
 \operatorname{span}\{T\ket{v_k},(I-T)\ket{v_k}\}
\]
are mutually orthogonal and invariant under $S$ and $T$.
Every sequence of these measurements preserves the spaces,
so distinct eigenvector components contribute without
interference to the success probability.
For a unit input vector $\ket\psi=\sum_k c_k\ket{e_k}$,
\[
 \Pr[\text{extract a witness on }\ket\psi]
 =\sum_k |c_k|^2g_t(u_k).
\]
Thus the witness procedure has success effect
\[
 W=g_t(E),\qquad 0\preceq W\preceq I.
\]
That is, its probability of outputting a witness on an input
state $\rho$ is $\Tr(W\rho)$.
\\~\\
We next control $A(I-L)$ in the decomposition
\(
 A=A(I-L)+AL.
\)
The term corresponding to $AL$ is bounded by
\eqref{eq:extraction-filtered-acceptance}.
For the other term, since $A=\Pi U^1J\preceq I$ is a contraction,
\[
 \bigl(A(I-L)\bigr)^\dagger A(I-L)
 =(I-L)A^\dagger A(I-L)
 \preceq(I-L)^2.
\]
We will show that $(I-L)^2\preceq W+\delta I$.
This will bound the squared norm of $A(I-L)\ket\psi$ by the
witness-extraction probability on $\ket\psi$, up to
$\delta\norm{\ket\psi}^2$.

Both $W=g_t(E)$ and $L=(I-E)^\beta$ are functions of $E$.
For each eigenvector $E\ket e=u\ket e$,
\[
\begin{aligned}
 W\ket e&=g_t(u)\ket e,\\
 (I-L)^2\ket e&=[1-(1-u)^\beta]^2\ket e.
\end{aligned}
\]
Thus $W-(I-L)^2+\delta I$ is diagonal in an orthonormal
eigenbasis of $E$, with eigenvalues
\[
 g_t(u)-[1-(1-u)^\beta]^2+\delta.
\]
To prove that this operator is positive semidefinite, it therefore
suffices to show that these scalar quantities are nonnegative.

Split the eigenvalue range at
\[
 \alpha=\frac{\delta}{2\beta}.
\]
Since $t=\lceil2\beta/\delta^2\rceil$,
\[
 t\alpha
 \ge\frac{2\beta}{\delta^2}\frac{\delta}{2\beta}
 =\frac1\delta.
\]

For $u\ge\alpha$, we bound the extraction failure probability.
Using \eqref{eq:extraction-witness-function} and
$1-x\le e^{-x}$,
\[
\begin{aligned}
 1-g_t(u)
 &=(1-u)[1-2u(1-u)]^t\\
 &\le(1-u)e^{-2tu(1-u)}\\
 &\le(1-u)e^{-2t\alpha(1-u)}\\
 &\le\frac1{2\mathrm e\,t\alpha}
 \le\delta.
\end{aligned}
\]
The penultimate inequality applies
$xe^{-ax}\le1/(\mathrm e a)$ with
$x=1-u$ and $a=2t\alpha$.
Hence $g_t(u)\ge1-\delta$ throughout this range.

For $0\le u<\alpha$, it is enough to bound the comparison term.
Bernoulli's inequality gives
\[
 (1-u)^\beta\ge1-\beta u,
 \qquad
 0\le1-(1-u)^\beta\le\beta u.
\]
Consequently,
\[
 [1-(1-u)^\beta]^2
 \le\beta^2u^2
 \le\beta^2\alpha^2
 =\frac{\delta^2}{4}
 \le\delta.
\]

For $u\ge\alpha$, use $g_t(u)\ge1-\delta$ and
$[1-(1-u)^\beta]^2\le1$.
For $u<\alpha$, use $g_t(u)\ge0$ and the bound just established.
Together these give
\[
 g_t(u)-[1-(1-u)^\beta]^2+\delta\ge0,
 \qquad u\in[0,1].
\]
Every eigenvalue of $W-(I-L)^2+\delta I$ is therefore
nonnegative, proving
\begin{equation}\label{eq:extraction-witness-effect}
 W\succeq(I-L)^2-\delta I.
\end{equation}
\paragraph{6. Comparing acceptance with extraction.}
We compare the original acceptance effect $A^\dagger A$ with
the witness success effect $W$, keeping the contribution of
filtered acceptance explicit.
For every input vector,
\[
 A\ket\psi=A(I-L)\ket\psi+AL\ket\psi.
\]
These two terms need not be orthogonal.
The weighted Cauchy--Schwarz inequality bounds their cross term
and gives
\[
 \norm{A\ket\psi}^2
 \le(1+\delta)\norm{A(I-L)\ket\psi}^2
    +(1+\delta^{-1})\norm{AL\ket\psi}^2.
\]

Since $L^\dagger=L$ and the inequality holds for every
$\ket\psi$, we obtain
\[
\begin{aligned}
 A^\dagger A
 &\preceq
 (1+\delta)(I-L)A^\dagger A(I-L)
 +(1+\delta^{-1})LA^\dagger AL\\
 &\preceq
 (1+\delta)(I-L)^2
 +(1+\delta^{-1})LA^\dagger AL.
\end{aligned}
\]
The second line uses $A^\dagger A\preceq I$. 
The extraction bound \eqref{eq:extraction-witness-effect} gives
\[
 (I-L)^2\preceq W+\delta I.
\]
Substituting this bound yields,
\begin{equation}\label{eq:extraction-local-comparison}
\begin{aligned}
A^\dagger A
 &\preceq
 (1+\delta)W+\delta(1+\delta)I
 +(1+\delta^{-1})LA^\dagger AL\\
 \implies A^\dagger A-W
 &\preceq
 \delta W+\delta(1+\delta)I
 +(1+\delta^{-1})LA^\dagger AL\\
 &\preceq
 \delta(2+\delta)I
 +(1+\delta^{-1})LA^\dagger AL.
\end{aligned}
\end{equation}
Here $LA^\dagger AL$ is the effect of successful filtering
followed by original acceptance.
Its total contribution over the locations is bounded by
\eqref{eq:extraction-total-filtered}.

\paragraph{7. Comparing the joint success thresholds.}
We now use the local comparison
\eqref{eq:extraction-local-comparison} to compare the probabilities
of at least $d$ original acceptances and at least $d$ witness
outputs on the joint input $\rho$.

For local success effects $0\preceq Z_i\preceq I$, the effects
of success and failure at location $i$ are $Z_i$ and $I-Z_i$,
respectively.
Thus the joint outcome $z=(z_1,\ldots,z_N)\in\{0,1\}^N$ has effect
\[
 \bigotimes_{i=1}^N Z_i^{z_i}(I-Z_i)^{1-z_i}.
\]
Summing the outcomes with at least $d$ successes gives the
threshold effect
\begin{equation}\label{eq:extraction-threshold-effect}
 T_d(Z_1,\ldots,Z_N)=
 \sum_{\substack{z\in\{0,1\}^N\\\sum_i z_i\ge d}}
 \bigotimes_{i=1}^N Z_i^{z_i}(I-Z_i)^{1-z_i}.
\end{equation}
Thus, 
\[
 \Pr[\text{at least $d$ successes}]
 =\Tr\!\left(T_d(Z_1,\ldots,Z_N)\rho\right).
\]
We replace the original tests by the witness extracting tests one location
at a time.
For $0\le i\le N$, define
\[
 F_i=T_d(W_1,\ldots,W_i,
              A_{i+1}^\dagger A_{i+1},\ldots,A_N^\dagger A_N).
\]
Thus $F_i$ uses witness extraction at locations $1,\ldots,i$
and original acceptance at locations $i+1,\ldots,N$.
In particular,
\[
 F_0=T_d(A_1^\dagger A_1,\ldots,A_N^\dagger A_N),
 \qquad
 F_N=T_d(W_1,\ldots,W_N).
\]

Now, fix a particular step $i\in[N]$.
Only the test at location $i$ changes between $F_{i-1}$ and $F_i$.
For this comparison, we consider the unchanged effects at all other locations as
\[
 Z_j=
 \begin{cases}
  W_j,&j<i,\\
  A_j^\dagger A_j,&j>i.
 \end{cases}
\]
Changing the outcome at location $i$ affects the global threshold $d$
only when the other locations have exactly $d-1$ successes.
Because, with at least $d$ other successes the threshold is already
met, while with at most $d-2$ it cannot be met by either local outcome.

Define the effect of exactly $d-1$ successes among the other
locations by
\[
 G_i=
 \sum_{\substack{(z_j)_{j\ne i}\in\{0,1\}^{N-1}\\
                  \sum_{j\ne i}z_j=d-1}}
 \bigotimes_{j\ne i}Z_j^{z_j}(I-Z_j)^{1-z_j}.
\]
Let $H_i$ be the effect of at least $d$ successes among those
same locations, i,e,
\[
 H_i=
 \sum_{\substack{(z_j)_{j\ne i}\in\{0,1\}^{N-1}\\
                  \sum_{j\ne i}z_j\geq d}}
 \bigotimes_{j\ne i}Z_j^{z_j}(I-Z_j)^{1-z_j}.
\]


The global threshold $d$ is met either through the event represented by
$H_i$, regardless of the local outcome, or through the event
represented by $G_i$ together with a local success at $i$.
Consequently, with identity factors implicit,
\[
 \begin{aligned}
  F_{i-1}&=H_i+(A_i^\dagger A_i)G_i,\\
  F_i&=H_i+W_iG_i.
 \end{aligned}
\]
Subtracting the common contribution $H_i$ we get:
\[
 F_{i-1}-F_i=(A_i^\dagger A_i-W_i)G_i.
\]
Summing the consecutive differences cancels all intermediate
effects $F_1,\ldots,F_{N-1}$, giving
\begin{equation}\label{eq:extraction-threshold-identity}
\begin{aligned}
 &T_d(A_1^\dagger A_1,\ldots,A_N^\dagger A_N)
  -T_d(W_1,\ldots,W_N)\\
 &\qquad=F_0-F_N
  =\sum_{i=1}^N(F_{i-1}-F_i)\\
 &\qquad=\sum_{i=1}^N(A_i^\dagger A_i-W_i)G_i.
\end{aligned}
\end{equation}

Since $G_i\succeq0$ acts outside location $i$, tensoring the
local comparison by $G_i$ preserves operator order.
Thus \eqref{eq:extraction-local-comparison} gives
\begin{align*}
 (A_i^\dagger A_i-W_i)G_i
 &\preceq
 \delta(2+\delta)G_i
 +(1+\delta^{-1})L_iA_i^\dagger A_iL_iG_i\\
 &\preceq
 \delta(2+\delta)I
 +(1+\delta^{-1})L_iA_i^\dagger A_iL_i.
\end{align*}
For the second line, use $G_i\preceq I$.

Combining these bounds with
\eqref{eq:extraction-threshold-identity}, we obtain
\[
 F_0-F_N
 \preceq
 N\delta(2+\delta)I
 +(1+\delta^{-1})
   \sum_{i=1}^N L_iA_i^\dagger A_iL_i.
\]
Now, taking the trace against $\rho$ (note, $\Tr\rho\le1$),
and taking expectation over over the oracle distributions, we get
\[
\begin{aligned}
 \E[\Tr(F_0\rho)]
 &=\Pr\!\left[\sum_iX_i\ge d\right],\\
 \E[\Tr(F_N\rho)]
 &=\Pr\!\left[\sum_iY_i\ge d\right].
\end{aligned}
\]
 we therefore obtain
\begin{align*}
 \Pr\!\left[\sum_iX_i\ge d\right]
 -\Pr\!\left[\sum_iY_i\ge d\right]
 &\le N\delta(2+\delta)\\
 &\quad +(1+\delta^{-1})\sum_{i=1}^N
     \E\!\left[\Tr(A_iL_i\rho_iL_iA_i^\dagger)\right].
\end{align*}

Finally, this along with the bound established in \eqref{eq:extraction-total-filtered} proves \eqref{eq:compact-extraction}.
We also note that, 
each local extractor performs one initial witness test and at
most $t=\lceil2\beta/\delta^2\rceil$ additional rounds.
The implementations of $V$, $V^\dagger$, and the witness test
therefore give the claimed polynomial complexity under the assumption of \Cref{rem:extraction-oracle-access}.
\end{proof}

\subsection{From cloning to torch signatures}
\label{subsec:extraction-oss}

We now apply \Cref{lem:compact-extraction} to prove global
non-proliferation for \Cref{con:torch-fire}.

Write \(s_i=(\ell_i,a_i)\) for the serial of the challenge flame
issued to location \(i\) after the split in the global non-proliferation experiment as in \Cref{def:global} of the torch-fire object.
We call \(\ell_i\in\{0,1\}^{\lambda}\) the \emph{challenge label}:
it selects the anchor source \(\mathcal O_{A,\ell_i}\), while
\(a_i\) is the anchor's verification key within that source
(\eqref{eq:tf-serial}).
Recall that each flame contains an anchor and a CGS part
(\eqref{eq:flame-state}). Successful cloning therefore requires
two anchors that pass the test under \(a_i\)
(\eqref{eq:flame}).

The proof uses the two OSS schemes as follows.
\begin{enumerate}[label=(\roman*),leftmargin=*,itemsep=4pt]
 \item Disable affine access to the anchor source
 \(\mathcal O_{A,\ell_i}\).
 We show that successful filtering followed by two accepted
 anchors would give a strong one-bit OSS forgery
 (\Cref{fact:sz}). This bounds the \(2\varepsilon_\beta\) term in \eqref{eq:compact-extraction}.

 \item Every query whose answer changes under this oracle disabling modification
 contains a valid torch signature (\Cref{con:gamma}).
 Its signed message includes \(\ell_i\).
 We then show that obtaining such signatures for \(m+1\)
 distinct challenge labels from only \(m\) issued torches
 would violate long-message OSS security (\Cref{lem:oss}).
\end{enumerate}
The extraction \Cref{lem:compact-extraction} connects these two steps even when the
locations share entanglement.
We first define the modified oracle and the witness queries.

\paragraph{The local input and acceptance test.}
Let \(\rho\) be the joint state of the local inputs
\(B_1F_1,\ldots,B_NF_N\) at the start of the global non proliferation experiment as in \Cref{def:global} of the torch-fire object. Let \(\rho_i\) be its reduced
state on \(B_iF_i\).
Due to setup aborts and label collisions,
assume \(\rho\) may be subnormalized.
For the operator definitions below, fix the classical oracle
functions and public inputs \(pk,s_1,\ldots,s_N\), where
\(s_i=(\ell_i,a_i)\).
All the expectation calculations below average over their distributions.

At location \(i\), let \(U_i^1\) coherently run
\(\mathcal A_i^{\mathcal O}\) on \(B_iF_i\), followed by verification
of both output flames under \(s_i\).
Retain the workspace and measurement records as in
\Cref{fact:coherent-implementation}.
Let \(J_i\) append zero workspace, and let \(\Pi_i\) test that
both verifications accept. Thus
\[
 A_i=\Pi_iU_i^1J_i,
 \qquad
 \Pr[X_i=1]=\E\!\left[\Tr(A_i\rho_iA_i^\dagger)\right].
\]
We implement the anchor verification tests by the Hadamard-dual-Hadamard test as in \Cref{lem:token-test} which corresponds to the projector \(\Pi_{a_i}=\proj{A_{a_i}}\),


\paragraph{Removing the challenge label's affine access.} Now we define the modified version of the computation as in \Cref{eq:original-modified-computation}.
The modified computation uses the same oracles except
that \(\Gamma\) returns \(\perp\) on every affine request carrying
\(\ell_i\). For a well-formed request
\(q'=(s',t,c,e,u)\), with \(s'=(\ell',a')\), define
\begin{equation}\label{eq:global-disabled-oracle}
 \Gamma_i^0(q',p,\sigma_A,\sigma_T)=
 \begin{cases}
  \perp,&\ell'=\ell_i,\\
  \Gamma(q',p,\sigma_A,\sigma_T),&\ell'\ne\ell_i.
 \end{cases}
\end{equation}
Malformed inputs also return \(\perp\).
Write \(\mathcal O_i^0\) for the public collection obtained by this
replacement, and let \(U_i^0\) be the mofied version of \(U_i^1\) with access to \(\mathcal O_i^0\).
Its accepting map is
\[
 C_i=\Pi_iU_i^0J_i.
\]


\paragraph{The witness in a distinguishing query.}
For the two affine oracles to differ, the original oracle must
answer a request carrying \(\ell_i\).
By \eqref{eq:tf-gate}, such a request must include a valid
signature under a published torch key.
We therefore let \(Q_i\) be the computational-basis diagonal
projector onto complete \(\Gamma\)-query inputs satisfying
\begin{equation}\label{eq:global-witness-test}
 \begin{aligned}
  &q'=(s',t,c,e,u)\in\mathcal Q_{\rm affine},
       \qquad s'=(\ell_i,a'),\\
  &t\in\Tpub,\qquad
    \OSSVer^{\mathcal O_T}
       \bigl(t,M_T(s',t),\sigma_T\bigr)=1.
 \end{aligned}
\end{equation}
Set \(Q_i=0\) on malformed inputs and designated padding queries.
The predicate is coherently testable using \(\mathcal T\) and
public torch OSS verification, both available in
\(\mathcal O_i^0\).

Outside the support of \(Q_i\), either the request has a
different label, or both affine oracles reject it.
Consequently,
\begin{equation}\label{eq:global-query-support}
 (U_\Gamma-U_{\Gamma_i^0})(I-Q_i)=0,
\end{equation}
which is precisely the witness condition of
\Cref{lem:compact-extraction}. Crucially,
The predicate need not check the anchor signature or the CGS
tag (\eqref{eq:tf-gate}). Those checks are necessary for a affine response, but the
reduction needs only the torch signature.
Thus a witness may satisfy \(Q_i\) even when \(\Gamma\) rejects it.

Measuring a witness and retaining its torch-signature fields
gives
\begin{equation}\label{eq:global-torch-witness}
 \bigl(t,M_T(s',t),\sigma_T\bigr),
 \qquad s'=(\ell_i,a'),\qquad t\in\Tpub,
\end{equation}
with a valid signature under \(t\).
The key \(a'\) need not equal the challenge key \(a_i\):
the relevant information is the label \(\ell_i\).
Indeed, \eqref{eq:authorization-messages} gives
\[
 M_T(s',t)
 =1\Vert\ell_i\Vert a'_1\Vert\cdots\Vert a'_L
       \Vert t_1\Vert\cdots\Vert t_L.
\]
These messages have length \(1+\lambda+2Lr\).
Distinct challenge labels therefore give distinct signed
messages, including when the witnesses use the same torch key.

\paragraph{The quantity left to bound.}
Choose a common power-of-two bound \(q\ge1\) on the local
\(\Gamma\)-query positions, padding with invalid queries.
All other oracle queries are identical in \(U_i^1\) and \(U_i^0\).
For the modified prefixes \(V_{i,j}=U_{i,<j}^0J_i\), define
\[
 E_i=\frac1q\sum_{j=1}^qV_{i,j}^\dagger Q_iV_{i,j},
 \qquad
 L_i=(I-E_i)^\beta,
 \qquad \beta\ge1,
\]
as in \Cref{subsec:separated-extraction}.
Consider, the error term in the extraction \Cref{lem:compact-extraction} is
\[
 \varepsilon_\beta
 =\sum_{i=1}^N
   \E\!\left[\Tr(C_iL_i\rho_iL_iC_i^\dagger)\right].
\]
Each summand includes both successful filtering and acceptance
by the modified computation.
We will bound it by anchor OSS security, then use torch OSS
security to bound the number of successful witness outputs.

\paragraph{Why the preparation phase must also be modified.}
The input \(\rho\) was prepared before the split in the experiment \Cref{def:global} using the
original oracle \(\Gamma\).
It could therefore already contain information obtained from
the affine queries that \(U_i^0\) later disables.
To use anchor OSS security, we must remove that access from
the preparation phase as well.
The challenge label is uniform and unknown during this phase;
the next lemma bounds the effect of the additional replacement.

\begin{lemma}
\label{lem:fresh-label}
\normalfont
Fix \(i\in[N]\).
Replace \(\Gamma\) by \(\Gamma_i^0\) also before the split in th experiment \Cref{def:global},
keeping the other oracles and honest issuance unchanged.
Let \(\widetilde\rho_i\) be the resulting joint input state after
challenge issuance and the experiment's abort checks.
For every effect \(0\preceq Z\preceq I\) on the joint input,
possibly depending on the classical oracle functions and
public inputs,
\begin{equation}\label{eq:fresh-label-distance}
 \left|\E[\Tr(Z\rho)]
       -\E[\Tr(Z\widetilde\rho_i)]\right|
 \le2Q_{\rm pre}\,2^{-\lambda/2}.
\end{equation}
Here \(Q_{\rm pre}\) is the pre-split query bound fixed at the
start of \Cref{sec:extraction}.
The subscript on \(\widetilde\rho_i\) specifies which label's
affine queries are disabled; this is a state of all locations.
\end{lemma}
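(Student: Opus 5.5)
The plan is a BBBV-style hybrid (\Cref{lem:bbbv-hybrid}) over the pre-split phase. The key fact is that the challenge label \(\ell_i\) is sampled after the split, independently of everything before it. So during preparation the adversary's queries cannot concentrate on \(\ell_i\).

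\textbf{Step 1: couple the two executions.} Fix the oracle functions and all pre-split issuance randomness, and purify the pre-split execution of \(\mathcal A_0^{\mathcal O}\) together with the honest spark and torch issuance, using \Cref{fact:coherent-implementation}. The two executions, with \(\Gamma\) and with \(\Gamma_i^0\), use identical unitaries between queries. They differ only in the \(\Gamma\) query unitaries, and only on inputs whose label field equals \(\ell_i\).

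Let \(P_{\ell}\) be the diagonal projector onto well-formed \(\Gamma\)-queries carrying label \(\ell\). Then \((U_\Gamma-U_{\Gamma_i^0})(I-P_{\ell_i})=0\).

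The honest spark procedure never queries \(\Gamma\): it uses only \(\mathcal O_{A,\ell}\) and \(\mathcal G\). So only the adversary's at most \(Q_{\rm pre}\) queries matter.

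Challenge issuance, the abort checks, and the effect \(Z\) are applied identically after the split. By the contractivity of trace distance, this gives
\[
 \left|\Tr(Z\rho)-\Tr(Z\widetilde\rho_i)\right|
 \le 2\,D(\sigma^1,\sigma^0),
\]
where \(\sigma^b\) are the pre-split final states. If \(Z\) depends on \(\ell_i\), it is still applied after the split, so this causes no issue.

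\textbf{Step 2: bound the query mass on \(\ell_i\).} This is the crux. Because \(\ell_i\) is sampled independently of the pre-split phase, the original execution's states \(\rho_j\) before query \(j\) do not depend on \(\ell_i\). Therefore
\[
 \E_{\ell_i}\Bigl[\sum_j\Tr(P_{\ell_i}\rho_j)\Bigr]
 =2^{-\lambda}\sum_j\sum_{\ell}\Tr(P_\ell\rho_j)
 \le Q_{\rm pre}2^{-\lambda},
\]
because the projectors \(P_\ell\) are mutually orthogonal.

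\Cref{lem:bbbv-hybrid} then gives
\[
 D(\sigma^1,\sigma^0)\le 2\sqrt{Q_{\rm pre}M},
 \qquad M=\sum_j\Tr(P_{\ell_i}\rho_j).
\]
Jensen's inequality, \(\E\sqrt M\le\sqrt{\E M}\), yields a bound of order \(Q_{\rm pre}2^{-\lambda/2}\).

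\textbf{Step 3: fix the constant.} Tracking the constant as in \Cref{lem:bbbv-hybrid} gives \(\E[D]\le 2Q_{\rm pre}2^{-\lambda/2}\). Inserted into Step 1, this would give \(4Q_{\rm pre}2^{-\lambda/2}\), not the stated \(2Q_{\rm pre}2^{-\lambda/2}\).

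To get the stated constant, I would avoid the factor \(2\) from trace distance. Work with Euclidean distance on the purification instead:
\[
 \|\ket{\psi^1}-\ket{\psi^0}\|\le 2\sum_j\|P_{\ell_i}\ket{\phi_j}\|.
\]
Then use \(|\Tr(Z\rho)-\Tr(Z\widetilde\rho)|\le 2\|\psi^1-\psi^0\|\)-type bounds, or the sharper \(\|\psi^1-\psi^0\|\) bound for effects. Cauchy–Schwarz and Jensen over \(\ell_i\) then give \(2\sqrt{Q_{\rm pre}\cdot Q_{\rm pre}2^{-\lambda}}\).

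The one subtle point is independence. Subnormalization from aborts, and any dependence of \(Z\) on \(\ell_i\), enter only after the split. So conditioning on the pre-split randomness and averaging over \(\ell_i\) last is legitimate. Averaging over the oracles then finishes the proof of \eqref{eq:fresh-label-distance}.
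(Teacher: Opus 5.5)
Your proposal is correct and takes essentially the same route as the paper: a BBBV hybrid over the pre-split $\Gamma$-queries, using that the original execution's query states are independent of $\ell_i$ so that the expected query mass on $\ell_i$ is at most $Q_{\rm pre}2^{-\lambda}$, followed by Jensen's inequality and contractivity through the common post-split issuance and abort checks. Your Step~3 detour is unnecessary, though: because $0\preceq Z\preceq I$ and the two pre-split states have equal trace, you already have $|\Tr(Z\rho)-\Tr(Z\widetilde\rho_i)|\le D(\sigma^1,\sigma^0)$ with no factor of $2$ (the paper records aborts as failure and extends $Z$ by zero on failure), so the stated bound $2Q_{\rm pre}2^{-\lambda/2}$ follows directly from \Cref{lem:bbbv-hybrid}.
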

\begin{proof}
\normalfont
Fix the oracle functions. Sample
\(\ell_i\leftarrow\{0,1\}^{\lambda}\) before running
\(\mathcal A_0\), keeping it secret until challenge issuance,
and defer the label-collision checks.
Let \(\omega_j\) be the query-register state immediately before
the \(j\)-th query in the original pre-split execution.
These states are independent of \(\ell_i\).

Let \(\Pi_\ell^{\rm affine}\) project onto \(\Gamma\)-query inputs
carrying label \(\ell\). Since
\(\sum_\ell\Pi_\ell^{\rm affine}\preceq I\), the total query mass
at the hidden label satisfies
\[
 \begin{aligned}
 M_{\ell_i}
 &:=\sum_{j=1}^{Q_{\rm pre}}
       \Tr(\Pi_{\ell_i}^{\rm affine}\omega_j),\\
 \E_{\ell_i}[M_{\ell_i}]
 &=2^{-\lambda}\sum_{j=1}^{Q_{\rm pre}}
       \Tr\!\left[
          \left(\sum_\ell\Pi_\ell^{\rm affine}\right)\omega_j
       \right]
 \le Q_{\rm pre}\,2^{-\lambda}.
 \end{aligned}
\]

By \eqref{eq:global-disabled-oracle},
\[
 (U_\Gamma-U_{\Gamma_i^0})
       (I-\Pi_{\ell_i}^{\rm affine})=0,
\]
and all other oracles are unchanged.
Thus, writing \(\tau^1,\tau_i^0\) for the two pre-split states,
\Cref{lem:bbbv-hybrid} and Jensen's inequality give
\[
 \begin{aligned}
 \E_{\ell_i}\!\left[D(\tau^1,\tau_i^0)\right]
 &\le 2\sqrt{Q_{\rm pre}}\,
       \E_{\ell_i}\!\left[\sqrt{M_{\ell_i}}\right]\\
 &\le 2\sqrt{Q_{\rm pre}\,\E_{\ell_i}[M_{\ell_i}]}\\
 &\le 2Q_{\rm pre}\,2^{-\lambda/2}.
 \end{aligned}
\]

Applying the common challenge-issuance and collision-check,
recording aborts as failure and extending \(Z\) by zero on failure, we have
\[
 \left|
   \E\Tr(Z\rho)-\E\Tr(Z\widetilde\rho_i)
 \right|
 \le \E D(\tau^1,\tau_i^0)
 \le 2Q_{\rm pre}\,2^{-\lambda/2},
\]
where \(\E\) also averages over the oracle functions.
This proves \eqref{eq:fresh-label-distance} without conditioning
on non-abort.
\end{proof}

We now bound the \(2\varepsilon_\beta\) term in
\eqref{eq:compact-extraction} using anchor OSS security.
The reduction runs the filter followed by the modified
computation, counting filter failures as failures of the
reduction. Two accepted anchors will then provide a
strong one-bit OSS forgery.

\begin{lemma}
\label{lem:global-filtered-anchor}
\normalfont
For every polynomially bounded integer \(\beta\ge1\),
\begin{equation}\label{eq:global-null-small}
 \varepsilon_\beta
 =\sum_{i=1}^N
   \E\!\left[\Tr(L_iC_i^\dagger\rho_i C_iL_i)\right]
 \le\negl(\lambda),
\end{equation}
where local operators act as the identity on all other
locations.
\end{lemma}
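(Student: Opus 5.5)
We bound each summand separately, since $\varepsilon_\beta$ is a sum of $N=\poly(\lambda)$ terms. Fix $i\in[N]$; it suffices to show $\E[\Tr(C_iL_i\rho_iL_iC_i^\dagger)]\le\negl(\lambda)$, uniformly in $i$. A reduction that guesses $i$ uniformly at random, as sketched in the overview, works equally well. The first step replaces $\rho$ by $\widetilde\rho_i$ from \Cref{lem:fresh-label}.

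To justify the replacement, observe that the effect $Z_i:=L_iC_i^\dagger C_iL_i$ (tensored with identity elsewhere) satisfies $0\preceq Z_i\preceq I$. This holds because $L_i$ and $C_i$ are contractions, and $Z_i$ depends only on the classical oracles and the public inputs. Hence \eqref{eq:fresh-label-distance} changes the summand by at most $2Q_{\rm pre}2^{-\lambda/2}$. After this hybrid, every execution relevant to location $i$ uses $\Gamma_i^0$ throughout: the pre-split phase, the filter (whose prefixes $V_{i,j}$ use $\mathcal O_i^0$), and the modified computation $U_i^0$. So no affine value of $\mathcal O_{A,\ell_i}$ is ever revealed.

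The second step builds a reduction $\mathcal B$ against strong one-bit OSS security (\Cref{fact:sz}).

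First, $\mathcal B$ receives challenge oracles $(\mathcal P,\mathcal P^{-1},\mathcal D)$ and embeds them as component $1$ of $\mathcal O_{A,\ell_i}$, with $\ell_i$ sampled by $\mathcal B$ in advance.

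Second, $\mathcal B$ samples all other data itself: $\mathcal O_T$ and the torch keys together with $\mathsf{td}_T$, all other anchor sources, the remaining components and digest of $\mathcal O_{A,\ell_i}$, and $\mathcal O_K$ with its private $\mathcal G$. Since $\mathcal B$ holds all of this, it can simulate setup, issue the $m$ torches, answer spark queries, and answer $\Gamma_i^0$. Every query answered by $\Gamma_i^0$ has label $\ne\ell_i$, so $\mathcal B$ never needs the challenge affine maps.

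Third, $\mathcal B$ prepares the challenge anchor $a_i$ with $\OSSGenQKey$. Component $1$ uses only the public $\mathcal P,\mathcal P^{-1}$ of the challenge, as in \Cref{fact:sz-generation}.

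Fourth, $\mathcal B$ runs $\mathcal A_0$ to the split and keeps only $B_iF_i$, tracing out the other locations; this is the reduced state of $\widetilde\rho_i$.

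Fifth, $\mathcal B$ runs the $\beta$-cycle filter of \Cref{lem:compact-extraction}. This requires $V,V^\dagger$ and $Q_i$, and all three are available: $Q_i$ checks only $\mathcal T$ and public torch verification \eqref{eq:global-witness-test}. If any cycle fails, $\mathcal B$ aborts.

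Sixth, $\mathcal B$ runs $U_i^0$ and measures $\Pi_i$. The anchor test needs $\mathcal D$ of component 1, and $\mathcal B$ has this challenge oracle.

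Finally, on acceptance, both output anchors have been projected onto $\ket{A_{a_i}}$. Component $1$ of each is then exactly $\ket{W_{a_{i,1}}}$, since the accept projector is rank one and the joint post-measurement state on that component is the product $\ket{W}\ket{W}$. $\mathcal B$ measures both in the computational basis. This gives two independent uniform elements $u_0,u_1$ of $W_{a_{i,1}}$; each is a valid signature on the bit $(u)_1$, and they are distinct except with probability $2^{-(n-r)}$. That is a strong forgery. Thus the summand is at most $2Q_{\rm pre}2^{-\lambda/2}+O(\lambda^3k^3q_{\mathcal B}^3 2^{-\lambda})+2^{-24\lambda}$, where $q_{\mathcal B}=\poly(\lambda)$ because $\beta$ is polynomial. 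Summing over $N$ locations remains negligible.

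The main obstacle is checking that $\mathcal B$ simulates the distribution exactly and efficiently, in two respects.

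The first is that the probability $\mathcal B$ reaches the measurement step must equal $\Tr(C_iL_i\widetilde\rho_iL_iC_i^\dagger)$. The subnormalization and abort events (label collisions, setup abort) must be counted as failures consistently with \Cref{lem:fresh-label}. This relies on the facts that the filter's successful branch is exactly $L_i$ and that $\Pi_i$ is implemented coherently.

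The second is that $\mathcal B$'s sampling of the remaining oracles must not disturb the challenge. The one delicate point is the CGS tag inside $\Gamma$. It is handled because $\mathcal B$ owns $\mathcal O_K$, and $\Gamma_i^0$ never answers label-$\ell_i$ queries, so no tag check involving the challenge component is ever needed.
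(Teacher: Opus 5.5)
Your proposal is correct and follows essentially the same route as the paper. Both apply \Cref{lem:fresh-label} to $Z_i=L_iC_i^\dagger C_iL_i$ to replace $\rho$ by $\widetilde\rho_i$, embed the strong one-bit OSS challenge as component $1$ of $\mathcal O_{A,\ell_i}$, generate $a_{i,1}$ by public key generation, run the filter followed by $U_i^0$ and $\Pi_i$, and measure component $1$ of both rank-one-projected anchors to get two distinct valid signatures. The only difference is cosmetic: you build a separate forger for each fixed $i$ and use the explicit query-dependent bound of \Cref{fact:sz} for uniformity in $i$, whereas the paper uses a single forger that guesses $i\leftarrow[N]$ and loses a factor of $N$.
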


\begin{proof}
\normalfont
We construct a strong one-bit OSS forger in the following way.
\paragraph{1. Disable the affine queries during preparation.}
The reduction must also avoid the challenge label's private
affine queries before the split.
Since \(C_i\) is a contraction and \(0\preceq L_i\preceq I\),
\[
 Z_i:=L_iC_i^\dagger C_iL_i,
 \qquad
 0\preceq Z_i\preceq L_i^2\preceq I.
\]
Thus \Cref{lem:fresh-label}, applied to \(Z_i\), gives
\begin{equation}\label{eq:global-filtered-preparation}
 \begin{aligned}
 p_i
 &:=\E\!\left[\Tr(Z_i\widetilde\rho_i)\right],\\
 \varepsilon_\beta
 &\le \sum_{i=1}^N p_i
       +2NQ_{\rm pre}\,2^{-\lambda/2}.
 \end{aligned}
\end{equation}
We construct an OSS forger whose success probability is
proportional to \(\sum_i p_i\).

\paragraph{2. Set up the OSS reduction.}
In the OSS game of \Cref{fact:sz}, the challenger
\(\mathcal C_{\rm OSS}\) samples
\[
 \mathcal O^\star
 \leftarrow \OSSGenOracles(1^\lambda)
\]
and gives the forger \(\mathcal B_{\rm A}\) access to its
public oracles.
The forger attacks this source by running the torch-fire
adversary \(\mathcal A\) in the modified experiment.

The forger chooses \(i\leftarrow[N]\) uniformly and secretly
samples \(\ell_i\leftarrow\{0,1\}^{\lambda}\).
It answers public oracle queries addressed to anchor
component \((\ell_i,1)\) using \(\mathcal O^\star\).
It samples the torch source \(\mathcal O_T\), the CGS source
\(\mathcal O_K\), all other anchor components, and the
remaining construction data as prescribed by
\Cref{con:torch-fire}.
Since the anchor components are independent one-bit OSS
sources, this gives the prescribed joint distribution
of the underlying sources.

The adversary receives \(\Gamma_i^0\) from
\eqref{eq:global-disabled-oracle} throughout the experiment:
requests carrying \(\ell_i\) return \(\perp\), and all other
affine responses and checks are computed from the sources
sampled by \(\mathcal B_{\rm A}\).

For the first component of the challenge anchor, the forger runs
\[
 (y,\ket{W_y})
 \leftarrow
 \OSSGenQKey^{\mathcal O^\star}(1^\lambda),
 \qquad a_{i,1}:=y.
\]
By \Cref{fact:sz-generation}, this produces the required
uniform verification key and signing state using public
oracle queries.
The remaining challenge components and honest issuances
use the sampled sources.
Taking into account the abort possibilities, the resulting
joint input is \(\widetilde\rho_i\).

\paragraph{3. Run the filter and the modified computation.}
At the chosen location, \(\mathcal B_{\rm A}\) runs \(L_i\),
then \(U_i^0\) and the acceptance test \(\Pi_i\).
The joint successful branch has map \(C_iL_i\), and hence
probability
\[
 p_i
 =\E\!\left[
   \Tr(C_iL_i\widetilde\rho_iL_iC_i^\dagger)
 \right].
\]
By \Cref{rem:extraction-oracle-access} and
\eqref{eq:global-witness-test}, the complete reduction makes
polynomially many queries to \(\mathcal O^\star\) for
polynomial \(\beta\).

\paragraph{4. Forge under the challenged OSS source.}
Retain the two output anchor registers after their projective
tests.
By \Cref{lem:token-test} and \eqref{eq:flame}, the accepting
map on each anchor is the rank-one projector
\[
 \Pi_{a_i}=\proj{A_{a_i}}.
\]
The joint successful state therefore has the form
\[
 \proj{A_{a_i}}_{A^{(0)}}
 \otimes
 \proj{A_{a_i}}_{A^{(1)}}
 \otimes\tau,
 \qquad
 \tau\succeq0,
 \qquad
 \E[\Tr(\tau)]=p_i.
\]

For fixed source data and \(y=a_{i,1}\), measuring component
\(1\) of each anchor gives independent uniform samples
\(u_0,u_1\in W_y\). Thus
\[
 \Pr[u_0\ne u_1\mid\text{joint success}]
 =1-\frac1{|W_y|}
 =1-2^{-(n-r)}.
\]
When \(u_0\ne u_1\), the forger returns to
\(\mathcal C_{\rm OSS}\) the key and signature pairs
\[
 y,\qquad
 \bigl((u_0)_1,u_0\bigr),\qquad
 \bigl((u_1)_1,u_1\bigr).
\]
These are signatures for the source sampled by the OSS
challenger: by \eqref{eq:one-bit-verification},
\[
 \OSSVer^{\mathcal O^\star}
       \bigl(y,(u_b)_1,u_b\bigr)=1
 \qquad (b\in\{0,1\}).
\]
The pairs are distinct because \(u_0\ne u_1\), even if
their message bits coincide.
The strong OSS game in \Cref{fact:sz} permits the forger
to output this key \(y\).

For each chosen location, the forging probability is
\((1-2^{-(n-r)})p_i\).
Averaging over \(i\leftarrow[N]\) gives
\[
 \Pr[\mathcal B_{\rm A}\text{ forges}]
 =\frac{1-2^{-(n-r)}}{N}\sum_{i=1}^N p_i
 \le\negl(\lambda),
\]
where the inequality follows from \Cref{fact:sz}.
Finally, \eqref{eq:global-filtered-preparation} and
\(n-r=24\lambda\) yield
\[
 \begin{aligned}
 \varepsilon_\beta
 &\le
 \frac{N}{1-2^{-24\lambda}}\,
 \Pr[\mathcal B_{\rm A}\text{ forges}]
 +2NQ_{\rm pre}\,2^{-\lambda/2}\\
 &\le\negl(\lambda),
 \end{aligned}
\]
since \(N\) and \(Q_{\rm pre}\) are polynomially bounded.
\end{proof}

We now bound the witness extraction probability term in \eqref{eq:compact-extraction}
at threshold \(m+1\).
Distinct labels give distinct torch messages by
\eqref{eq:authorization-messages}.
Among \(m+1\) witnesses, either two use the same key of an
issued torch, or one uses a key whose torch was not issued.

\begin{lemma}
\label{lem:global-witness-count}
\normalfont
Let \(\mathcal A_{\rm W}\) be a quantum procedure following
the global non-proliferation experiment's (\Cref{def:global}) issuance rules, receiving only
the \(m\) issued torch registers, and making polynomially
many public-oracle queries.
For witnesses defined by \eqref{eq:global-torch-witness},
\[
 p_{\rm wit}
 :=\Pr\!\left[
   \begin{array}{c}
   \mathcal A_{\rm W}\text{ outputs valid witnesses}\\
   \text{for }m+1\text{ distinct labels}
   \end{array}
 \right]
 \le\negl(\lambda).
\]
The procedure may process all pre-split and challenge flames
jointly.
\end{lemma}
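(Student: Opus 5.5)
The plan is a direct reduction $\mathcal B_{\rm T}$ to the strong (or plain) one-shot security of the long-message scheme of \Cref{con:long-oss}, via \Cref{lem:oss}. The challenger samples $\mathcal O^\star\leftarrow\OSSGenOracles(1^\lambda)$, and $\mathcal B_{\rm T}$ embeds it as the torch source $\mathcal O_T$.

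\textbf{Simulation.} First, $\mathcal B_{\rm T}$ runs $\OSSGenQKey^{\mathcal O^\star}$ independently $R$ times. This yields keys $t_1,\ldots,t_R$ and signing states $\ket{T_{t_j}}$, with no need for the trapdoor $\mathsf{td}_T$. By \Cref{fact:sz-generation} the keys are uniform and the states are exactly the torches, so setup is simulated perfectly. Setup aborts on collisions, costing $\binom R2 2^{-rL}$ by \Cref{lem:tf-honest-outputs}. The reduction publishes the keys through $\mathcal T$.

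Second, it samples every anchor source $\mathcal O_{A,\ell}$ and the CGS source $\mathcal O_K$ itself. It therefore holds $\mathcal G$, $H_0$, $H_1$, $h$ and all matrices and shifts of the anchors. With these it can run $\TFSpark$ honestly and evaluate $\Gamma$ on any input. The only part of $\Gamma$ involving $\mathcal O^\star$ is the torch verification $\OSSVer^{\mathcal O^\star}(t,M_T(s,t),\sigma_T)$, which is a public oracle computation.

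Third, it hands $\mathcal A_{\rm W}$ the torches indexed by the requested set $J$ ($|J|=m$) and retains the others. It then runs $\mathcal A_{\rm W}$ to the end with the challenge sparks, including the $N$ challenge flames with their labels $\ell_i$. The view of $\mathcal A_{\rm W}$ is identical to the real one, and $\mathcal B_{\rm T}$ makes polynomially many queries.

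\textbf{Forgery.} Suppose $\mathcal A_{\rm W}$ outputs $m+1$ witnesses $(t^{(k)},M_T(s'^{(k)},t^{(k)}),\sigma_T^{(k)})$, each with $t^{(k)}\in\Tpub$ and a valid signature, whose labels $\ell^{(k)}$ are pairwise distinct. Since $M_T(s',t)=1\Vert\ell\Vert\cdots$, the $m+1$ messages are pairwise distinct. We then split into two cases by pigeonhole.

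\emph{Case (a): two witnesses share a key $t^{(k)}=t^{(k')}$.} Then $(t^{(k)},M^{(k)},\sigma^{(k)},M^{(k')},\sigma^{(k')})$ consists of two valid signatures on distinct messages under one key, which is a forgery.

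\emph{Case (b): all keys are distinct.} Then $m+1$ distinct issued keys appear among only $m$ handed-out torches, so some $t^{(k)}=t_j$ has $j\notin J$. The reduction still holds $\ket{T_{t_j}}$. It runs $\OSSSign^{\mathcal O^\star}(t_j,\ket{T_{t_j}},M')$ with $M'=0\Vert\langle s'^{(k)},t_j\rangle$, which differs from $M^{(k)}$ in its first bit. By correctness (\Cref{thm:one-bit-correctness}, \Cref{lem:oss}) this fails with probability at most $Le(\lambda)$ in expectation. The result is again two valid signatures on distinct messages under $t_j$.

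To decide which case holds and which $j$ to use, the reduction compares the extracted keys against its own list $t_1,\ldots,t_R$, which is purely classical. Hence $p_{\rm wit}\le\mathsf{Adv}_{\rm OSS}+Le(\lambda)+\binom R2 2^{-rL}=\negl(\lambda)$.

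\textbf{Main obstacle.} The subtle step is ensuring that $\mathcal B_{\rm T}$ never needs the authority's private data for torches, and that retaining unissued torches does not disturb the simulation.

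The first point is handled by generating torches through the public $\OSSGenQKey$, so the issued torch keys are uniform in agreement with \eqref{eq:tf-torch-list}.

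For the second point, the retained torches are only touched after $\mathcal A_{\rm W}$ has finished. Before that, all uses of torch-related oracles inside $\Gamma$ are verifications only.

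One further check concerns messages: witness messages must lie in $\mathcal M_\lambda$ of \Cref{lem:oss}. They have fixed length $1+\lambda+2Lr$, so this holds, and distinctness is preserved after hashing by $\Dig$ because the security notion of \Cref{lem:oss} is stated for distinct messages before hashing.
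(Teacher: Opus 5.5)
Your proposal follows the paper's proof: you embed the challenge source as $\mathcal O_T$, generate all $R$ torches with $\OSSGenQKey$, and keep the unissued ones. You then pigeonhole into either a shared-key forgery or a signature from a retained torch on a message starting with $0$; your split (``shared key'' vs.\ ``all keys distinct'') is equivalent to the paper's (``all $j_h\in J$'' vs.\ ``some $j_h\notin J$''). One small fix: the selected key $t_j$ and the signed message depend on the adversary's output, so bound the final signing failure by the uniform quantity $\E[e_T]\le 2RLe(\lambda)$ (summed over all torches and both bit values), as the paper does, rather than by a fixed-key $Le(\lambda)$; the conclusion is unchanged.
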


\begin{proof}
\normalfont
If \(R=0\), then \(\Tpub=\varnothing\), so no witness is valid.
Assume \(R\ge1\).
We construct a long-message OSS forger \(\mathcal B_{\rm T}\)
using \(\mathcal A_{\rm W}\).

\paragraph{1. Simulate setup and issuance.}
The OSS challenger \(\mathcal C_{\rm OSS}\) samples
\[
 \mathcal O^\star
 \leftarrow\OSSGenOracles(1^\lambda)
\]
for the long-message scheme of \Cref{con:long-oss}, and
gives \(\mathcal B_{\rm T}\) access to its public oracles.
The forger answers simulated torch-source queries using
\(\mathcal O^\star\), and samples all anchor and CGS sources
as prescribed by \Cref{con:torch-fire}.
Its target is a forgery under \(\mathcal O^\star\).

To generate the published torches, \(\mathcal B_{\rm T}\) runs
\[
 (t_j,T_j)
 \leftarrow
 \OSSGenQKey^{\mathcal O^\star}(1^\lambda),
 \qquad j\in[R].
\]
By \Cref{fact:sz-generation,con:long-oss}, for each fixed
source the keys are independent and uniform in
\((\{0,1\}^r)^L\), and their corresponding signing state is
\[
 \rho_{T_1\cdots T_R}
 =\bigotimes_{j=1}^R\proj{T_{t_j}}.
\]
This is the setup distribution of \Cref{con:torch-fire},
obtained here without \(\mathsf{td}_T\).
Publish the keys through \(\mathcal T\) as in
\eqref{eq:tf-torch-list}.

For the selected set \(J\subseteq[R]\), \(|J|=m\), give
\(\mathcal A_{\rm W}\) precisely the registers
\((T_j)_{j\in J}\) and retain \((T_j)_{j\notin J}\).
Generate all flames from the sampled anchor and CGS sources.
The oracle \(\Gamma\) uses the challenged source only for
the public torch-verification test in \eqref{eq:tf-gate};
its remaining checks and affine responses use the sampled data.

\paragraph{2. Turn the witnesses into a forgery.}
On witness success, retain \(m+1\) witnesses with distinct
labels \(\ell_1,\ldots,\ell_{m+1}\).
Write their signature fields as
\[
 (t_{j_h},M_h,\sigma_h),
 \qquad j_h\in[R],\quad h\in[m+1].
\]
Witness validity and \eqref{eq:authorization-messages} give
\[
 \begin{aligned}
 \OSSVer^{\mathcal O^\star}
       (t_{j_h},M_h,\sigma_h)&=1,\\
 h\ne k
 \quad\Longrightarrow\quad
 \ell_h\ne\ell_k
 &\quad\Longrightarrow\quad M_h\ne M_k.
 \end{aligned}
\]
The forger proceeds as follows.

\begin{enumerate}[label=(\roman*),leftmargin=*,itemsep=6pt]
 \item \emph{Every \(j_h\) belongs to \(J\).}
 Since \(|J|=m\), there are \(h\ne k\) with
 \(j_h=j_k=:j\).
 Return to the OSS challenger
 \[
  t_j,\qquad (M_h,\sigma_h),\qquad(M_k,\sigma_k).
 \]
 Both signatures verify under the same key, and
 \(M_h\ne M_k\).

 \item \emph{Some \(j_h\) does not belong to \(J\).}
 Choose such an \(h\) and put \(j:=j_h\).
 Using the retained register \(T_j\), compute
 \[
  M_\star:=0^{\,1+\lambda+2Lr},
  \qquad
  \sigma_\star
  \leftarrow
  \OSSSign^{\mathcal O^\star}(t_j,T_j,M_\star).
 \]
 By \eqref{eq:authorization-messages}, \(M_h\) starts
 with \(1\), so \(M_\star\ne M_h\).
 Return
 \[
  t_j,\qquad(M_h,\sigma_h),\qquad(M_\star,\sigma_\star).
 \]
 This is a forgery whenever \(\sigma_\star\) verifies.
\end{enumerate}

Both cases attack the source sampled by
\(\mathcal C_{\rm OSS}\).
The OSS game of \Cref{lem:oss} permits the forger to output
the selected verification key \(t_j\).

\paragraph{3. Bound the final signing error.}
The bound \(e_T\) in \eqref{eq:outer-sign-errors} holds
for every published key and message, so it applies also
to the untouched torch selected from the witnesses in
case~(ii).
Since case~(i) always yields a forgery,
\[
 \begin{aligned}
 p_{\rm wit}
 &\le \Pr[\mathcal B_{\rm T}\text{ forges}]+\E[e_T]\\
 &\le \negl(\lambda)
      +\negl(\lambda)\\
 &\le \negl(\lambda),
 \end{aligned}
\]
\end{proof}

\subsection{Global non-proliferation}
\label{subsec:global-theorem}

We now combine the two reductions with the threshold guarantee
of \Cref{lem:compact-extraction}.
The anchor reduction makes its filtered-acceptance error
negligible. The torch reduction rules out \(m+1\) extracted
witnesses with distinct labels.

\begin{theorem}[Global non-proliferation]
\label{thm:compact-global}
\label{thm:tf-global}
\normalfont
\Cref{con:torch-fire} satisfies \Cref{def:global}.
For polynomially bounded \(R,m,N\), with \(0\le m\le R\), and
every adversary allowed in that definition,
\begin{equation}\label{eq:tf-global-security}
 \Pr\!\left[
 \mathsf{SPLIT}^{\mathsf{TF}}_{\mathcal A^{\mathcal O}}
       (1^\lambda,1^R,m,N)=1
 \right]\le\negl(\lambda).
\end{equation}
\end{theorem}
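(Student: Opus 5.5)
The plan is to combine the three ingredients already established: the label-collision reduction \eqref{eq:global-reference-comparison}, the threshold extraction \Cref{lem:compact-extraction} at threshold $d=m+1$, and the two OSS reductions \Cref{lem:global-filtered-anchor,lem:global-witness-count}.

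First, I pass to the ideal experiment, in which all labels are distinct; this costs $\binom{v+N}{2}2^{-\lambda}$. I then instantiate \Cref{lem:compact-extraction} as in \Cref{subsec:extraction-oss}. At location $i$, $U_i^1$ is the coherent run of $\mathcal A_i$ followed by verification of both output flames, and $U_i^0$ is the same computation with $\Gamma$ replaced by $\Gamma_i^0$. The witness projector $Q_i$ is \eqref{eq:global-witness-test}, and $q$ is a polynomial bound on the $\Gamma$-query positions. The witness condition \eqref{eq:witness-proj} is exactly \eqref{eq:global-query-support}. For the efficiency assumption of \Cref{rem:extraction-oracle-access}, I must check that $U^0_{i,<j}$, its inverse, and $Q_i$ use only public oracles together with $\Gamma_i^0$. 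This holds because $Q_i$ tests only membership in $\Tpub$ and torch-signature validity, and $\Gamma_i^0$ is efficiently computable from $\Gamma$ and the public label $\ell_i$. Applying the lemma with $d=m+1$ bounds $\Pr[\sum_i X_i\ge m+1]$ by
\[
\Pr\Bigl[\textstyle\sum_i Y_i\ge m+1\Bigr]+N\delta(2+\delta)+(1+\delta^{-1})\Bigl(2\varepsilon_\beta+\frac{8Nq^2}{2\beta+1}\Bigr).
\]

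Second, I bound each term. \Cref{lem:global-filtered-anchor} gives $\varepsilon_\beta\le\negl(\lambda)$ for every polynomial $\beta$. For the extraction term, I package the whole process into a single procedure $\mathcal A_{\rm W}$: run $\mathcal A_0$ with honest issuance, then the challenge issuance, then all $N$ local extractors in sequence on their registers, and output the witnesses they find. This is a legitimate single quantum procedure, because separation is only needed for the adversary's behaviour, not for the reduction. It makes polynomially many queries and receives only the $m$ issued torches.

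A successful extractor at location $i$ outputs a valid torch signature whose label is $\ell_i$. The labels are distinct in the ideal experiment, so the event $\sum_iY_i\ge m+1$ yields witnesses for $m+1$ distinct labels. \Cref{lem:global-witness-count} then bounds this probability by $\negl(\lambda)$. One subtlety needs checking: the extractors use $\Gamma_i^0$ rather than $\Gamma$, so $\mathcal A_{\rm W}$ must be able to simulate $\Gamma_i^0$. It can, since it samples the anchor and CGS sources itself.

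Finally, I choose the parameters. For any constant $c$, set $\delta=1/(2N\lambda^c)$ and $\beta=\lceil N^2q^2\lambda^{2c}\rceil$, both polynomial. Then $N\delta(2+\delta)=O(\lambda^{-c})$, and
\[
(1+\delta^{-1})\frac{8Nq^2}{2\beta+1}=O\Bigl(\frac{N^2\lambda^c q^2}{N^2q^2\lambda^{2c}}\Bigr)=O(\lambda^{-c}).
\]
The term $(1+\delta^{-1})\cdot2\varepsilon_\beta$ is a polynomial times a negligible quantity, hence negligible. The winning probability does not depend on $\beta$ or $\delta$, and it is $O(\lambda^{-c})+\negl(\lambda)$ for every $c$, so it is negligible.

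The main obstacle is the order of quantifiers in this last step. The negligible bound on $\varepsilon_\beta$ holds only for each fixed polynomial $\beta$, which in turn depends on $c$. So the argument must fix $c$ first, derive an $O(\lambda^{-c})$ bound for sufficiently large $\lambda$, and only then conclude negligibility.
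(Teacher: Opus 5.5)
Your proposal is correct and follows the paper's proof essentially step for step. You pass to the distinct-label ideal experiment, apply \Cref{lem:compact-extraction} at threshold $d=m+1$, bound $\varepsilon_\beta$ via \Cref{lem:global-filtered-anchor} and the extraction term via \Cref{lem:global-witness-count}, and take $\delta=1/(2N\lambda^c)$, $\beta=\lceil N^2q^2\lambda^{2c}\rceil$ with $c$ fixed first. Your explicit checks (efficiency of $U^0_{i,<j}$ and $Q_i$, and simulating $\Gamma_i^0$ inside $\mathcal A_{\rm W}$) spell out what the paper leaves implicit. You should also dispose of the trivial case $m\ge N$ up front, as the paper does, since then $d=m+1$ lies outside the lemma's range $d\in[N]$ and the winning event is empty.
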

\begin{proof}
\normalfont
If \(m\ge N\), the winning event is empty.
Assume \(m<N\).

\paragraph{1. Apply extraction to the ideal experiment (\Cref{eq:global-reference-comparison}).}
Let
\[
 p_{\rm ideal}
 =\Pr\!\left[\sum_{i=1}^N X_i\ge m+1\right].
\]
For parameters \(\beta\ge1\) and \(\delta\in(0,1)\), apply
\Cref{lem:compact-extraction} to the computations and witnesses
defined in \Cref{subsec:extraction-oss}.
Let \(Y_i\) indicate that the extractor at location \(i\)
outputs a witness, and put
\[
 p_{\rm ext}
 =\Pr\!\left[\sum_{i=1}^N Y_i\ge m+1\right].
\]
The extraction \Cref{lem:compact-extraction} gives
\begin{equation}\label{eq:global-threshold-application}
 \begin{aligned}
 p_{\rm ideal}
 &\le p_{\rm ext}+N\delta(2+\delta)\\
 &\quad +(1+\delta^{-1})
       \left(2\varepsilon_\beta
             +\frac{8Nq^2}{2\beta+1}\right).
 \end{aligned}
\end{equation}
Note, it does
not require the local cloning or extraction outcomes to be
independent.

\paragraph{2. Bound the extraction event by torch OSS security.}
The ideal experiment enforces distinct challenge labels,
and \(Y_i=1\) supplies a valid witness for \(\ell_i\) by
\eqref{eq:global-torch-witness}. Thus
\[
 \bigl|\{\ell_i:Y_i=1\}\bigr|
 =\sum_{i=1}^N Y_i,
\]
so \(m+1\) extraction successes give witnesses for \(m+1\)
distinct labels.

Applying \Cref{lem:global-witness-count} gives
\[
 p_{\rm ext}
 =\Pr\!\left[\sum_{i=1}^N Y_i\ge m+1\right]
 \le\negl(\lambda).
\]

\paragraph{3. Choose the extraction parameters.}
Fix any integer \(c\ge1\) and set
\begin{equation}\label{eq:global-extraction-parameters}
 \delta=\frac{1}{2N\lambda^c},
 \qquad
 \beta=\left\lceil N^2q^2\lambda^{2c}\right\rceil.
\end{equation}
For this fixed \(c\), Step~2 and
\Cref{lem:global-filtered-anchor} make
\(p_{\rm ext}+2(1+\delta^{-1})\varepsilon_\beta\)
negligible. Therefore,
\eqref{eq:global-threshold-application} gives
\[
 \begin{aligned}
 p_{\rm ideal}
 &\le \negl(\lambda)
      +N\delta(2+\delta)
      +(1+\delta^{-1})\frac{8Nq^2}{2\beta+1}\\
 &\le \negl(\lambda)+O(\lambda^{-c}).
 \end{aligned}
\]
Since \(c\) is arbitrary and \(p_{\rm ideal}\) is independent
of the extraction parameters, \(p_{\rm ideal}\) is negligible.

\paragraph{4. Return to the real experiment.}
The ideal experiment differs from the real one only by
declaring failure on repeated spark labels.
By \eqref{eq:global-reference-comparison},
\[
 p_{\rm real}
 \le p_{\rm ideal}+\Delta_{\rm lab},
 \qquad
 \Delta_{\rm lab}=\binom{v+N}{2}2^{-\lambda}.
\]
Both terms are negligible for polynomially bounded \(v,N\).
This proves \eqref{eq:tf-global-security}.
\end{proof}
\section*{Acknowledgments}
AM is supported by the National Research Foundation, Singapore, through the 
National Quantum Office, hosted in A*STAR, under its National Quantum Scholarship Scheme (NQSS) 
Funding Initiative (PhD).

\clearpage
\printbibliography
\appendix
\section{Deferred details from \protect\Cref{con:cgs-algorithms}}
\subsection{Coherent Tag Generation}
\label{sec:coherent-tag-generation}
We start with the following tool. For, the signing algorithm of \Cref{con:one-bit-oss}, 
the coherent signing circuit of
\Cref{rem:coherent-signing} is available using the public
collection $\mathcal O$.
We denote it by $U_{y,b}^{\mathcal O}$ and use the signing
algorithm with security parameter $\mu$.

\par\smallskip\noindent
\textbf{Coherent tag circuit.}
Recall that the flame register is $F=YU$.
For each $b\in\{0,1\}$, let $E_b$ be signing workspace
and let $\Sigma_b$ hold the signature.
The coherent signing circuit $U_{y,b}^{\mathcal O}$
acts on $UE_b\Sigma_b$.
\\
To apply this circuit without measuring the verification
key in $Y$, define
\[
 \widehat U_b^{\mathcal O}
 =\sum_{y\in\{0,1\}^{r}}
   \proj{y}_Y\otimes U_{y,b}^{\mathcal O}.
\]
Thus $\widehat U_b^{\mathcal O}$ acts on $YUE_b\Sigma_b$:
$Y$ is the control register, and $UE_b\Sigma_b$ are the
registers on which the selected signing circuit acts.

Let $Z$ hold the query, and let $R_0,R_1$ hold the two
oracle responses.
Initialize $E_b,\Sigma_b,R_b$ to zero for each $b$.
The oracle query unitary $U_{O_b}$ acts on
$Y\Sigma_bZR_b$: it uses $(Y,\Sigma_b,Z)$ as its input
registers and writes the response into $R_b$, leaving
$Y,\Sigma_b,Z$ unchanged.

For each bit $b$, perform the following operations:
\begin{enumerate}[label=(\roman*),leftmargin=*,itemsep=3pt]
 \item Apply $\widehat U_b^{\mathcal O}$ to
 $YUE_b\Sigma_b$ to compute the signature in $\Sigma_b$.

 \item Apply $U_{O_b}$ to $Y\Sigma_bZR_b$ to obtain the
 oracle response in $R_b$.

 \item Apply $(\widehat U_b^{\mathcal O})^\dagger$ to
 $YUE_b\Sigma_b$, leaving $Z$ and $R_b$ untouched.
\end{enumerate}
Each operation acts as the identity on registers not
listed above.
Writing $T_b^{\mathcal O}$ for this sequence, define
\begin{equation}\label{eq:native-tag-circuit}
 \begin{aligned}
  T_b^{\mathcal O}
    &=(\widehat U_b^{\mathcal O})^\dagger
      U_{O_b}\widehat U_b^{\mathcal O},
      &&b\in\{0,1\},\\
  \FireTag&=T_1^{\mathcal O}T_0^{\mathcal O}.
 \end{aligned}
\end{equation}
Thus $T_0^{\mathcal O}$ is applied first, followed by
$T_1^{\mathcal O}$.
Both use the same registers $Y,U,Z$, with separate
auxiliary registers $E_b,\Sigma_b,R_b$.
The tag is held in $(Y,R_0,R_1)$.
All auxiliary registers are retained so that
$(\FireTag)^\dagger$ can later be applied to the same
registers.

\subsection{Implementation Details of Cloning a flame}
\label{sec:flame-cloning-details}
Encode each oracle call in a query register $Z$ using
$\enc$ from \eqref{eq:native-encoding}.
Let $C$ denote its answer register.
For each call, introduce fresh zero auxiliary registers
$E_0,\Sigma_0,R_0,E_1,\Sigma_1,R_1$ and execute:
\begin{enumerate}[label=(\roman*),leftmargin=*,itemsep=3pt]
 \item Apply $\FireTag$ to
 \[
  FZ E_0\Sigma_0R_0 E_1\Sigma_1R_1.
 \]
 The resulting tag occupies $(Y,R_0,R_1)$.

 \item Apply the public query unitary $U_{O_2}$.
 Its input registers are $(Z,Y,R_0,R_1)$, and its
 answer register is $C$.
 For a valid tag, $O_2$ returns the value of the
 encoded forward or inverse call.

 \item Apply $(\FireTag)^\dagger$ to the same registers
 as in Step~(i), leaving $C$ untouched.
\end{enumerate}
Thus the public circuit replacing each \(\mathcal G\)
query is
\begin{equation}
 \widetilde U_{\mathcal G}^{\mathcal O}
 =(\FireTag)^\dagger U_{O_2}\FireTag.
\end{equation}

\section{Deferred details from \protect\Cref{con:torch-fire}}
\subsection{Coherent execution of an affine query}
\label{sec:coherent-affine-query}
\paragraph{Coherent preparation of an affine query.}
Fix a serial $s=(\ell,a)$, a published torch key
$t\in\Tpub$, a component $c\in[L]$, and $e\in\{0,1\}$.
We describe how the anchor $A$, torch $T$, and CGS flame part $K$
are used to request a value of $\operatorname{affine}_{a_c,e}$ from
$\Gamma$ in a coherent way.

Let $X$ hold the affine input $u$ and let $C$ be the answer register.
The preparation circuit $S_{c,e}^{\mathcal O}$ performs the following
operations, with fresh auxiliary registers initialized to zero.
\begin{enumerate}[label=(\roman*),leftmargin=*,itemsep=4pt]
 \item \emph{Form the request.}
 Coherently compute
 \[
 q=(s,t,c,e,u),\qquad z_q=\enc(\texttt{forward},\mathsf{code}(q))
 \]
 into registers $Q,Z$, respectively, leaving $X$ unchanged.
 The encoding is the one defined in \eqref{eq:tf-request-encoding}.

 \item \emph{Prepare the CGS tag.}
 Write $K=YU$ and apply $\FireTag$ from
 \eqref{eq:native-tag-circuit} to $K$ with request register $Z$.
 This uses $\mathcal D,O_0,O_1$ from the CGS fire source
 $\mathcal O_K$.
 If $R_0,R_1$ denote the response registers of that circuit,
 the tag occupies $(Y,R_0,R_1)$.
 Retain all its auxiliary registers for the inverse operation.

 \item \emph{Prepare the two signatures.}
 Using the messages in \eqref{eq:authorization-messages}, apply
 \[
 U_{a,M_A(s,t)}^{\mathcal O_{A,\ell}}
 \quad\text{to }A,
 \qquad
 U_{t,M_T(s,t)}^{\mathcal O_T}
 \quad\text{to }T,
 \]
 with their zero signing workspace, and store the signatures in
 $\Sigma_A,\Sigma_T$.
 These are the coherent long-message signing circuits of
 \eqref{eq:long}; each uses its source's digest oracle and component
 $\mathcal D$ interfaces.
 Retain the signing workspace as in \Cref{rem:coherent-signing}.
\end{enumerate}
The circuit $S_{c,e}^{\mathcal O}$ preserves $X$ and acts trivially
on $C$. The signatures and tag are kept coherent throughout.

\paragraph{Querying and reversing the preparation.}
After applying $S_{c,e}^{\mathcal O}$, query $\Gamma$ with inputs
\[
 (Q,(Y,R_0,R_1),\Sigma_A,\Sigma_T)
\]
and answer register $C$.
The oracle checks the signatures and tag as in \eqref{eq:tf-gate}
and returns the requested affine value when the checks pass.
Then apply $(S_{c,e}^{\mathcal O})^\dagger$ to the preparation
registers, leaving $C$ untouched. This defines
\begin{equation}\label{eq:tf-row-circuit}
 \mathcal R_{c,e}^{\mathcal O}
 =(S_{c,e}^{\mathcal O})^\dagger
   U_\Gamma S_{c,e}^{\mathcal O}.
\end{equation}
All workspace is retained until the inverse is applied.
No signature, tag, or answer is measured.
The approximation to an ideal affine query on honest resources is
proved in \Cref{lem:tf-row-access}.

\subsection[Proof of Lemma~\ref{lem:tf-row-access}]{Proof of \Cref{lem:tf-row-access}}
\label{sec:proof:of-lemma-tf-row-access}
\begin{proof}
\normalfont
Here \(U=U_{\operatorname{affine}_{a_c,e}}\) is the ideal affine-query unitary. For fixed \(a,c,e\), write \(f(u)=\operatorname{affine}_{a_c,e}(u)\). Then
\[
U\bigl(\ket u_X\ket z_C\otimes\ket\eta\bigr)
=
\ket u_X\ket{z\oplus f(u)}_C\otimes\ket\eta.
\]
It writes the affine value into \(C\) and leaves every other register unchanged.
\\~\\
Also, recall the preparation circuit $S=S_{c,e}^{\mathcal O}$ from \eqref{eq:tf-row-circuit}.
Let $P$ project onto correctly encoded requests for the input
in $X$ that pass \eqref{eq:tf-gate}.
The preparation encodes these requests correctly, so the signing
and tag bounds \eqref{eq:outer-sign-errors} and
\eqref{eq:tf-native-errors} give
$\norm{(I-P)S\ket\psi}^2\le d$.

On this subspace, $\Gamma$ returns the ideal affine value:
$(U_\Gamma-U)P=0$.
Also, $S,U$ commute since $S$ preserves $X$ and does not act on $C$.
Consequently,
\[
\begin{aligned}
 \norm{\bigl(\mathcal R_{c,e}^{\mathcal O}
             -U_{\operatorname{affine}_{a_c,e}}\bigr)\ket\psi}=\norm{(S^\dagger U_\Gamma S-U)\ket\psi}
 &=\norm{(U_\Gamma-U)(I-P)S\ket\psi}\\
 &\le2\norm{(I-P)S\ket\psi}
 \le2\sqrt d.
\end{aligned}
\]
The expectation bound follows from \eqref{eq:outer-sign-errors}
and \eqref{eq:tf-native-mean-errors}.
\end{proof}

\end{document}